\newtheorem{theorem}{Theorem}
\newtheorem{lemma}{Lemma}
\newtheorem{remark}{Remark}
\newtheorem{corollary}{Corollary}
\newtheorem{conjecture}{Conjecture}
\newtheorem{proposition}{Proposition}
\newcommand{\EE}{\mathbb{E}}
\newcommand{\PP}{\mathbb{P}}
\newcommand{\xh}[1]{{X}_{#1}}
\newcommand{\yh}[1]{{Y}_{#1}}
\newcommand{\etal}{\emph{et al.}}
\newcommand{\vecY}{\mathbf{Y}}
\newcommand{\vecX}{\mathbf{X}}
\newcommand{\vecZ}{\mathbf{Z}}
\newcommand{\vecW}{\mathbf{W}}
\newcommand{\Var}{\operatorname{Var}}
\begin{document}

\title{Strengthening the Entropy Power Inequality %
\thanks{This work was supported by   NSF Grants CCF-1528132 and CCF-0939370 (Center for Science of Information). \newline Email: courtade@berkeley.edu}
}

\author{
\IEEEauthorblockN{Thomas~A.~Courtade}
\IEEEauthorblockA{Department of Electrical Engineering and Computer Sciences\\University of California, Berkeley
              }
}
\maketitle

\thispagestyle{plain}
\pagestyle{plain}

\begin{abstract}
We tighten the Entropy Power Inequality (EPI) when one of the random summands is Gaussian.  Our strengthening is closely connected to the concept of strong data processing for Gaussian channels and generalizes the (vector extension of) Costa's EPI.  This leads to a new reverse entropy power inequality and, as a corollary, sharpens Stam's inequality relating entropy power and Fisher information. 
Applications to network information theory are given, including a short self-contained proof of the rate region for the two-encoder quadratic Gaussian source coding problem.  

Our argument is based  on weak convergence and a technique employed by Geng and Nair for establishing Gaussian optimality via rotational-invariance, which traces its roots to a  `doubling trick' that has been successfully used in the study of functional inequalities.
\end{abstract}

\begin{keywords}
Entropy power inequality, Costa's EPI, Stam's Inequality, Strong Data Processing, Gaussian Source Coding
\end{keywords}

\newcommand{\snr}{\mathsf{snr}}
\newcommand{\C}{\mathfrak{C}}

\section{Introduction and Main Result}

 For a random variable $X$ with density $f$,   the differential entropy of $X$ is defined by
\begin{align}
h(X) = -\int f(x) \log f(x) dx. \label{diffEntropy}
\end{align} 
Similarly, $h(\vecX)$ is defined to be the differential entropy of a random vector $\vecX\in \mathbb{R}^n$ having density on $\mathbb{R}^n$.  The celebrated Entropy Power Inequality (EPI) put forth by Shannon \cite{bib:Shannon48} and  rigorously established by Stam \cite{stam1959some} and Blachman \cite{bib:Blachman} asserts that for $X,W$ independent
\begin{align}
2^{2h(X+W)}\geq 2^{2h(X)}+2^{2h(W)}. \label{shannonEPI}
\end{align}
Under the assumption that $W$ is Gaussian, we prove the following strengthening of \eqref{shannonEPI}:  
\begin{theorem}\label{thm:scalarEIPI}
Let $X\sim P_X$, and let $W\sim N(0,\sigma^2)$ be independent of $X$.  For any $V$ satisfying $X \to (X+W) \to V$, 
\begin{align}
2^{2(h(X+W) - I(X;V))} \geq  2^{2(h(X) - I(X+W;V))} + 2^{2h(W)}.\label{EPItoProve}
\end{align}
\end{theorem}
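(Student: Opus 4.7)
My plan is to first verify the inequality when $V$ is a Gaussian-channel observation of $Y := X+W$---in which case it is equivalent to Costa's EPI---and then extend to arbitrary $V$ by a rotation/doubling argument in the style of Geng--Nair.

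Setting $\delta := I(Y;V) - I(X;V) \ge 0$ (nonneg.\ by data processing for $X \to Y \to V$), inequality \eqref{EPItoProve} rewrites as
$$
2^{2h(Y|V)}\, 2^{2\delta} \;\ge\; 2^{2h(X|V)}\, 2^{-2\delta} + 2^{2h(W)},
$$
which at $\delta=0$ is the conditional Shannon EPI for $Y = X + W$. For a Gaussian auxiliary $V = Y + \sqrt{\tau}\,Z$ with $Z\sim N(0,1)$ independent of $(X,W)$, one computes $I(X;V) = h(Y + \sqrt{\tau}\,Z) - h(W + \sqrt{\tau}\,Z)$ and $I(Y;V) = h(Y + \sqrt{\tau}\,Z) - h(\sqrt{\tau}\,Z)$; substituting and simplifying, the inequality reduces to
$$
N\bigl(Y + \sqrt{\tau}\,Z\bigr) \;\le\; N(Y) + \frac{\tau}{\sigma^2}\bigl(N(Y) - N(X)\bigr),
$$
where $N(\cdot) := e^{2h(\cdot)}/(2\pi e)$. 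This is precisely the tangent-line bound at $s = \sigma^2$ for the concave map $s \mapsto N(X + \sqrt{s}\,Z)$ furnished by Costa's EPI (the secant through $(0, N(X))$ and $(\sigma^2, N(Y))$ dominates the right-derivative at $\sigma^2$ by concavity), so the theorem holds in the Gaussian-$V$ case.

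For general $V$, I would take two independent copies $(X_i, W_i, V_i)_{i=1,2}$ of the joint law and perform the orthogonal rotation $X_\pm := (X_1 \pm X_2)/\sqrt{2}$, $W_\pm := (W_1 \pm W_2)/\sqrt{2}$, $Y_\pm := X_\pm + W_\pm$. Since $W$ is Gaussian, $(W_+, W_-) \stackrel{d}{=} (W_1, W_2)$, so the rotated noise is distributed just as the original. Drawing $V_\pm$ through the same Markov kernel $P_{V|Y}$ applied to $Y_\pm$, the rotation invariance of joint entropy, together with the subadditivity $h(X_+, X_- | V_1, V_2) \le h(X_+ | V_1, V_2) + h(X_- | V_1, V_2)$ and analogous bounds for the mutual-information terms, should yield that the inequality for the rotated pair is sharper than the sum of the inequalities for the two copies. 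Iterating the doubling and invoking the entropic CLT drives the marginal law of $X$ to Gaussian, where the first step applies; a weak-convergence argument plus continuity of the relevant functionals then closes the loop for arbitrary $V$.

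The chief obstacle will be the bookkeeping in the doubling step: the mutual-information terms $I(X_\pm; V_\pm)$ and $I(Y_\pm; V_\pm)$ do not decompose linearly under rotation, and one must carefully track how the data-processing deficit $\delta$ evolves to establish super-additivity of the slack functional $F(X,V) := 2^{2h(Y|V)}2^{2\delta} - 2^{2h(X|V)}2^{-2\delta} - 2^{2h(W)}$. A secondary challenge is the weak-convergence reduction, since a general $V$ need not admit a density and must be regularized (e.g.\ by convolving the $Y \to V$ channel with a vanishing Gaussian) to remain within a class on which the entropy-power functional is continuous and the iteration converges cleanly.
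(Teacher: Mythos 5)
The Gaussian--$V$ calculation you carry out in the first step is correct, and it does show that for $V = Y + \sqrt{\tau}Z$ the inequality reduces to a secant bound that follows from Costa's concavity of $s\mapsto N(X+\sqrt{s}\,G)$. However, this is not the right reduction target for the rest of the argument, and the doubling step as sketched will not close. The paper does not try to show that the exponential ``slack'' functional $F(X,V)$ is super-/sub-additive for a \emph{fixed} kernel $P_{V|Y}$: the Markov constraint $X\to Y\to V$ does not survive the rotation cleanly (the natural $V$ for the rotated problem depends on both $Y_+$ and $Y_-$), and the nonlinear exponential form of $F$ is not amenable to the chain-rule manipulations that make such a decomposition work. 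The essential missing ingredient is the Legendre/Lagrangian dual: the paper replaces the primal inequality by the family of functionals $\mathsf{s}_\lambda(X,\snr)= -h(X)+\lambda h(Y)+\inf_{V}\{I(Y;V)-\lambda I(X;V)\}$, linear in the relevant (conditional) entropies, and then proves the super-additivity $2\,\mathsf{s}_\lambda(X,\snr|Q)\ge \mathsf{s}_\lambda(X_+,\snr\,|\,X_-,\mathbf{Q})+\mathsf{s}_\lambda(X_-,\snr\,|\,Y_+,\mathbf{Q})$, with the crucial \emph{asymmetric} conditioning on $X_-$ and $Y_+$ and with the infimum over $V$ taken \emph{inside} the functional (Lemmas \ref{lem:additivityofV} and \ref{lem:Gaussiansubadditivity}). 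Without the Lagrangian linearization and the $\inf_V$ structure, the book-keeping you flag as ``the chief obstacle'' is not merely tedious---it is obstructive: there is no clean super-additivity of $F$ to iterate.

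Two further structural issues. First, ``iterating the doubling and invoking the entropic CLT'' does not give the conclusion you want: the CLT describes the law of the normalized sum, and the inequality for that Gaussian limit has no direct implication for $F(X,V)$ at the original $X$. The paper instead frames this as an extremal problem $\mathsf{V}_\lambda(\snr)=\inf_{P_{XQ}}\mathsf{s}_\lambda(X,\snr|Q)$, extracts a minimizing sequence that is tight (via the finite $|\mathcal{Q}|$ reduction and $\EE[X^2]\le 1$), applies Prokhorov's theorem, and then uses the super-additivity to force $I(X_{1,n}+X_{2,n};X_{1,n}-X_{2,n}\mid Y_{1,n},Y_{2,n})\to 0$; Bernstein's characterization then gives Gaussianity of the weak limit. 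That is a ``minimizer must be Gaussian'' argument, not a CLT argument. Second, even after you have Gaussian $X$, your first step is about Gaussian $V$, not arbitrary $V$ with Gaussian input---you would still need the analogue of Proposition \ref{prop:InfoBottleExplicit}, which evaluates $\inf_V\{I(Y;V)-\lambda I(X;V)\}$ for Gaussian $X$ using the conditional EPI and Jensen. Finally, ``continuity of the relevant functionals'' under weak convergence is much more delicate than your sketch suggests: only lower semicontinuity holds, and establishing it for the $\inf_V$ term (the paper's Lemma \ref{lem:weakContinuityF}) requires a separate truncation-and-perturbation argument that is one of the longest technical pieces of the proof.
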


The notation $X \to (X+W) \to V$ in Theorem \ref{thm:scalarEIPI} follows the usual convention, indicating that the random variables $X$, $X+W$ and $V$ form a Markov chain, in that order.  Throughout, we write $X\to Y \to V |Q$ to denote random variables $X,Y,V,Q$ with joint distribution factoring as $P_{XYVQ} = P_{XQ}P_{Y|XQ}P_{V|YQ}$.  That is, $X\to Y\to V$ form a Markov chain conditioned on $Q$.

When the  integral \eqref{diffEntropy} does not exist, or if $X$ does not have density, then we adopt the convention that $h(X) = -\infty$.    In this case, the inequality \eqref{EPItoProve} is a trivial consequence of the data processing inequality.  %
 So, as with the classical EPI, Theorem \ref{thm:scalarEIPI} is only informative when $X$ has density and $h(X)$ exists.

A conditional version of the EPI is often useful in applications.  Theorem \ref{thm:scalarEIPI} easily generalizes along these lines.  Indeed, due to joint convexity of $\log(2^{x}+2^{y})$ in $x,y$, we  obtain the following corollary of Theorem \ref{thm:scalarEIPI}:
\begin{corollary}
Suppose $X,W$ are conditionally independent given $Q$, and moreover that $W$ is conditionally Gaussian given $Q$.  Then, for any $V$ satisfying $X \to (X+W) \to V|Q$,  
\begin{align}
2^{2(h(X+W|Q) - I(X;V|Q))} \geq  2^{2(h(X|Q) - I(X+W;V|Q))} + 2^{2h(W|Q)}.\label{CondlEPItoProve}
\end{align}
\end{corollary}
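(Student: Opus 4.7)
The plan is to apply Theorem \ref{thm:scalarEIPI} slice-by-slice on $\{Q=q\}$, obtain a pointwise inequality, and then average over $Q$ using Jensen's inequality to compress the right-hand side into the desired conditional form.

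First I would fix $q$ in the support of $Q$ and observe that under the conditional law $P_{XWV|Q=q}$, the factorization $P_{XYVQ}=P_{XQ}P_{Y|XQ}P_{V|YQ}$ preserves independence of $X$ and $W$, keeps $W$ Gaussian (with some variance $\sigma^2(q)$), and preserves the Markov chain $X \to (X+W) \to V$. Theorem \ref{thm:scalarEIPI} therefore applies pointwise and gives, for $P_Q$-almost every $q$,
\begin{align}
2^{2\bigl(h(X+W|Q=q) - I(X;V|Q=q)\bigr)} \;\geq\; 2^{2\bigl(h(X|Q=q) - I(X+W;V|Q=q)\bigr)} + 2^{2h(W|Q=q)}. \nonumber
\end{align}

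Next I would take $\log_2$ of both sides and integrate against $P_Q$. The left-hand side collapses to $2\bigl(h(X+W|Q)-I(X;V|Q)\bigr)$ by the definitions of conditional entropy and conditional mutual information. For the right-hand side, Jensen's inequality applied to the jointly convex log-sum-exp function $\phi(a,b) = \log_2(2^{2a}+2^{2b})$ yields the lower bound $\log_2\bigl(2^{2(h(X|Q) - I(X+W;V|Q))} + 2^{2h(W|Q)}\bigr)$. Exponentiating recovers \eqref{CondlEPItoProve}.

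Because Theorem \ref{thm:scalarEIPI} is used as a black box and convexity of the log-sum-exp is standard, there is no substantive obstacle; the only bookkeeping is to check that the conditional Markov and Gaussianity structure actually survives restriction to $\{Q=q\}$, which is immediate from the factorization recorded in the excerpt.
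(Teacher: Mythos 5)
Your proof is correct and follows exactly the route the paper itself indicates: apply Theorem \ref{thm:scalarEIPI} pointwise on each slice $\{Q=q\}$ and then average using the joint convexity of $\log(2^{x}+2^{y})$ (the paper explicitly cites this convexity as the reason the corollary follows). No further comparison is needed.
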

It is interesting to note that the conditional version of the classical EPI assumes a form symmetric to \eqref{EPItoProve}.  In particular, for $Q \to X \to (X+W)$, it holds that 
\begin{align}
2^{2(h(X+W) - I(X+W;Q))} \geq  2^{2(h(X) - I(X;Q))} + 2^{2h(W)}. \label{classicalCondEPI}
\end{align}
Note that the mutual informations  in the exponents on the LHS and RHS of \eqref{EPItoProve} and \eqref{classicalCondEPI} respectively correspond to the smaller and larger mutual informations in the corresponding data processing inequalities $I(X;V)\leq I(X+W;V)$  and $I(X+W;Q)\leq I(X;Q)$.

As one would expect, Theorem \ref{thm:scalarEIPI} also admits a vector generalization, which may be regarded as our main result:
\begin{theorem} \label{thm:vectorEIPI}
Suppose $\vecX,\vecW$ are $n$-dimensional random vectors that are conditionally independent given $Q$, and moreover that $\vecW$ is conditionally Gaussian given $Q$.  Then, for any $V$ satisfying $\vecX \to (\vecX+\vecW) \to V|Q$,  
\begin{align}
2^{\frac{2}{n}(h(\vecX+\vecW|Q) - I(\vecX;V|Q))} \geq  2^{\frac{2}{n}(h(\vecX|Q) - I(\vecX+\vecW;V|Q))} + 2^{\frac{2}{n}h(\vecW|Q)}.\label{ineqToproveVec}
\end{align}
\end{theorem}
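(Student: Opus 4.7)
My plan is to vectorize the strategy used for Theorem \ref{thm:scalarEIPI}, after two standard reductions.

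\emph{Reduction to the unconditional, isotropic case.} Joint convexity of $(x,y)\mapsto\log(2^x+2^y)$ lets me strip the conditioning on $Q$---the same Jensen-type step that produces the Corollary from Theorem \ref{thm:scalarEIPI}. Moreover, a linear whitening of the noise reduces to the isotropic case: if $\Sigma=\Cov(\vecW)$ is nondegenerate, applying $T=\Sigma^{-1/2}$ to $(\vecX,\vecW)$ yields $T\vecW\sim\mathcal{N}(0,I_n)$, shifts each of the three differential entropies $h(\vecX+\vecW),\ h(\vecX),\ h(\vecW)$ by the common amount $\tfrac{1}{2}\log\det\Sigma$, leaves mutual informations intact ($T$ is a bijection), and hence multiplies all three entropy powers appearing in \eqref{ineqToproveVec} by the common factor $(\det\Sigma)^{-1/n}$. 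The inequality is thereby preserved, and I may assume $Q$ is trivial and $\vecW\sim\mathcal{N}(0,\sigma^2 I_n)$, in particular rotationally invariant under $O(n)$.

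\emph{Vector Geng--Nair doubling.} With the noise isotropic, the rotational-invariance argument used for Theorem \ref{thm:scalarEIPI} ports directly to $\mathbb{R}^n$. Take two i.i.d.\ copies $(\vecX^{(k)},\vecW^{(k)},V^{(k)})$, $k=1,2$, of the triple and apply the $45^\circ$ rotation
\begin{equation*}
(\vecX^{(1)},\vecX^{(2)})\mapsto\Bigl(\tfrac{1}{\sqrt{2}}(\vecX^{(1)}+\vecX^{(2)}),\ \tfrac{1}{\sqrt{2}}(\vecX^{(1)}-\vecX^{(2)})\Bigr),
\end{equation*}
together with the analogous rotation of $(\vecW^{(1)},\vecW^{(2)})$. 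Because $\vecW$ is isotropic Gaussian, the rotated noise has the same joint law, whereas the rotated $\vecX$'s typically do not; this asymmetry forces any extremizer of the functional
\begin{equation*}
\Phi(P_{\vecX},P_{V\mid\vecY}):=2^{\frac{2}{n}(h(\vecY)-I(\vecX;V))}-2^{\frac{2}{n}(h(\vecX)-I(\vecY;V))}-2^{\frac{2}{n}h(\vecW)}
\end{equation*}
(with $\vecY:=\vecX+\vecW$) to be Gaussian. The Gaussian extremal is then settled by a direct covariance-matrix computation---a vector extension of Costa's EPI---and a weak-convergence / mollification argument lifts the conclusion to arbitrary $\vecX$; the degenerate case $h(\vecX)=-\infty$ is handled by the data processing inequality as noted after Theorem \ref{thm:scalarEIPI}.

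\emph{Main obstacle.} The crux is the vector doubling step: one must specify an auxiliary variable on the doubled-rotated problem playing the role of $V$, verify that the Markov chain $\vecX\to\vecY\to V$ is preserved in $\mathbb{R}^{2n}$, and extract Gaussian extremality from the associated fixed-point condition. The scalar argument of Theorem \ref{thm:scalarEIPI} supplies the template, but tracking the joint distribution of the two rotated auxiliary variables and the ensuing covariance bookkeeping is substantially more delicate in dimension $n$; once Gaussian optimality is in hand, the closed-form matrix computation closes the proof.
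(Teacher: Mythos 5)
Your reductions---conditional to unconditional via Jensen, whitening to isotropic noise---match the paper's opening moves. But the core of your plan, replaying the Geng--Nair doubling directly in $\mathbb{R}^n$, is precisely the route the paper declines, and you leave the hard step as an acknowledged ``main obstacle'' without closing it. The multivariate doubling is not a free lunch: you would need a multivariate Bernstein / Skitovich--Darmois characterization in place of the scalar one the paper uses, a vector version of the weak lower-semicontinuity claim (Claim II), and---even granting Gaussian optimality in $\mathbb{R}^n$---an optimization over all admissible covariance matrices $\Sigma_{\vecX}$ rather than a single scalar variance. None of this is plainly impossible, but none of it is carried out, and the ``covariance bookkeeping'' you flag as delicate is exactly where your proof is missing.

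The paper takes a genuinely different and much leaner route: rather than reproving Gaussian extremality in $\mathbb{R}^n$, it shows that the one-dimensional dual functional \emph{tensorizes}. Lemma \ref{lem:additivityofV} gives the coordinate superadditivity $\mathsf{s}_{\lambda}(\vecX,\Gamma|Q) \geq \mathsf{s}_{\lambda}(\vecX_1,\Gamma_1|\vecX_2,Q) + \mathsf{s}_{\lambda}(\vecX_2,\Gamma_2|\vecY_1,Q)$, from which induction yields $\mathsf{V}_{\lambda}(\Gamma) = \sum_i \mathsf{V}_{\lambda}(\snr_i)$ (Theorem \ref{thm:VlamVector}). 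With this additivity in hand, the $n$-dimensional dual bound is just $n$ copies of the scalar one, and the argument closes exactly as in Theorem \ref{thm:scalarEIPI} using $\mathsf{V}_{\lambda}(\snr\cdot I)=n\,\mathsf{V}_{\lambda}(\snr)$. The payoff is that all the weak-convergence / Bernstein / extremality machinery is invoked only once, in dimension one; your proposal asks to redo it in $\mathbb{R}^n$, which is exactly where the gap lies.
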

In the following section, we will see that the strengthening of the classical EPI afforded by Theorem \ref{thm:vectorEIPI} generalizes Costa's EPI \cite{bib:Costa} (and the vector generalization \cite{bib:GenCosta}), which has found applications ranging from interference channels to secrecy capacity (e.g., \cite{costaInterference,polyanskiy2015wasserstein,bib:GenCosta,Bagherikaram}).  It also leads to a new reverse EPI, which can be applied to improve Stam's inequality or, equivalently, the Gaussian logarithmic Sobolev inequality.  Moreover, we will see that Theorem \ref{thm:vectorEIPI}  leads to a very brief proof of the converse for the rate region of the quadratic Gaussian two-encoder source-coding problem \cite{bib:Wagner, bib:Oohama1997}.  Applications to one-sided interference channels and strong data processing inequalities are also given. 

We remark that the restriction of $\vecW$ to be conditionally Gaussian in Theorem  \ref{thm:vectorEIPI} should  not be a severe limitation in practice.  Indeed, in applications of the EPI, it is typically the case that one of the variables is Gaussian.  As noted by Rioul \cite{Rioul}, examples   include   the scalar Gaussian broadcast channel problem \cite{Bergmans} and its generalization to the multiple-input multiple-output case \cite{WSS, Cioffi}; the secrecy capacity of the Gaussian wiretap channel \cite{wiretap} and its multiple access extension \cite{TekenYener}; determination of the corner points for the scalar Gaussian interference channel problem \cite{costaInterference,polyanskiy2015wasserstein};  the scalar Gaussian source multiple-description problem \cite{Ozarow}; and characterization of the rate-distortion regions for several multiterminal Gaussian source coding schemes \cite{bib:Oohama1997, bib:Oohama2005, bib:Prabhakaran}.  It is tempting to conjecture that \eqref{ineqToproveVec} holds when the distribution of $\vecW$ is unconstrained, however we suspect this is not true (but no counterexample was immediately apparent).

\section{Applications}

\subsection{Generalized Costa's   Entropy Power Inequality}
Costa's EPI \cite{bib:Costa} states that, for independent $n$-dimensional random vectors $\vecX \sim P_{\vecX}$ and $\vecW \sim N(0,\Sigma )$, 
\begin{align}
2^{\frac{2}{n}h (\vecX +\alpha \vecW)  } &\geq (1-\alpha^2) 2^{\frac{2}{n} h(\vecX) } +\alpha^2 2^{\frac{2}{n}h(\vecX+ \vecW)} \mbox{~~~~for $|\alpha|\leq 1$.} \label{eq:costaEPI_cos}
\end{align}
This result was generalized to a vector setting by Liu \etal\, using perturbation and I-MMSE arguments \cite{bib:GenCosta}.  We demonstrate below that this generalization follows  as an easy corollary to Theorem \ref{thm:vectorEIPI} by taking $V$ equal to $\vecX$ contaminated by additive Gaussian noise.   In this sense, Theorem \ref{thm:vectorEIPI} may be interpreted as a further generalization of Costa's EPI, where the additive noise is no longer restricted to be Gaussian. 

\begin{theorem} \label{thm:GenCosta}\cite{bib:GenCosta}
Let $\vecX \sim P_{\vecX}$ and $\vecW \sim N(0,\Sigma )$ be independent, $n$-dimensional random vectors.  For a positive semidefinite matrix  $A \preceq I$, 
\begin{align}
2^{\frac{2}{n}h (\vecX + A^{1/2} \vecW)  } &\geq |I-A|^{1/n} 2^{\frac{2}{n} h(\vecX) } + |A|^{1/n} 2^{\frac{2}{n}h(\vecX+ \vecW)}.\label{eq:costaEPI}
\end{align}
\end{theorem}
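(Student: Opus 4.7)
The plan is to specialize Theorem~\ref{thm:vectorEIPI} with the Gaussian summand taken to be $A^{1/2}\vecW$ (a Gaussian with covariance $A^{1/2}\Sigma A^{1/2}$) and the auxiliary variable chosen as
\[
V \;=\; (\vecX + A^{1/2}\vecW) + \vecN,\qquad \vecN\sim N\bigl(0,\,\Sigma - A^{1/2}\Sigma A^{1/2}\bigr)\ \text{independent of }(\vecX,\vecW).
\]
This realizes $V$ as exactly ``$\vecX$ contaminated by additive Gaussian noise,'' in line with the remark preceding the theorem statement: by construction $V \stackrel{d}{=}\vecX+\vecW$, since $V-\vecX = A^{1/2}\vecW+\vecN$ is Gaussian of covariance $\Sigma$; and the Markov chain $\vecX \to (\vecX + A^{1/2}\vecW) \to V$ is immediate because $V$ is $(\vecX+A^{1/2}\vecW)$ plus noise independent of $(\vecX,\vecW)$.

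The rest is a direct substitution. The three Gaussian entropies that appear in \eqref{ineqToproveVec} are
\begin{align*}
h(V\mid\vecX) &= h(A^{1/2}\vecW+\vecN) = \tfrac{1}{2}\log\bigl((2\pi e)^n|\Sigma|\bigr), \\
h(V\mid\vecX+A^{1/2}\vecW) &= h(\vecN) = \tfrac{1}{2}\log\bigl((2\pi e)^n|\Sigma-A^{1/2}\Sigma A^{1/2}|\bigr), \\
h(A^{1/2}\vecW) &= \tfrac{1}{2}\log\bigl((2\pi e)^n|A|\,|\Sigma|\bigr).
\end{align*}
Plugging these into \eqref{ineqToproveVec} and multiplying through by $2^{(2/n)h(V)}$ converts the mutual informations back into joint entropies. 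Using $V\stackrel{d}{=}\vecX+\vecW$ and the determinant identities $|A^{1/2}\Sigma A^{1/2}|=|A|\,|\Sigma|$ and $|\Sigma-A^{1/2}\Sigma A^{1/2}|=|I-A|\,|\Sigma|$, the $(2\pi e)$ factors cancel and an overall $|\Sigma|^{1/n}$ divides out, producing \eqref{eq:costaEPI} on the nose.

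The one delicate point, which I regard as the main obstacle for this construction, is the positive semidefiniteness of $\Sigma - A^{1/2}\Sigma A^{1/2}$, needed both so that $\vecN$ is a legitimate Gaussian and so that $|\Sigma-A^{1/2}\Sigma A^{1/2}|=|I-A|\,|\Sigma|$ may be read as an identity between determinants of PSD matrices. This is automatic whenever $A$ and $\Sigma$ commute (then $A^{1/2}\Sigma A^{1/2}=A\Sigma$, so $\Sigma - A^{1/2}\Sigma A^{1/2}=(I-A)\Sigma$), covering the scalar case, Costa's original scaled case $A=tI$, and the isotropic case $\Sigma=I$ of Liu \etal. For the fully general non-commuting setting I expect an additional reduction is required, for instance a perturbation/limit argument in $A$ and $\Sigma$, or first whitening $\vecW$ via $\vecW\mapsto\Sigma^{1/2}\vecW'$ and exploiting the always-admissible decomposition $\vecW'\stackrel{d}{=}A^{1/2}\vecW'_1+(I-A)^{1/2}\vecW'_2$ before undoing the change of variables; this is the principal piece of additional work beyond the clean one-shot substitution sketched above.
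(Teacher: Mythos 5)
Your proof is essentially the same as the paper's. The paper also realizes $V$ as $\vecY$ plus independent Gaussian noise: it takes two fresh copies $\vecW_1,\vecW_2$ of $\vecW$, sets $\vecY=\vecX+A^{1/2}\vecW_1$ and $V=\vecY+(I-A)^{1/2}\vecW_2$, notes $V\stackrel{d}{=}\vecX+\vecW$, computes $I(\vecX;V)=h(\vecX+\vecW)-h(\vecW)$ and $I(\vecY;V)=h(\vecX+\vecW)-h((I-A)^{1/2}\vecW)$, substitutes into Theorem~\ref{thm:vectorEIPI}, and rearranges. Your $\vecN\sim N(0,\Sigma-A^{1/2}\Sigma A^{1/2})$ is just a reparametrization of the paper's $(I-A)^{1/2}\vecW_2$ (with covariance $(I-A)^{1/2}\Sigma(I-A)^{1/2}$); the two coincide precisely when $A$ and $\Sigma$ commute.

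The delicate point you flag is real, but you should note that the paper's proof carries exactly the same implicit restriction, and does so silently: $A^{1/2}\vecW_1+(I-A)^{1/2}\vecW_2$ has covariance $A^{1/2}\Sigma A^{1/2}+(I-A)^{1/2}\Sigma(I-A)^{1/2}$, and the claim ``$V\stackrel{d}{=}\vecX+\vecW$'' requires this to equal $\Sigma$, which holds iff $A$ and $\Sigma$ commute. So neither construction, as written, covers the non-commuting case, and the determinant identity $|\Sigma-A^{1/2}\Sigma A^{1/2}|=|I-A|\,|\Sigma|$ also fails without commutativity (easy to check on a $2\times 2$ example). This is consistent with the original reference \cite{bib:GenCosta}, which effectively works with $\Sigma=I$ (so the issue never arises). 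One small caution about your proposed whitening fix: after replacing $\vecW$ by $\Sigma^{1/2}\vecW'$ with $\vecW'\sim N(0,I)$, the ``effective $A$'' becomes $\tilde{A}=\Sigma^{-1/2}A^{1/2}\Sigma A^{1/2}\Sigma^{-1/2}$, and $\tilde{A}\preceq I$ is equivalent to $A^{1/2}\Sigma A^{1/2}\preceq\Sigma$, which is not implied by $A\preceq I$ alone; so that reduction does not, by itself, restore full generality. In short: your argument is correct and matches the paper, and your reservation about the non-commuting case is a genuine, unflagged gap in the paper's own statement/proof, not a defect of your approach.
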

\begin{proof}
Let $\vecW_1,\vecW_2$ denote two independent copies of $\vecW$, and put $\vecY = \vecX + A^{1/2} \vecW_1$ and $V = \vecY + (I-A)^{1/2} \vecW_2$.    Note that $V = \vecX + \vecW$ in distribution so that $I(\vecX;V) = h(\vecX+\vecW) - h(\vecW)$.  Similarly, $I(\vecY;V) = h(\vecX+\vecW) - h((I-A)^{1/2} \vecW)$.  
Now, \eqref{eq:costaEPI} follows from Theorem \ref{thm:vectorEIPI} since 
\begin{align}
2^{\frac{2}{n}(h (\vecX + A^{1/2} \vecW) - h(\vecX+\vecW) + h(\vecW))} &=
2^{\frac{2}{n}(h (\vecY) - I(\vecX ;V))} \\
&\geq   2^{\frac{2}{n}(h(\vecX) - I(\vecY;V))} + 2^{\frac{2}{n}h( A^{1/2} \vecW_1 )}\\
&=2^{\frac{2}{n}(h(\vecX) - h(\vecX+\vecW) + h((I-A)^{1/2}  \vecW))} + |A|^{1/n} 2^{\frac{2}{n}h( \vecW)}\\
&= |I-A|^{1/n} 2^{\frac{2}{n}(h(\vecX) - h(\vecX+\vecW) + h(  \vecW))} + |A|^{1/n} 2^{\frac{2}{n}h( \vecW)}.
\end{align}
Multiplying both sides by $2^{\frac{2}{n}( h(\vecX+\vecW) - h(\vecW))}$ completes the proof.
\end{proof}

Costa's EPI may be interpreted as a concavity property enjoyed by entropy powers.  The proof of Theorem \ref{thm:GenCosta} suggests a generalization of this property to non-Gaussian noise.  Indeed, we have the following, which may be viewed as a {reverse EPI}:
\begin{theorem}\label{thm:XZW}
Let $\vecX \sim P_{\vecX}, \vecZ \sim P_{\vecZ}$ and $\vecW \sim N(0,\Sigma )$ be independent, $n$-dimensional random vectors. Then
\begin{align}
2^{\frac{2}{n}\left( h (\vecX+\vecW)  + h(\vecZ+\vecW) \right)} &\geq 2^{\frac{2}{n}\left( h (\vecX)  + h(\vecZ) \right)} + 2^{\frac{2}{n}\left( h (\vecX+\vecZ+\vecW)  + h(\vecW) \right)}.
\end{align}
\end{theorem}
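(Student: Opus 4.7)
The plan is to derive Theorem \ref{thm:XZW} as a direct corollary of Theorem \ref{thm:vectorEIPI} by choosing the downstream variable $V$ to be $\vecX+\vecZ+\vecW$, mirroring the strategy used in the proof of Theorem \ref{thm:GenCosta}. With $Q$ trivial, the Markov chain $\vecX \to (\vecX+\vecW) \to V$ is satisfied because $V = (\vecX+\vecW)+\vecZ$, that is, $V$ equals $\vecX+\vecW$ contaminated by independent additive noise $\vecZ$. Thus Theorem \ref{thm:vectorEIPI} applies with $\vecW$ (Gaussian) as the noise and $V=\vecX+\vecZ+\vecW$ as the auxiliary.

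Next I would compute the two mutual informations appearing on the left- and right-hand sides of \eqref{ineqToproveVec}. Since $V-(\vecX+\vecW)=\vecZ$ is independent of $\vecX+\vecW$, one has $h(V\mid \vecX+\vecW)=h(\vecZ)$, whence
\begin{align}
I(\vecX+\vecW;V) = h(\vecX+\vecZ+\vecW)-h(\vecZ).
\end{align}
Similarly, $V-\vecX=\vecZ+\vecW$ is independent of $\vecX$, so $h(V\mid\vecX)=h(\vecZ+\vecW)$, giving
\begin{align}
I(\vecX;V) = h(\vecX+\vecZ+\vecW)-h(\vecZ+\vecW).
\end{align}

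Substituting these into \eqref{ineqToproveVec} yields
\begin{align}
2^{\frac{2}{n}\left(h(\vecX+\vecW)-h(\vecX+\vecZ+\vecW)+h(\vecZ+\vecW)\right)} \geq 2^{\frac{2}{n}\left(h(\vecX)-h(\vecX+\vecZ+\vecW)+h(\vecZ)\right)} + 2^{\frac{2}{n}h(\vecW)}.
\end{align}
Multiplying both sides by $2^{\frac{2}{n}h(\vecX+\vecZ+\vecW)}$ gives exactly the claimed inequality.

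The bulk of the work has been done already in Theorem \ref{thm:vectorEIPI}; the only real ``step'' here is choosing the correct auxiliary $V$, after which everything reduces to elementary independence-based identities for conditional differential entropies and a single multiplication. Consequently, there is no substantive obstacle, provided $\vecX$ and $\vecZ$ have densities (otherwise both sides contain $-\infty$ terms and the inequality is trivial under the convention adopted after Theorem \ref{thm:scalarEIPI}). The main conceptual point worth emphasizing in the write-up is the parallel with the proof of Theorem \ref{thm:GenCosta}: taking $V$ to be a further noisy observation of $\vecX+\vecW$ converts the entropy-power inequality of Theorem \ref{thm:vectorEIPI} into a statement relating the entropy powers of $\vecX+\vecW$, $\vecZ+\vecW$, and $\vecX+\vecZ+\vecW$, which is precisely a reverse-EPI in the spirit of Costa's concavity phenomenon.
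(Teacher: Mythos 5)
Your proof is correct and is essentially the paper's own argument: the paper likewise obtains Theorem \ref{thm:XZW} by substituting $V=\vecX+\vecZ+\vecW$ into Theorem \ref{thm:vectorEIPI} and rearranging exponents. You have simply made explicit the mutual-information computations that the paper leaves implicit.
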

\begin{proof}
This is an immediate consequence of Theorem \ref{thm:vectorEIPI} by putting $V = \vecX+\vecZ+\vecW$ and rearranging exponents.
\end{proof}

We briefly remark that Madiman  observed the following inequality on submodularity of differential entropy \cite{madiman2008entropy}, which can be proved via data processing: if $X,Z,W$ are independent random variables, then
\begin{align}
2^{2 \left( h (X+W)  + h(Z+W) \right)} &\geq  2^{2\left( h (X+Z+W)  + h(W) \right)}. \label{madimanSubmodular}
\end{align}
When $W$ is Gaussian, Theorem \ref{thm:XZW} sharpens inequality \eqref{madimanSubmodular} by reducing the LHS by a factor of $2^{2\left( h (X)  + h(Z) \right)}$.

\subsection{A Reverse EPI and a Refinement of Stam's Inequality}

Theorem \ref{thm:XZW} admits several interesting corollaries which are deeply connected to the celebrated Gaussian Logarithmic Sobolev Inequality (LSI).  To start, define the entropy power $N(\vecX)$ and the Fisher Information $J(\vecX)$ of a random vector $\vecX$ with density $f$ with respect to Lebesgue measure as follows:
\begin{align}
&N(\vecX) \triangleq \frac{1}{2\pi e}2^{\frac{2}{n} h(\vecX)}  &J(\vecX) \triangleq \EE \left[\frac{\|\nabla f(\vecX) \|^2}{f(\vecX)}\right].
\end{align}
To avoid degeneracy, we assume throughout this section that entropies and Fisher informations exist and are finite. 

In exploring the similarity between the Brunn-Minkowski inequality and the EPI, Costa and Cover \cite{costa1984similarity} proved the following ``information isoperimetric inequality" for $n$-dimensional $\vecX$
\begin{align} 
 N(\vecX )   J(\vecX)\geq n. \label{StamInequality}
\end{align}
This inequality is commonly referred to as Stam's inequality,  due to the fact that he first observed it in his classic 1959 paper \cite{stam1959some} in the one-dimensional case.  In 1975, Gross  rediscovered \eqref{StamInequality} by establishing the (mathematically equivalent) LSI for the standard Gaussian measure $\gamma_n$ on $\mathbb{R}^n$ \cite{gross1975logarithmic}:  For every $h$ on $\mathbb{R}^n$ with gradient in $L^2(\gamma_n)$
\begin{align}
\int_{\mathbb{R}^n} h^2 \log h^2 d\gamma_n \leq 2 \int_{\mathbb{R}^n} |\nabla h|^2 d\gamma_n +\left( \int_{\mathbb{R}^n} h^2  d\gamma_n \right) \log \left( \int_{\mathbb{R}^n} h^2  ~d\gamma_n \right). \label{grossLSI}
\end{align}
In the same paper, Gross also proved that \eqref{grossLSI} is equivalent to the hypercontractivity of the Ornstein-Uhlenbeck semigroup \cite{nelson1973free}. It wasn't until the 1990's that Carlen \cite{carlen1991superadditivity} showed the equivalence between Stam's inequality and Gross' LSI. We refer the reader to \cite{raginsky2013concentration} for a concise proof and further historical details. 
 
Since \eqref{StamInequality} is proved using de Bruijn's identity and the special case of Shannon's EPI when one summand is Gaussian, Theorem \ref{thm:XZW} naturally leads to a sharpening of \eqref{StamInequality}.  Surprisingly, this strengthening takes the form of a  reverse EPI, which upper bounds $N(\vecX+\vecZ)$ in terms of the marginal entropies and Fisher informations.

\begin{theorem}\label{thm:reverseEPI}
If  $\vecX$ and  $\vecZ$ are independent $n$-dimensional random vectors, then 
\begin{align} 
 N(\vecX )  N(\vecZ) \left( J(\vecX)+  J(\vecZ)\right)\geq n  N (\vecX+\vecZ). \label{eq:reverseEPI}
\end{align}
\end{theorem}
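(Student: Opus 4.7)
The plan is to derive~\eqref{eq:reverseEPI} from Theorem~\ref{thm:XZW} by specializing $\vecW$ to a small Gaussian $\vecW_t \sim N(0, tI)$ independent of $\vecX,\vecZ$ and differentiating at $t = 0$ via de Bruijn's identity.  Working with natural logarithms (equivalent to the theorem as stated up to rescaling the exponents by $\ln 2$), the identity $e^{2 h(\vecW_t)/n} = 2\pi e\, t$ lets me rewrite Theorem~\ref{thm:XZW} as $F(t) \geq 0$ for all $t \geq 0$, where
\begin{align*}
F(t) := e^{\frac{2}{n}(h(\vecX+\vecW_t) + h(\vecZ+\vecW_t))} - e^{\frac{2}{n}(h(\vecX) + h(\vecZ))} - 2\pi e\, t \cdot e^{\frac{2}{n} h(\vecX+\vecZ+\vecW_t)}.
\end{align*}

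First I would observe $F(0) = 0$: the first two terms coincide at $t = 0$, and the third vanishes thanks to the explicit factor of $t$ out front (even though $h(\vecW_t) \to -\infty$).  Consequently $F$ is nonnegative on $[0,\infty)$ with boundary minimum at $0$, so $F'(0^+) \geq 0$ provided this one-sided derivative exists.  I would then compute $F'(0^+)$ using de Bruijn's identity $\tfrac{d}{dt} h(\vecY + \vecW_t) = \tfrac{1}{2} J(\vecY + \vecW_t)$: the combined derivative of the first two terms is $\tfrac{1}{n}(J(\vecX) + J(\vecZ)) \cdot (2\pi e)^2 N(\vecX) N(\vecZ)$, while the third contributes only $(2\pi e)^2 N(\vecX + \vecZ)$ at $t = 0$, because the $t$-derivative of the exponent is killed by the factor of $t$ out front.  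Multiplying the inequality $F'(0^+) \geq 0$ by $n/(2\pi e)^2$ then yields exactly $N(\vecX) N(\vecZ)(J(\vecX) + J(\vecZ)) \geq n\, N(\vecX + \vecZ)$.

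The only real obstacle is the analytic justification of differentiating at the boundary point $t = 0$: one needs $\tfrac{d}{dt} h(\vecY + \vecW_t)|_{t = 0^+}$ to exist and to equal $\tfrac{1}{2} J(\vecY)$ for each of $\vecY \in \{\vecX, \vecZ, \vecX + \vecZ\}$.  This is classical under the finite-entropy and finite-Fisher-information assumptions already imposed at the start of this subsection, so all the genuine content of the theorem is carried by Theorem~\ref{thm:XZW}, and the remainder is a clean application of the heat-semigroup calculus underlying Stam's inequality~\eqref{StamInequality}.
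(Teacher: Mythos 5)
Your argument is correct and takes essentially the same route as the paper: specialize $\vecW$ to $\sqrt t\,\mathbf G$ in Theorem~\ref{thm:XZW}, note $N(\vecW)=t$, and differentiate at $t=0$ via de~Bruijn. The one cosmetic difference is that the paper first uses $N(\vecX+\vecZ+\sqrt t\,\mathbf G)\ge N(\vecX+\vecZ)$ to freeze the right-hand term as a constant before taking $t\to0$, whereas you keep the $t$-dependence there and rely on the explicit factor of $t$ to kill the derivative of the exponent — both are valid and give the identical limit.
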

\begin{proof}
We may assume $J(\vecX) <\infty$ and $J(\vecZ)<\infty$, else there is nothing to prove.    To begin, let   $\mathbf{G}\sim N(0,I)$ be independent of $\vecX,\vecZ$ and  recall de~Bruijn's identity \cite{stam1959some}: $\frac{d}{dt}h(\vecX + \sqrt{t} \mathbf{G})  = \frac{1} {2\ln 2}J(\vecX+ \sqrt{t} \mathbf{G})$.  In particular, we have \begin{align}
\frac{d}{dt}N(\vecX + \sqrt{t} \mathbf{G}) \Big|_{t=0}= \frac{1}{n}N(\vecX) J(\vecX).\label{deBruijnNJ}
\end{align}
Identifying $\vecW = \sqrt{t} \mathbf{G}$ in Theorem \ref{thm:XZW} and rearranging, we find
\begin{align}
\frac{ N(\vecX+\sqrt{t} \mathbf{G})  N(\vecZ+\sqrt{t} \mathbf{G}) -N (\vecX) N(\vecZ)  }{ t } &\geq N (\vecX+\vecZ+\sqrt{t} \mathbf{G})  \geq N (\vecX+\vecZ).
\end{align}
Letting $t\to0$ and applying \eqref{deBruijnNJ} proves the claim.  
\end{proof}

It is straightforward to recover Stam's inequality from Theorem \ref{thm:reverseEPI}.  Indeed,  let $\vecZ\sim N(0,\sigma^2 I)$ with variance chosen such that  $N(\vecZ )  = N(\vecX)$, then \eqref{eq:reverseEPI} reduces to 
\begin{align}
 N(\vecX ) J(\vecX)+   N(\vecZ )J(\vecZ) \geq n  \frac{N (\vecX+\vecZ)}{N(\vecX)} \geq n  \frac{2 N (\vecX)}{N(\vecX)} = 2n,
\end{align}
where the second inequality follows from the EPI. Since $N(\vecZ )J(\vecZ) = n$, \eqref{StamInequality} follows.

Stated another way, \eqref{StamInequality} reads $\frac{1}{n}J(\vecX) \geq \frac{1}{N(\vecX)}$.  Using the EPI, we may  sandwich the (appropriately normalized) entropy power of the sum $\vecX+\vecZ$ according to
\begin{align}
 \frac{1}{n}J(\vecX)+  \frac{1}{n}J(\vecZ)    \geq \frac{N (\vecX+\vecZ)}{ N(\vecX )  N(\vecZ) } \geq \frac{1}{ N(\vecX )   }+\frac{1}{N(\vecZ) },
\end{align}
which is met with equality throughout if $\vecX$ and $\vecZ$ are Gaussian with proportional covariance matrices.   

Next, let $\vecX',\vecX$ be independent and identically distributed with finite entropy, and define the \emph{doubling constant} of $\vecX$ (cf. \cite{kontoyiannis2014sumset}), denoted by $\mathsf{d}(\vecX)$, as
\begin{align}
\mathsf{d}(\vecX) \triangleq  \frac{N \left(\frac{\vecX+\vecX'}{\sqrt{2}}\right)}{ N(\vecX )  }.
\end{align}
We remark that the doubling constant and its relationship to other functionals is discussed in \cite{kontoyiannis2014sumset} for the one-dimensional  setting, and in \cite{madiman2015ruzsa} for general  dimension. 

By letting $\vecZ$ and $\vecX$ be independent and identically distributed, Theorem \ref{thm:reverseEPI} 
yields the following  inequality, which expresses the deficit in \eqref{StamInequality} in terms of the doubling constant $\mathsf{d}(\vecX)$:
\begin{corollary}\label{cor:deficitBound}
For any $n$-dimensional random vector $\vecX$, 
\begin{align}
N(\vecX)J(\vecX)    \geq n\,  \mathsf{d}(\vecX). \label{corEqn:deficitBound}
\end{align}
\end{corollary}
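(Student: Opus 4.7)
The plan is to apply Theorem \ref{thm:reverseEPI} directly with $\vecZ$ taken to be an independent copy of $\vecX$, and then invoke the scaling property of entropy power to translate the resulting inequality into the form of \eqref{corEqn:deficitBound}.

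First, I would let $\vecX'$ be an independent copy of $\vecX$ and apply Theorem \ref{thm:reverseEPI} with $\vecZ=\vecX'$. Since $\vecX$ and $\vecX'$ are identically distributed, we have $N(\vecX')=N(\vecX)$ and $J(\vecX')=J(\vecX)$, so Theorem \ref{thm:reverseEPI} reduces to
\begin{align}
2\, N(\vecX)^2 J(\vecX) \geq n\, N(\vecX+\vecX'). \notag
\end{align}

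Next, I would use the basic scaling identity $h(a\vecX)=h(\vecX)+n\log|a|$, which implies $N(a\vecX)=a^2 N(\vecX)$. In particular, with $a=1/\sqrt{2}$, we have $N(\vecX+\vecX')=2\,N((\vecX+\vecX')/\sqrt{2})$. Substituting this into the displayed inequality and canceling a factor of $2$ yields
\begin{align}
N(\vecX)^2 J(\vecX) \geq n\, N\!\left(\tfrac{\vecX+\vecX'}{\sqrt{2}}\right). \notag
\end{align}
Dividing through by $N(\vecX)$ and recognizing the ratio on the right as the doubling constant $\mathsf{d}(\vecX)$ gives \eqref{corEqn:deficitBound}. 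One should also note that if $J(\vecX)=\infty$ there is nothing to prove, and if $J(\vecX)<\infty$ then $h(\vecX)>-\infty$ so $N(\vecX)>0$ and the division step is justified.

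There is essentially no obstacle here: the corollary is a direct symmetrization of Theorem \ref{thm:reverseEPI}, and the only small subtlety is remembering that entropy power is quadratically homogeneous so that the $\sqrt{2}$ normalization in $\mathsf{d}(\vecX)$ emerges cleanly. No convergence or regularity issues arise beyond those already handled in the statement of Theorem \ref{thm:reverseEPI}.
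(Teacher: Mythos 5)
Your proof is correct and is exactly the argument the paper intends: the paper states the corollary follows by taking $\vecZ$ identically distributed to $\vecX$ in Theorem~\ref{thm:reverseEPI}, and the scaling identity $N(\vecX+\vecX')=2N\bigl((\vecX+\vecX')/\sqrt{2}\bigr)$ you make explicit is the (implicit) remaining step. Nothing to add.
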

Recalling the  conditions for equality in the EPI,  $\mathsf{d}(\vecX)\geq 1$ with equality if and only if $\vecX$ is Gaussian.  Therefore, Corollary \ref{cor:deficitBound} represents a strict strengthening of \eqref{StamInequality}.   Since \eqref{StamInequality} is equivalent to the Gaussian LSI, Corollary \ref{cor:deficitBound}  provides a bound on the deficit in the LSI.  Such bounds have been of recent interest \cite{bobkov2014bounds, fathi2014quantitative, dolbeault2015stability}, and are interpreted  as a stability estimate for the LSI.  

Let $\vecX$ be a random vector having density $f$ with respect to $\gamma_n$.  Then the LSI \eqref{grossLSI} may be written as\footnote{In fact, this is completely equivalent to Gross' formulation \eqref{grossLSI}. } 
\begin{align}
\int_{\mathbb{R}^n} f \log f ~d\gamma_n \leq \frac{1}{2} \int_{\mathbb{R}^n} \frac{|\nabla f|^2}{f} d\gamma_n. \label{fLSI}
\end{align}
Under the assumptions that $\vecX$ is zero-mean and  satisfies the Poincar\'{e} inequality
\begin{align}
\zeta ~\EE \left[s^2(\vecX) \right]\leq  \EE\left[ |\nabla s(\vecX)|^2 \right] \label{PoincareInequality}
\end{align}
for every smooth $s :\mathbb{R}^n\to \mathbb{R}$ such that $\EE[s(\vecX) ] =0$,  the LSI \eqref{fLSI} may be improved to 
\begin{align}
\int_{\mathbb{R}^n} f \log f ~d\gamma_n \leq \frac{c(\zeta)}{2} \int_{\mathbb{R}^n} \frac{|\nabla f|^2}{f} d\gamma_n, \label{stableEst}
\end{align}
where $c(\zeta) <1$ for  `spectral gap' $\zeta >0$  \cite{fathi2014quantitative}. In fact, by using Corollary \ref{cor:deficitBound} and the self-strengthening argument of \cite{bobkov2014bounds}, the constant $c(\zeta)$ established in \cite{fathi2014quantitative} may be improved by incorporating $\mathsf{d}(\vecX)$.  

When $n=1$, $\Var(X)=1$ and $h(X)>-\infty$, Ball, Barthe and Naor \cite{BBN03} showed that the Poincar\'{e} inequality \eqref{PoincareInequality} implies 
\begin{align}
\mathsf{d}(X) \geq \left( N(X) \right)^{-\frac{\zeta}{2+2\zeta}}.\label{spectralGap}
\end{align}
  Since $N(X)\leq 1$ due to $\Var(X)=1$, we obtain a sharpening of Stam's inequality:
\begin{corollary} \label{cor:poincareSharpening}
Let $X$ be a zero-mean  random variable with $\Var(X)=1$ and finite entropy.  If $X$ satisfies the Poincar\'{e} inequality \eqref{PoincareInequality}, then
\begin{align}
 \left( N(X) \right)^{{1+\frac{3}{2}\zeta} }  \left( J(X)\right)^{{1+\zeta}}  \geq 1.
\end{align}
\end{corollary}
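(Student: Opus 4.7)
The plan is to chain Corollary \ref{cor:deficitBound} with the Ball--Barthe--Naor bound \eqref{spectralGap} and then do a purely algebraic exponent manipulation; no further probabilistic input is needed.

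First, I would apply Corollary \ref{cor:deficitBound} in dimension $n=1$ to obtain
\begin{align*}
N(X)\,J(X) \;\geq\; \mathsf{d}(X).
\end{align*}
Then I would invoke the hypothesis that $X$ satisfies the Poincar\'{e} inequality \eqref{PoincareInequality} (with $\Var(X)=1$ and $h(X)>-\infty$) to apply \eqref{spectralGap}, giving
\begin{align*}
\mathsf{d}(X) \;\geq\; (N(X))^{-\frac{\zeta}{2+2\zeta}}.
\end{align*}
Chaining these two yields $N(X)\,J(X)\geq (N(X))^{-\zeta/(2+2\zeta)}$, which rearranges to
\begin{align*}
J(X) \;\geq\; (N(X))^{-\frac{2+3\zeta}{2+2\zeta}}.
\end{align*}

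The second step is to raise both sides to the power $1+\zeta$. The pleasant cancellation $(1+\zeta)/(2+2\zeta) = 1/2$ converts the exponent on the right into $-(2+3\zeta)/2 = -(1+\tfrac{3}{2}\zeta)$, and the claimed inequality $(N(X))^{1+\frac{3}{2}\zeta}(J(X))^{1+\zeta}\geq 1$ follows immediately.

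Before doing any of this I should dispense with degenerate cases: if $J(X)=\infty$ the inequality is trivial, and if $J(X)<\infty$ then $h(X)>-\infty$ together with $\Var(X)=1$ guarantees $N(X)\in(0,1]$, so all quantities are well-defined and the exponent manipulations are legitimate. There isn't really a hard step here once Corollary \ref{cor:deficitBound} is in hand; the only mild subtlety is verifying that the exponent arithmetic produces exactly the form $1+\tfrac{3}{2}\zeta$ and $1+\zeta$ stated in the corollary, which is the reason for matching the Ball--Barthe--Naor exponent $\zeta/(2+2\zeta)$ against the factor $1+\zeta$ when exponentiating.
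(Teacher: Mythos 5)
Your proof is correct and is exactly the argument the paper intends (the paper does not write it out explicitly, but the surrounding text — deduce Corollary \ref{cor:deficitBound} first, then invoke the Ball--Barthe--Naor bound \eqref{spectralGap} — makes clear that chaining these two and rearranging exponents is the intended route). The exponent arithmetic checks out: from $N(X)J(X)\geq (N(X))^{-\zeta/(2+2\zeta)}$ one gets $J(X)\geq (N(X))^{-(2+3\zeta)/(2+2\zeta)}$, and raising to the power $1+\zeta$ gives $(N(X))^{1+\frac{3}{2}\zeta}(J(X))^{1+\zeta}\geq 1$ via the cancellation $(1+\zeta)/(2+2\zeta)=1/2$. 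One small note: the paper's aside that $N(X)\leq 1$ is not actually needed for the exponent manipulation (as you implicitly recognize); it is used only to argue that the resulting inequality is a genuine \emph{sharpening} of $N(X)J(X)\geq 1$, which you need not re-prove for the corollary itself.
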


On account of \cite{BBN03}, a doubling constant $\mathsf{d}(X)>1$ is a weaker assumption than presence of a spectral gap.  Therefore, inequality \eqref{corEqn:deficitBound} may be viewed as an improvement on the stability estimate \eqref{stableEst} in the sense that a less restrictive hypothesis is required. 

In closing, we remark that the inequality \eqref{spectralGap} also holds for $\mathbb{R}^n$ (with $2+2\zeta$ replaced by $4+4\zeta$), provided the the density of $\vecX$ is log-concave \cite{ball2012entropy}. Thus, Corollary \ref{cor:poincareSharpening} can be modified accordingly.

\subsection{Converse for the Two-Encoder Quadratic Gaussian Source Coding Problem}
Characterizing the rate region for the two-encoder quadratic Gaussian source coding problem was a longstanding open problem in the field of network information theory until 
its ultimate resolution by Wagner \etal\,  in their landmark paper  \cite{bib:Wagner}, which established that a separation-based scheme \cite{bib:BergerLongo1977,  bib:Tung1978} was optimal.  Wagner \etal's work built upon Oohama's earlier solution  to the one-helper problem  \cite{bib:Oohama1997} and the independent solutions to the Gaussian CEO problem due to Prabhakaran, Tse and Ramachandran \cite{bib:Prabhakaran} and  Oohama \cite{bib:Oohama2005}  (see  \cite{bib:ElGamalYHKim2012} for a self-contained treatment).  Since Wagner \etal's original proof of the sum-rate constraint,   other proofs have been proposed based on estimation-theoretic arguments and semidefinite programming (e.g., \cite{wang2010sum}), however all known proofs are  quite complex.  Below, we show that the converse result for the entire rate region is a direct consequence of Theorem \ref{thm:vectorEIPI}, thus unifying the results of \cite{bib:Wagner} and  \cite{bib:Oohama1997} under a common and  succinct inequality.

\begin{theorem}\label{thm:MTSCGaussian}  \cite{bib:Wagner}
Let $\vecX,\vecY = \{X_i,Y_i\}_{i=1}^n$ be independent identically distributed pairs of jointly Gaussian random variables with correlation $\rho$.   Let $\phi_X : \mathbb{R}^n \to \{1,\dots, 2^{nR_X}\}$ and $\phi_Y : \mathbb{R}^n \to \{1,\dots, 2^{nR_Y}\}$, and define 
\begin{align}
d_X &\triangleq \frac{1}{n}\EE\left[ \| \mathbf{X} - \mathbb{E}[\mathbf{X}|\phi_X(\vecX),\phi_Y(\vecY)]  \|^2\right]\\
d_Y &\triangleq \frac{1}{n}\EE\left[ \| \mathbf{Y} - \mathbb{E}[\mathbf{Y}|\phi_X(\vecX),\phi_Y(\vecY)]  \|^2\right].
\end{align}
Then
\begin{align}
R_X &\geq \frac{1}{2}\log \left(\frac{1}{d_X}\left( 1-\rho^2 + \rho^2 2^{-2 R_Y }\right)  \right)\label{oneHelperX}\\
R_Y &\geq \frac{1}{2}\log \left(\frac{1}{d_Y}\left( 1-\rho^2 + \rho^2 2^{-2 R_X }\right)  \right)\label{oneHelperY}\\
R_X+R_Y&\geq \frac{1}{2}\log \frac{(1-\rho^2)\beta(d_X d_Y)}{2d_Xd_Y},\label{recoveredSumRate}
\end{align}
where  and $\beta(D)\triangleq 1 + \sqrt{1 + \frac{4\rho^2 D}{(1-\rho^2)^2}}$.
\end{theorem}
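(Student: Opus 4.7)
The plan is to prove the three bounds separately: the single-rate inequalities \eqref{oneHelperX}--\eqref{oneHelperY} using only the conditional EPI (the trivial-auxiliary special case of Theorem~\ref{thm:vectorEIPI}), and the sum-rate bound \eqref{recoveredSumRate} using the full strength of Theorem~\ref{thm:vectorEIPI}.  Throughout I write $\vecY = \rho \vecX + \tilde{\vecW}$ with $\tilde{\vecW}\sim N(0,(1-\rho^2)I)$ independent of $\vecX$, abbreviate $J_X = \phi_X(\vecX)$ and $J_Y = \phi_Y(\vecY)$, and note that the maximum-entropy inequality yields $2^{\frac{2}{n} h(\vecX \mid J_X, J_Y)} \le 2\pi e\, d_X$ and $2^{\frac{2}{n} h(\vecY \mid J_X, J_Y)} \le 2\pi e\, d_Y$.

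For \eqref{oneHelperY} I would start from $n R_Y \ge I(\vecY; J_Y \mid J_X) = h(\vecY \mid J_X) - h(\vecY \mid J_X, J_Y)$, apply the conditional EPI to $\vecY \mid J_X$ (valid because $\tilde{\vecW}$ is independent of $(\vecX, J_X)$) to obtain $2^{\frac{2}{n} h(\vecY \mid J_X)} \ge \rho^2 \cdot 2^{\frac{2}{n} h(\vecX \mid J_X)} + 2\pi e (1-\rho^2)$, and then combine this with $h(\vecX \mid J_X) \ge h(\vecX) - n R_X$ and the distortion bound above; \eqref{oneHelperX} is obtained by symmetry.

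For \eqref{recoveredSumRate} the plan rests on two ingredients.  First, because $J_X$ is a function of $\vecX$ and $J_Y$ is a function of $\vecY$, a short chain-rule computation gives the identity
\begin{equation*}
I(\vecX, \vecY; J_X, J_Y) \;=\; I(\vecX; J_X, J_Y) + H(J_Y \mid \vecX),
\end{equation*}
whence $n(R_X + R_Y) \ge I(\vecX; J_X, J_Y) + H(J_Y \mid \vecX)$.  Second, apply Theorem~\ref{thm:vectorEIPI} with $Q = J_X$, auxiliary $V = J_Y$, signal $\rho \vecX$, and Gaussian summand $\tilde{\vecW}$: the required conditional independence holds because $\tilde{\vecW}$ is independent of $(\vecX, J_X)$, and the Markov chain $\rho \vecX \to \vecY \to J_Y \mid J_X$ holds because $J_Y$ is determined by $\vecY$.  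Using $I(\rho \vecX; J_Y \mid J_X) = H(J_Y \mid J_X) - H(J_Y \mid \vecX)$ and $I(\vecY; J_Y \mid J_X) = H(J_Y \mid J_X)$, the conclusion of the theorem simplifies to
\begin{equation*}
A\, C \;\ge\; \rho^2\, B/C + (1-\rho^2),
\end{equation*}
where $A := 2^{\frac{2}{n} h(\vecY \mid J_X, J_Y)}/(2\pi e)$, $B := 2^{\frac{2}{n} h(\vecX \mid J_X, J_Y)}/(2\pi e)$, and $C := 2^{\frac{2}{n} H(J_Y \mid \vecX)}$.

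To close, the first ingredient translates to $C/B \le 2^{2(R_X + R_Y)}$, while the distortion bounds give $A \le d_Y$ and $C \le B \cdot 2^{2(R_X+R_Y)} \le d_X \cdot 2^{2(R_X + R_Y)}$.  Substituting these into the displayed inequality produces $d_X d_Y \cdot 2^{2(R_X + R_Y)} \ge (1-\rho^2) + \rho^2 \cdot 2^{-2(R_X + R_Y)}$, and solving the resulting quadratic in $2^{2(R_X + R_Y)}$ returns the root $(1-\rho^2) \beta(d_X d_Y)/(2 d_X d_Y)$, establishing \eqref{recoveredSumRate}.  The main obstacle, I expect, is recognizing that $V = J_Y$ under $Q = J_X$ is the right auxiliary for Theorem~\ref{thm:vectorEIPI} --- this pairing is what threads $H(J_Y \mid \vecX)$ through both sides of the entropy-power inequality and lets the rate identity and the distortion bounds combine into the quadratic that produces Wagner's $\beta(D)$; once this choice is in hand, the remainder is chain-rule bookkeeping.
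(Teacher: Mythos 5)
Your proposal is correct and, for the sum-rate bound \eqref{recoveredSumRate}, is essentially the paper's own argument: both apply Theorem~\ref{thm:vectorEIPI} with $Q = \phi_X(\vecX)$ and $V = \phi_Y(\vecY)$ and close with the same quadratic in $2^{-2(R_X+R_Y)}$ (the paper routes this through the intermediate Proposition~\ref{propMTSC} and the monotonicity of \eqref{LRHSmonotone}, while you unroll the entropy bookkeeping directly with $A,B,C$ and the identity $I(\vecX,\vecY;J_X,J_Y)=I(\vecX;J_X,J_Y)+H(J_Y\mid\vecX)$ --- a cosmetic difference, not a different route). The one genuine departure is your treatment of \eqref{oneHelperX}--\eqref{oneHelperY}: you invoke only the classical conditional EPI with conditioning $Q=J_X$ (Oohama's original one-helper argument), whereas the paper deduces these from Proposition~\ref{propMTSC} with $U$ constant, i.e.\ the nontrivial $V$-side of Theorem~\ref{thm:vectorEIPI}. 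Your version is marginally more elementary for those two inequalities and does not require the new strengthening; the paper's version is less elementary but unifies all three bounds under a single application of the strong EPI, which is the point of the section. Both are valid.
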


The key ingredient is the following consequence of Theorem \ref{thm:vectorEIPI}:
\begin{proposition}\label{propMTSC}
For $\vecX,\vecY$ as above, 
\begin{align}
2^{-\frac{2}{n}(I(\vecY;U) + I(\vecX;V|U))} \geq \rho^2 \, 2^{-\frac{2}{n}(I(\vecX;U) + I(\vecY;V|U)) } + 1-\rho^2
\end{align}
 for any $U,V$ satisfying $U \to \vecX \to \vecY\to V$.  
\end{proposition}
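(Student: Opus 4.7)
The plan is to derive Proposition \ref{propMTSC} as a direct conditional application of Theorem \ref{thm:vectorEIPI}, using the jointly Gaussian structure to decompose $\vecY$ into a scaled copy of $\vecX$ plus independent Gaussian noise. Concretely, since each pair $(X_i,Y_i)$ is jointly Gaussian with correlation $\rho$, I would write $\vecY = \alpha \vecX + \vecW$, where $\alpha = \rho\,\sigma_Y/\sigma_X$ and $\vecW$ has i.i.d.\ $N(0,(1-\rho^2)\sigma_Y^2)$ coordinates independent of $\vecX$. The degenerate case $\rho=0$ is immediate and can be set aside, so I assume $\rho \neq 0$.

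The key preliminary point is that $\vecW$ is \emph{unconditionally} independent of $U$. By the hypothesis $U \to \vecX \to \vecY$, the variables $U$ and $\vecY$ are conditionally independent given $\vecX$, hence so are $U$ and $\vecW$; since $\vecW$ is also independent of $\vecX$ and its conditional law given $\vecX$ does not depend on $\vecX$, integrating out $\vecX$ yields $\vecW \perp U$ outright. In particular, conditional on $U$ the pair $(\vecX,\vecW)$ remains independent and $\vecW$ retains its Gaussian law, so the hypotheses of Theorem \ref{thm:vectorEIPI} with $Q := U$ are in force.

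I would then invoke Theorem \ref{thm:vectorEIPI} with $Q := U$, taking $\alpha\vecX$ in the role of the theorem's ``$\vecX$'' and $\vecW$ in the role of its ``$\vecW$,'' so that the sum is exactly $\vecY$. The required conditional Markov chain $\alpha\vecX \to \vecY \to V \mid U$ is just $\vecX \to \vecY \to V \mid U$, which follows from the long Markov chain $U \to \vecX \to \vecY \to V$. To massage the resulting inequality into the advertised form, I would use the scaling identity $h(\alpha\vecX \mid U) = h(\vecX \mid U) + n\log|\alpha|$, the identity $I(\alpha\vecX;V\mid U)=I(\vecX;V\mid U)$, the value $h(\vecW \mid U) = \tfrac{n}{2}\log\bigl(2\pi e(1-\rho^2)\sigma_Y^2\bigr)$, and the Gaussian identities $h(\vecX \mid U) = \tfrac{n}{2}\log(2\pi e\sigma_X^2) - I(\vecX;U)$ together with its $\vecY$-analog. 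A common factor of $2\pi e\,\sigma_Y^2$ then cancels from every term, leaving the claimed inequality. The only step requiring genuine care is the independence $\vecW \perp U$; the rest is a matter of choosing the right parameterization and chasing Gaussian entropies.
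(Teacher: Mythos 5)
Your proof is correct and follows essentially the same route as the paper's: decompose $\vecY = \alpha\vecX + \vecW$ with $\vecW$ Gaussian and independent of $\vecX$, then apply Theorem \ref{thm:vectorEIPI} with $Q := U$ and chase the entropy algebra (the paper simply normalizes $\sigma_X = \sigma_Y = 1$ up front, whereas you carry the variances through and cancel $2\pi e\,\sigma_Y^2$ at the end). Your explicit justification that $\vecW \perp U$ — needed so that the hypotheses of Theorem \ref{thm:vectorEIPI} hold conditionally on $U$ and so that $h(\vecW|U)=h(\vecW)$ — is a detail the paper leaves implicit, and it is good that you spelled it out.
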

\begin{proof}
Since mutual information is invariant to scaling, we may assume without loss of generality that $Y_i=\rho X_i+Z_i$, where $X_i\sim N(0,1)$ and $Z_i\sim N(0,1-\rho^2)$, independent of $X_i$.  Now, Theorem \ref{thm:vectorEIPI} implies 
\begin{align}
2^{\frac{2}{n}(h(\vecY|U) - I(\vecX;V|U))} &\geq   2^{\frac{2}{n}(h(\rho \vecX|U) - I(\vecY;V|U))} + 2^{\frac{2}{n}h(\vecZ)}\\
&=\rho^2 2^{\frac{2}{n}(h( \vecX|U) - I(\vecY;V|U))} + (2\pi e)(1-\rho^2).
\end{align}
Since $2^{-\frac{2}{n}h(\vecY)} =   2^{-\frac{2}{n}h(\vecX)} = \frac{1}{2\pi e}$, multiplying through by $\frac{1}{2\pi e}$ establishes the claim.
\end{proof}

\begin{proof}[Proof of Theorem \ref{thm:MTSCGaussian}]  For convenience, put $U = \phi_X(\vecX)$ and $V = \phi_Y(\vecY)$.
Using the Markov relationship $U \to \vecX \to \vecY\to V$, we may  rearrange the exponents in Proposition \ref{propMTSC} to obtain the equivalent inequality 
\begin{align}
2^{-\frac{2}{n} (I(\mathbf{X};U,V) +I(\mathbf{Y};U,V))} \geq 2^{-\frac{2}{n}I(\mathbf{X},\mathbf{Y};U,V) } \left( 1-\rho^2 + \rho^2 2^{-\frac{2}{n}I(\mathbf{X},\mathbf{Y};U,V) }\right) . \label{LRHSmonotone}
\end{align}
The left- and right-hand sides of \eqref{LRHSmonotone} are monotone decreasing in $\frac{1}{n}(I(\mathbf{X};U,V) +I(\mathbf{Y};U,V))$ and $\frac{1}{n}I(\mathbf{X},\mathbf{Y};U,V)$, respectively.  Therefore, if 
\begin{align}
&\frac{1}{n} (I(\mathbf{X};U,V) +I( \mathbf{Y};U,V) ) \geq \frac{1}{2} \log \frac{1}{D} \mbox{~~~and~~~}\frac{1}{n}I(\mathbf{X},\mathbf{Y};U,V) \leq R \label{eqnRD}
\end{align}
for some pair $(R,D)$, then we have $D \geq 2^{-2 R } \left( 1-\rho^2 + \rho^2 2^{-2 R }\right)$,
 which is a quadratic inequality with respect to the term $2^{-2 R }$.  This is easily solved using the quadratic formula to obtain:
\begin{align}
2^{-2R} \leq \frac{2D}{(1-\rho^2)\beta(D)} \quad \Rightarrow \quad R \geq \frac{1}{2}\log \frac{(1-\rho^2)\beta(D)}{2D}, \label{RDineq}
\end{align}
where $\beta(D)\triangleq 1 + \sqrt{1 + \frac{4\rho^2 D}{(1-\rho^2)^2}}$.  Note that Jensen's inequality and the maximum-entropy property of Gaussians imply $ \frac{1}{n}  I(\mathbf{X};U,V)   \geq \frac{1}{2} \log \frac{1}{d_X   }$ and  $ \frac{1}{n}  I(\mathbf{Y};U,V)   \geq \frac{1}{2} \log \frac{1}{d_Y   }$ , so that 
\begin{align}
\frac{1}{n} (I(\mathbf{X};U,V) +I( \mathbf{Y};U,V) ) \geq \frac{1}{2} \log \frac{1}{d_X d_Y },\label{mmseEq}
\end{align}
establishing \eqref{recoveredSumRate} since $\frac{1}{n}I(\mathbf{X},\mathbf{Y};U,V)\leq \frac{1}{n}\left( H(U) + H(V)\right)\leq R_X+R_Y$.  Similarly,    Proposition \ref{propMTSC} implies
\begin{align}
2^{2 R_X + \log d_X } \geq 2^{\frac{2}{n} \left(I(\vecX ;U|V)-I(\vecX ;U,V)\right)}  = 2^{-\frac{2}{n} I(\vecX ;V)}  &\geq (1-\rho^2) + \rho^2 2^{-\frac{2}{n} I( \vecY;V)} \geq (1-\rho^2) + \rho^2 2^{-2 R_Y}.%
\end{align}
Rearranging (and symmetry) yields \eqref{oneHelperX}-\eqref{oneHelperY}.
\end{proof}

\begin{remark}
Proposition \ref{propMTSC} (a special case of Theorem \ref{thm:vectorEIPI}) was first established by the author and Jiao in \cite{courtade2014extremal}.  In fact, Proposition \ref{propMTSC} establishes a stronger result than the converse for the two-terminal Gaussian source coding problem; it shows that the rate regions coincide for the problems when distortion is measured under quadratic loss and logarithmic loss \cite{courtade2014multiterminal, jiao2015justification}.
\end{remark}

\subsection{One-sided Gaussian  Interference Channel}
The one-sided Gaussian interference channel (IC)  (or Z-Gaussian IC) is a discrete memoryless channel, with input-output relationship given by
\begin{align}
Y_1 &= X_1 + W\\
Y_2 &= \alpha Y_1  + X_2+ W_2,
\end{align}
where   $X_i$ and $Y_i$ are the channel inputs and observations corresponding to Encoder $i$ and Decoder $i$, respectively, for $i=1,2$.  Here, $W\sim N(0,1)$ and $W_2\sim N(1-\alpha^2)$ are independent of each other and of  the channel inputs $X_1,X_2$.   We have assumed $|\alpha|<1$ since the setting where $|\alpha|\geq 1$ is referred to as the \emph{strong interference} regime, and the capacity is known to coincide with the Han-Kobayashi inner bound \cite{sato1981capacity, costaInterference, bib:ElGamalYHKim2012, te1981new}.  Observe that we have expressed the one-sided Gaussian IC in \emph{degraded form}, which has  capacity region identical to the corresponding non-degraded version as proved by Costa \cite{costaInterference}.  Despite receiving significant attention from researchers over several decades, the capacity region of the one-sided Gaussian IC remains unknown in the regime of $|\alpha|<1$ described above.  

Having already discussed connections between Costa's EPI  \eqref{eq:costaEPI_cos} and Theorem  \ref{thm:vectorEIPI} above, we remark that  Costa's   EPI  was apparently motivated by  the Gaussian IC \cite{costaInterference}.  Since Theorem \ref{thm:vectorEIPI} generalizes Costa's result, the one-sided Gaussian IC presents itself as a natural application.  Toward this end, we  establish a new multi-letter outer bound to give  a simple demonstration of how Theorem  \ref{thm:vectorEIPI}  might be applied to the  one-sided Gaussian IC. 

\begin{theorem}
$(R_1,R_2) \in \mathscr{C}(\alpha,P_1,P_2)$  only if 
\begin{align}
R_1 &\leq \frac{1}{2}\log(1+P_1)\label{ptpt1}\\
R_2 &\leq \frac{1}{2}\log(1+P_2)\label{ptpt2}\\
2^{-2R_2 + o(1)  }&\geq 2^{- \frac{2}{n} I(X^n_1,X^n_2;Y^n_2)} \sup_{V : Y_1^n \to Y_0^n \to V} \left\{ \alpha^2 2^{2 R_1   -\frac{2}{n} I(Y^n_0;V|Y^n_1)}   +(1-\alpha^2) 2^{\frac{2}{n} I(Y^n_1;V)}\right\},\label{HKsumrateMultLetter}
\end{align}
for some independent $X_1^n,X_2^n$ satisfying the power constraints $\EE[\|X_i^n\|^2]\leq nP_i$, $i=1,2$.
\end{theorem}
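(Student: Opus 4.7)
\emph{Plan.} The two single-letter bounds \eqref{ptpt1} and \eqref{ptpt2} are standard genie-aided upper bounds: \eqref{ptpt1} is Fano's inequality at Decoder 1 combined with the Gaussian maximum-entropy property, while for \eqref{ptpt2} one reveals Encoder 1's message to Decoder 2, reducing the effective channel to $X_2^n$ plus the unit-variance noise $\alpha W^n + W_2^n$. The content of the theorem is the multi-letter bound \eqref{HKsumrateMultLetter}, which I would obtain by a single application of Theorem \ref{thm:vectorEIPI}.

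\emph{Reduction to an entropy-power statement.} Introduce the auxiliary $Y_0^n \triangleq \alpha Y_1^n + W_2^n$, so that $Y_2^n = Y_0^n + X_2^n$ with $X_2^n$ independent of $(Y_1^n, W_2^n)$. Fano at Decoder 2 together with this independence gives $n(R_2 - \epsilon_n) \leq I(X_2^n; Y_2^n) = h(Y_2^n) - h(Y_0^n)$, while $h(Y_2^n \mid X_1^n, X_2^n) = h(\alpha W^n + W_2^n) = \tfrac{n}{2}\log(2\pi e)$ yields $I(X_1^n, X_2^n; Y_2^n) = h(Y_2^n) - \tfrac{n}{2}\log(2\pi e)$. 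Subtracting and exponentiating produces the key reduction
\begin{align}
2^{-2R_2 + o(1)} \geq 2^{-\frac{2}{n} I(X_1^n, X_2^n; Y_2^n)} \cdot \frac{2^{\frac{2}{n} h(Y_0^n)}}{2\pi e}.
\end{align}

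\emph{EPI step.} Fix any $V$ with $Y_1^n \to Y_0^n \to V$; then $\alpha Y_1^n \to Y_0^n \to V$ as well, and the decomposition $Y_0^n = \alpha Y_1^n + W_2^n$ with $W_2^n \sim N(0,(1-\alpha^2)I)$ independent of $Y_1^n$ is precisely the form Theorem \ref{thm:vectorEIPI} requires. Applying it with $\vecX = \alpha Y_1^n$ and $\vecW = W_2^n$, and using $I(\alpha Y_1^n; V) = I(Y_1^n; V)$, $2^{\frac{2}{n} h(\alpha Y_1^n)} = \alpha^2 \cdot 2^{\frac{2}{n} h(Y_1^n)}$, $2^{\frac{2}{n} h(W_2^n)} = 2\pi e(1-\alpha^2)$, together with the Markov-chain identity $I(Y_0^n; V) - I(Y_1^n; V) = I(Y_0^n; V \mid Y_1^n)$, gives
\begin{align}
2^{\frac{2}{n} h(Y_0^n)} \geq \alpha^2 \cdot 2^{\frac{2}{n} h(Y_1^n)} \cdot 2^{-\frac{2}{n} I(Y_0^n; V \mid Y_1^n)} + 2\pi e(1-\alpha^2) \cdot 2^{\frac{2}{n} I(Y_1^n; V)}.
\end{align}
Fano at Decoder 1 yields $h(Y_1^n) \geq nR_1 + \tfrac{n}{2}\log(2\pi e) + o(n)$; dividing the above by $2\pi e$ and substituting into the previous display delivers \eqref{HKsumrateMultLetter} after taking the supremum over admissible $V$ on the right-hand side (legitimate since the left-hand side is $V$-independent).

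\emph{Main obstacle.} The only nonroutine step is identifying the auxiliary $Y_0^n$, which must simultaneously (i) express the Decoder-2 channel in additive form $Y_2^n = Y_0^n + X_2^n$ and (ii) admit the Gaussian decomposition $Y_0^n = \alpha Y_1^n + W_2^n$ through which $R_1$ can be injected via the EPI. Once this auxiliary has been chosen, the rest of the argument is Fano combined with exponent bookkeeping, with the chain-rule identity $I(Y_0^n;V) - I(Y_1^n;V) = I(Y_0^n;V\mid Y_1^n)$ doing the principal algebraic work.
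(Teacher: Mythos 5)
Your proof is correct and follows essentially the same route as the paper's: you identify the same auxiliary $Y_0^n = \alpha Y_1^n + W_2^n$, apply Theorem~\ref{thm:vectorEIPI} with $\vecX = \alpha Y_1^n$, $\vecW = W_2^n$, and use Fano at both decoders together with $h(Y_2^n\mid X_1^n,X_2^n)=\tfrac{n}{2}\log(2\pi e)$ and $h(Y_2^n\mid X_2^n)=h(Y_0^n)$ for the exponent bookkeeping. The only cosmetic difference is that the paper invokes the conditional form of Theorem~\ref{thm:vectorEIPI} with $Q=X_2^n$ on the Markov chain $Y_1^n\to Y_2^n\to V\,|\,X_2^n$ and then de-conditions via the independence identities, whereas you apply the unconditional form directly to $Y_1^n\to Y_0^n\to V$; since $Y_1^n$ and $W_2^n$ are independent of $X_2^n$, the two applications yield identical inequalities.
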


\begin{proof}
The only nontrivial inequality to prove is \eqref{HKsumrateMultLetter}.  Thus, we begin by noting that Theorem \ref{thm:vectorEIPI} implies 
\begin{align}
2^{\frac{2}{n} (h(Y_2^n|X_2^n) - I(Y_1^n;V|X_2^n))} &\geq 2^{\frac{2}{n} (h(\alpha Y_1^n|X_2^n) - I(Y_2^n;V|X_2^n))} + 2^{\frac{2}{n}h(W_2^n|X_2^n)}\\
&=\alpha^2 2^{\frac{2}{n} (h(Y_1^n) - I(Y_2^n;V|X_2^n))} + (1-\alpha^2) 2^{\frac{2}{n}h(W^n)}
\end{align}
for all $V$ such that $Y_1^n \to Y_2^n \to V|X_2^n$. 
Since $h(W^n) = h(Y_2^n|X_1^n,X_2^n) = h(Y_1^n|X_1^n)$,  $I(Y_2^n;V |X_2^n) = I(Y_0^n ; V ,X_2^n)$ and $I(Y_1^n;V |X_2^n) = I(Y_1^n ; V ,X_2^n)$, this can be rewritten as
\begin{align}
2^{-\frac{2}{n}  I(X^n_2;Y^n_2)+ \frac{2}{n} I(X^n_1,X^n_2;Y^n_2)} &\geq \sup_{V : Y_1^n \to Y_0^n \to V} \left\{ \alpha^2 2^{\frac{2}{n}  I(X^n_1;Y^n_1) -\frac{2}{n} I(Y^n_0;V|Y^n_1)}   +(1-\alpha^2) 2^{\frac{2}{n} I(Y^n_1;V)}\right\}.%
\end{align}
Therefore,
\begin{align}
2^{-2(R_2 - \epsilon_n )} &\geq 2^{-\frac{2}{n}  I(X^n_2;Y^n_2) } \label{eq:FanoR2}\\
&\geq 2^{- \frac{2}{n} I(X^n_1,X^n_2;Y^n_2)} \sup_{V : Y_1^n \to Y_0^n \to V} \left\{ \alpha^2 2^{\frac{2}{n}  I(X^n_1;Y^n_1) -\frac{2}{n} I(Y^n_0;V|Y^n_1)}   +(1-\alpha^2) 2^{\frac{2}{n} I(Y^n_1;V)}\right\} \label{eq:applyEPIsupV}\\
&\geq 2^{- \frac{2}{n} I(X^n_1,X^n_2;Y^n_2)} \sup_{V : Y_1^n \to Y_0^n \to V} \left\{ \alpha^2 2^{2(R_1-\epsilon_n)  -\frac{2}{n} I(Y^n_0;V|Y^n_1)}   +(1-\alpha^2) 2^{\frac{2}{n} I(Y^n_1;V)}\right\},  \label{eq:FanoR1}
\end{align}
where \eqref{eq:FanoR2} and \eqref{eq:FanoR1} hold for $\epsilon_n \to 0$ due to  Fano's inequality.   Multiplying both sides by $2^{2\epsilon_n}$ proves the claim.
\end{proof}

The Han-Kobayashi achievable region \cite{te1981new, bib:ElGamalYHKim2012} evaluated for Gaussian inputs (without power control) can be expressed as the set of rate pairs $(R_1,R_2)$ satisfying \eqref{ptpt1}, \eqref{ptpt2} and 
\begin{align}
2^{-2 R_2 } &\geq   \frac{ \alpha^2 \,P_2 \,2^{2R_1 }  }{(P_2 + 1-\alpha^2)(1+\alpha^2 P_1 + P_2)}  + \frac{ 1-\alpha^2}{P_2 + 1-\alpha^2}. \label{HKachSum}
\end{align}
Interestingly, \eqref{HKachSum} this takes a similar form to \eqref{HKsumrateMultLetter}; however, it is known that transmission without power control is suboptimal for the Gaussian Z-interference channel in general \cite{costa2011noisebergs, NairCosta2016corner}.  Nevertheless, it may be possible to identify a random variable $V$ in \eqref{HKsumrateMultLetter}, possibly depending on $X_2$, which ultimately improves known bounds.  We leave this for future work.

\subsection{Relationship to Strong Data Processing}\label{subsec:SDPI}
{Strong data processing} inequalities  and their connection to hypercontractivity have garnered much recent  attention \cite{ahlswede1976spreading, nair2014equivalent, anantharam2013hypercontractivity, raginsky2014strong, polyanskiy2014dissipation, bib:calmon, anantharam2013maximal, courtade2013outer}.  For random variables $A,B$, the standard data processing inequality asserts that $I(V;A)\leq I(V;B)$ for any random variable $V$ satisfying $A\to B\to V$.  For $A,B\sim P_{AB}$, it is natural to define the best-possible data processing function 
\begin{align}
g_I(t, P_{AB}) = \sup_{V :A\to B\to V}\left\{ I(V;A) : I(V;B)\leq t \right\},
\end{align}
so that $I(V;A)\leq g_I( I(V;B), P_{AB})\leq 1$ is the sharpest possible data processing inequality for the joint distribution $P_{AB}$.  Thus, Theorem \ref{thm:scalarEIPI} may be rephrased as:
\begin{align}
2^{2(h(Y) - g_I( t, P_{XY}) )} \geq  2^{2(h(X) - t)} + 2^{2h(W)} ~~~~~\forall t\geq0, \label{strongEPI}
\end{align}
where $Y = X+W$.  Given the close relationship between the sharpened EPI \eqref{strongEPI} and {strong data processing}, it might be appropriate to call Theorem \ref{thm:scalarEIPI}  a \emph{strong entropy power inequality}.  In any case, on rearranging, we find the following simple bound on $g_I$ for Gaussian channels:
\begin{corollary}\label{cor:FIbound}
Let $X\sim P_X$ and $Z\sim N(0,1)$ be independent. For $Y  = X+Z$,
\begin{align}
g_I(t,P_{XY}) \leq I(X;Y) - \frac{1}{2}\log\left( 1 +  \frac{1}{2\pi e}2^{2(h(X) - t)} \right). \label{StrongDPIbound}
\end{align}
Moreover, for Gaussian $X$, the inequality \eqref{StrongDPIbound} is an equality.
\end{corollary}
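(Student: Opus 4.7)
The plan is that the upper bound on $g_I$ is really just Theorem \ref{thm:scalarEIPI} rewritten. I would begin by instantiating Theorem \ref{thm:scalarEIPI} with $W=Z\sim N(0,1)$, so that $2^{2h(W)}=2\pi e$, and with $Y=X+Z$, obtaining
\begin{align}
2^{2(h(Y)-I(X;V))} \geq 2^{2(h(X)-I(Y;V))} + 2\pi e
\end{align}
for every $V$ with $X\to Y\to V$.

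Next I would use the constraint $I(Y;V)\leq t$ built into the definition of $g_I(t,P_{XY})$ to replace $I(Y;V)$ by $t$ on the right-hand side, yielding
\begin{align}
2^{2(h(Y)-I(X;V))} \geq 2^{2(h(X)-t)} + 2\pi e.
\end{align}
Taking $\tfrac12 \log(\cdot)$ of both sides, rearranging, and pulling out a factor of $2\pi e = 2^{2h(Z)}$ from inside the logarithm would give
\begin{align}
I(X;V) \leq h(Y) - h(Z) - \tfrac12 \log\!\left(1 + \tfrac{1}{2\pi e} 2^{2(h(X)-t)}\right),
\end{align}
and recognizing $h(Y)-h(Z)=I(X;Y)$ and taking the supremum over admissible $V$ would establish \eqref{StrongDPIbound}.

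For the equality case with $X\sim N(0,\sigma^2)$, the natural candidate is $V=Y+N$ where $N\sim N(0,\tau^2)$ is independent of $(X,Z)$, with $\tau^2$ chosen so that $I(Y;V)=\tfrac12\log\frac{\sigma^2+1+\tau^2}{\tau^2}=t$. A direct computation using the jointly Gaussian formulas gives $I(X;V)=\tfrac12\log\frac{\sigma^2+1+\tau^2}{1+\tau^2}$, and the useful identity $\sigma^2+1+\tau^2+\sigma^2\tau^2=(1+\sigma^2)(1+\tau^2)$ shows that the right-hand side of \eqref{StrongDPIbound} collapses to exactly this value; I would just verify this identity and the two mutual-information formulas.

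There is no serious obstacle here — the corollary is essentially a bookkeeping consequence of Theorem \ref{thm:scalarEIPI}. The one mildly nontrivial step is guessing the right $V$ for equality, but the linear Gaussian test channel $V=Y+N$ is natural, and the algebraic identity above makes the match exact.
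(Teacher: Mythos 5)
Your proposal is correct and takes essentially the same route the paper implicitly uses: the inequality \eqref{StrongDPIbound} is obtained by taking $W=Z\sim N(0,1)$ in Theorem~\ref{thm:scalarEIPI}, replacing $I(Y;V)$ by the upper bound $t$ on the right, dividing by $2\pi e = 2^{2h(Z)}$, taking logs, and identifying $h(Y)-h(Z)=I(X;Y)$, then taking the supremum over admissible $V$. The paper states the corollary without a written proof (it is flagged as a rearrangement of the ``strong EPI'' form \eqref{strongEPI}), so the only thing you add is the equality check for Gaussian $X$; your choice of the linear Gaussian test channel $V=Y+N$, the mutual-information computations, and the identity $\sigma^2+1+\tau^2+\sigma^2\tau^2=(1+\sigma^2)(1+\tau^2)$ are all correct, and together with the proved upper bound they do establish equality for all $t\ge 0$ (with the $t=0$ boundary covered by $V$ constant).
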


We remark that Calmon, Polyanskiy and Wu \cite{polyanskiy2014dissipation, bib:calmon} have recently considered a complementary setting where they bound the best-possible data processing function defined according to
\begin{align}
F_I(t,\gamma) = \sup \left\{ I(Y;U)  : I(X;U)\leq t, U \to X \to Y \right\},
\end{align}
where $Y=X+Z$, and the supremum is over all $P_{UX}$ such that $\EE[X^2]\leq \gamma$. %

\section{Proof of Main Results}\label{sec:proofs}%
Here we give the main ideas behind proving Theorem  \ref{thm:scalarEIPI}.  Technical details are provided in Section \ref{sec:technicalProofs} and referred to as needed.   For random variables $X,Y\sim P_{XY}$, we write $X|\{Y=y\}$  to denote the random variable $X$ conditional on $\{Y=y\}$.  Note that $X|\{Y=y\}$ is uniquely defined in the sense that different versions of the same are equal $P_Y$-a.e.  A sequence of random variables $X_1, X_2, \dots$ indexed by $n\in \mathbb{N}$ will be denoted by the shorthand $\{X_n\}$, and convergence of $\{X_n\}$ in distribution to a random variable $X_*$ is written $X_n  \xrightarrow{\mathcal{D}} X_*$.

In order to minimize the difference in inequality \eqref{EPItoProve}, we would like to simultaneously minimize the exponent $h(X+W)-I(X;V)$, while maximizing the exponent $h(X)-I(X+W;V)$ over all valid choices of $X,V$.  Toward this end, for a random variable $X\sim P_X$, let $Y$ be defined via the additive Gaussian noise channel $P_{Y|X}$ given by $Y = \sqrt{\snr}X+Z$, where $Z\sim N(0,1)$ and define the family of functionals
\begin{align}
\mathsf{s}_{\lambda}(X,\snr) = -h(X)+\lambda h(Y)+  \inf_{V : X\to Y \to V   } \Big\{ I(Y;V) -\lambda I(X;V)\Big\}
\end{align}
parameterized by $\lambda\geq1$.  Similarly, for $(X,Y,Q) \sim P_{XQ}P_{Y|X}$, define the functional of $P_{XQ}$ 
\begin{align}
\mathsf{s}_{\lambda}(X,\snr|Q) =  -h(X|Q)+\lambda h(Y|Q)+  \inf_{{V : X\to Y \to V |Q } } \Big\{ I(Y;V|Q) -\lambda I(X;V|Q)\Big\}, 
\end{align}
and let $\C\left( \mathsf{s}_{\lambda}(X,\snr)  \right)$ denote the lower convex envelope of $\mathsf{s}_{\lambda}(\cdot,\snr)$ at $X$.  That is, 
\begin{align}
\C\left( \mathsf{s}_{\lambda}(X,\snr)  \right) = \inf_{P_{Q|X}} \mathsf{s}_{\lambda}(X,\snr|Q). 
\end{align}
We consider the optimization problem 
\begin{align}
\mathsf{V}_{\lambda}(\snr) = \inf_{P_X \,:\, \EE[X^2]\leq 1} \C\left( \mathsf{s}_{\lambda}(X,\snr)  \right) = \inf_{P_{XQ} \,:\, \EE[ X^2 ] \leq 1 } \mathsf{s}_{\lambda}(X,\snr|Q). \label{entropOptimization}
\end{align}
\begin{remark}
Note that, in the optimization problem \eqref{entropOptimization}, it suffices to consider $Q\in \mathcal{Q}$ with $|\mathcal{Q}|\leq 2$. Indeed, by Fenchel-Caratheodory-Bunt \cite[Theorem 18$(ii)$]{eggleston1958convexity}, taking $Q$ supported on two points is sufficient to preserve the values of $\EE[X^2]= \sum_{q}p(q) \EE[X^2|Q=q]$ and $\mathsf{s}_{\lambda}(X,\snr|Q) = \sum_{q}p(q) \mathsf{s}_{\lambda}(X,\snr|Q=q)$.
\end{remark}

We have the following explicit characterization of $\mathsf{V}_{\lambda}(\snr)$:
\begin{theorem}\label{thm:explicitVsnr}
\begin{align}
\mathsf{V}_{\lambda}(\snr) =   
\begin{cases}
\frac{1}{2}\left[ \lambda  \log \left( \frac{\lambda 2\pi e }{\lambda-1}\right) - \log \left( \frac{2\pi e }{\lambda-1}\right) + \log (\snr ) \right] & \mbox{if $\snr \geq \frac{1}{\lambda-1}$}\\
 \frac{1}{2}\Big[  \lambda \log\left(2\pi e (1+\snr) \right) -\log\left(2\pi e  \right) \Big] & \mbox{if $\snr \leq \frac{1}{\lambda-1}$}.
\end{cases}
\end{align}
\end{theorem}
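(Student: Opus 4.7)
The plan is to prove matching upper and lower bounds on $\mathsf{V}_\lambda(\snr)$. The upper bound is obtained by direct Gaussian construction; the lower bound reduces, via a rotational-invariance (``doubling'') argument in the spirit of Geng--Nair, to the Gaussian case.

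\emph{Upper bound.} For $\snr \leq 1/(\lambda-1)$, I would take $X \sim N(0,1)$ and $V$ constant: then $\mathsf{s}_\lambda(X,\snr) = -h(X) + \lambda h(Y)$, which by a one-line Gaussian computation matches the second-case expression. For $\snr \geq 1/(\lambda-1)$, I would again take $X \sim N(0,1)$ and set $V = Y + \sqrt{\tau}\, Z'$ with $Z' \sim N(0,1)$ independent. All of $h(X), h(Y), I(X;V), I(Y;V)$ are then scalar Gaussian logs, and minimizing the resulting expression over $\tau > 0$ produces a unique stationary point $\tau^\star = (1+\snr)/((\lambda-1)\snr - 1) > 0$ (positive precisely in this regime); substituting $\tau^\star$ yields the first-case expression. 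Continuity at the threshold follows from $\tau^\star \to \infty$ as $\snr \downarrow 1/(\lambda-1)$, at which point $V$ degenerates.

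\emph{Lower bound.} A standard weak-compactness and lower-semicontinuity argument (exploiting the second-moment constraint and the regularity of $\mathsf{s}_\lambda$ under weak convergence) first reduces the problem to an attained optimum $(X^*, V^*, Q^*)$ with $\mathsf{s}_\lambda(X^*, \snr|Q^*) = \mathsf{V}_\lambda(\snr)$. The goal is then to show $P_{X^*|Q^*}$ must be Gaussian. I would take two copies $(X_i^*, Y_i^*, V_i^*)$, $i=1,2$, conditionally i.i.d.\ given $Q^*$, and rotate by $\pi/4$: $\tilde X_\pm = (X_1^* \pm X_2^*)/\sqrt 2$, $\tilde Y_\pm = \sqrt{\snr}\,\tilde X_\pm + \tilde Z_\pm$, with $\tilde Z_\pm = (Z_1 \pm Z_2)/\sqrt 2$ still i.i.d.\ $N(0,1)$ by rotational invariance of standard-normal noise. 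The key Markov structure $\tilde X_\pm \to \tilde Y_\pm \to (V_1^*, V_2^*) \mid \tilde X_\mp, Q^*$ holds for both sign choices, because each $V_i^*$ is a function of $Y_i = (\tilde Y_+ \pm \tilde Y_-)/\sqrt 2$ plus independent randomness and $\tilde Y_\mp \mid (\tilde X_\mp, \tilde Y_\pm) \sim N(\sqrt{\snr}\,\tilde X_\mp, 1)$ independent of $\tilde X_\pm$. Using $(V_1^*, V_2^*)$ as the candidate auxiliary in the two conditional functionals and invoking the rotational invariance of the joint differential entropies $h(X_1^*, X_2^* | Q^*)$, $h(Y_1^*, Y_2^* | Q^*)$ and of the conditional entropies $h(X_1^*, X_2^* | V_1, V_2, Q^*)$, $h(Y_1^*, Y_2^* | V_1, V_2, Q^*)$, the bookkeeping reduces the doubled functional to a sandwich of the form
\begin{align*}
2\mathsf{V}_\lambda(\snr) \;\leq\; \mathsf{s}_\lambda(\tilde X_+, \snr \mid \tilde X_-, Q^*) + \mathsf{s}_\lambda(\tilde X_-, \snr \mid \tilde X_+, Q^*) \;\leq\; 2\mathsf{s}_\lambda(X^*, \snr \mid Q^*) = 2\mathsf{V}_\lambda(\snr),
\end{align*}
where the leftmost inequality uses the convex-envelope characterization of $\mathsf{V}_\lambda$ with conditioning variable $(\tilde X_\mp, Q^*)$ (the moment constraint $\EE[\tilde X_\pm^2 \mid Q^*] = \EE[(X^*)^2 \mid Q^*] \leq 1$ is preserved under rotation), and the rightmost inequality is the doubling inequality, whose proof amounts to a careful entropic identity combining rotational invariance with subadditivity. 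Tightness throughout then forces equality in the associated conditional EPI applied to $(X_1^*, X_2^*)$, and the rigidity of the EPI characterizes $P_{X^* \mid Q^*}$ as Gaussian.

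\emph{Finishing the computation and main obstacle.} Once $X^* \mid Q^*$ is Gaussian, for each $q$ the inner infimum over $V$ reduces exactly to the scalar Gaussian sub-problem handled in the upper bound. Convexity in the conditional variance $\sigma_q^2 \leq 1$ makes the averaging over $Q^*$ trivial and produces the claimed two-piece minimum: interior optimum $\sigma^{\star 2} = 1/((\lambda-1)\snr)$ for $\snr \geq 1/(\lambda-1)$ and boundary optimum $\sigma^{\star 2} = 1$ otherwise. The principal technical obstacle is the rightmost inequality in the sandwich together with the subsequent rigidity extraction: producing a clean Gaussian characterization of $P_{X^* \mid Q^*}$ from tightness in the doubling inequality, while simultaneously tracking the infimum over $V$ as the rotation passes through $V_1^*, V_2^*$. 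This is where the Geng--Nair-style machinery must do all the heavy lifting, and I expect it to constitute the bulk of the detailed technical work.
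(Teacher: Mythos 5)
Your overall plan --- Gaussian upper bound via Proposition \ref{prop:InfoBottleExplicit}-type computation, and a lower bound obtained by restricting to Gaussian $X$ via a rotational-invariance (doubling) argument --- is the same in spirit as the paper's proof, which establishes Claims I and II of Section \ref{sec:proofs} and then finishes by calculus. However, two steps in your plan do not hold up as described, and the first one is a genuine gap.

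First, the claim that ``a standard weak-compactness and lower-semicontinuity argument'' yields an \emph{attained} optimizer $(X^*,V^*,Q^*)$ is not available here, and the paper deliberately avoids it. The functional $\mathsf{s}_{\lambda}(X,\snr) = -h(X) + \lambda h(Y) + \inf_{V} \{ I(Y;V)-\lambda I(X;V)\}$ involves an infimum over $V$ of a quantity that is neither lower nor upper semicontinuous under weak convergence of $P_{XY V}$, and $-h(X)$ is itself badly behaved under weak limits. Establishing weak lower semicontinuity of the inner infimum is the hardest technical point in the paper (Lemma \ref{lem:weakContinuityF}), and even then it is shown only when the weak limit $X_*$ is Gaussian --- which is exactly what you have not yet established at this stage. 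This creates a circularity in your plan: you invoke LSC to get an attained minimizer, then use the doubling/rigidity at the attained minimizer to prove the minimizer is Gaussian, but LSC at the limit is only known to hold if the limit is already Gaussian. The paper breaks this circle by never claiming attainment: it works directly with a minimizing sequence $\{X_n,Q_n\}$ (Claim I / Lemma \ref{lem:ExistsGaussianSequence}), proves that this sequence can be taken to converge weakly to a conditionally Gaussian law, and only then invokes LSC \emph{at the Gaussian limit} (Claim II) to close the lower bound.

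Second, the mechanism you propose for extracting Gaussianity (``tightness forces equality in the associated conditional EPI, and the rigidity of the EPI characterizes $P_{X^*|Q^*}$ as Gaussian'') is not what makes the argument work, nor is it immediate from tightness in the doubling inequality. The nonnegative residue in the doubling identity (Lemma \ref{lem:additivityofV}) is $(\lambda-1) I(X_2;Y_1 \mid V, Q)$, which is a conditional mutual information, not an EPI deficit, and its vanishing along the sequence does not directly give Gaussianity. The paper instead imposes an additional minimality property on the sequence --- it is chosen to asymptotically minimize the secondary objective $h(Y_n|Q_n)-h(X_n|Q_n)$ over all admissible sequences --- and then uses the identity \eqref{sandwichMI}, which shows that time-sharing between the rotated pairs strictly decreases this secondary objective by $\tfrac{1}{2} I(X_{+,n};X_{-,n}\mid Y_{+,n},Y_{-,n},\mathbf{Q}_n)$. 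Minimality of the original sequence therefore forces $\liminf_n I(X_{+,n};X_{-,n}\mid Y_{+,n},Y_{-,n},\mathbf{Q}_n)=0$, and this (via the weak-convergence version of Bernstein's theorem, Lemmas \ref{lem:WLSC_condMI}--\ref{lem:GaussianCharMI}) is what delivers the Gaussianity of the weak limit. Bernstein's theorem, not EPI rigidity, is the engine. Relatedly, the symmetric conditioning $\tilde X_{\mp}$ you write for both rotated terms is not the one that falls out of the chain-rule decomposition; the natural bookkeeping (Lemma \ref{lem:additivityofV}) produces an asymmetric $(X_-, Y_+)$ (or $(Y_-, X_+)$) pair of conditionings, and the paper keeps both of these forms.

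A minor remark on your finishing step: the paper shows that $\gamma = \sigma_X^2 = 1$ minimizes $\mathsf{s}_{\lambda}(X_{\gamma},\snr)$ over $\gamma \in [0,1]$ in \emph{both} regimes --- the expression is decreasing in $\gamma$ where $\gamma\snr \leq \frac{1}{\lambda-1}$ and constant in $\gamma$ where $\gamma\snr \geq \frac{1}{\lambda-1}$ --- so the ``interior optimum $\sigma^{\star 2} = 1/((\lambda-1)\snr)$'' you describe is at best one point on a flat stretch, not a distinguished minimizer, and it is cleaner to state that the boundary choice $\gamma=1$ always works.
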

The essential idea needed to establish Theorem \ref{thm:explicitVsnr} is that we only need to consider Gaussian random variables in optimization problem \eqref{entropOptimization}.  We establish this  using a weak convergence argument;  the critical ingredients are proved in Sections \ref{subsec:gaussiansequence} and \ref{subsec:weakSemiContinuity}, and respectively assert:
\begin{enumerate} 
\item[\underline{Claim I}:] There exists a sequence $\{X_n,Q_n\}$ satisfying
\begin{align}
\lim_{n\to \infty}  \mathsf{s}_{\lambda}(X_n,\snr|Q_n) &= \mathsf{V}_{\lambda}(\snr) \\
\EE[X_n^2] &\leq 1 ~~\,n= 1,2,\dots 
\end{align}
and $(X_n,Q_n) \xrightarrow{\mathcal{D}} (X_*,Q_*)$,   with $X_*|\{Q_*=q\} \sim N(\mu_q, \sigma_X^2 )$ for  $P_{Q_*}$-a.e. $q$, with  $\sigma_X^2\leq 1$ not depending on $q$. 
\item[\underline{Claim II}:] If $X_n  \xrightarrow{\mathcal{D}} X_* \sim N(\mu,\sigma_X^2) $ and $\sup_n \EE[X_n^2] <\infty$, then
\begin{align}
\liminf_{n\to \infty} \mathsf{s}_{\lambda}(X_n,\snr)  \geq  \mathsf{s}_{\lambda}(X_*,\snr) .
\end{align}
\end{enumerate}
In words, Claim I states that there exists a sequence $\{X_n,Q_n\}$ which approaches the infimum of the optimization problem \eqref{entropOptimization}, with $X_n$ converging weakly to Gaussian.  Claim II notes that the functional $\mathsf{s}_{\lambda}(X,\snr)$ is weakly lower semicontinuous at Gaussian $X$.  Combining the two claims allows us to restrict attention to Gaussian $X$ in optimization problem  \eqref{entropOptimization}.

With these facts in hand, the proof of Theorem \ref{thm:explicitVsnr} follows from elementary calculus and the classical EPI.  We require the following proposition, which is a consequence of the conditional EPI, and a dual formulation of an inequality observed by  Oohama \cite{bib:Oohama1997}.
\begin{proposition}  \label{prop:InfoBottleExplicit}
Let $X\sim N(0,\gamma)$ and $Z\sim N(0,1)$ be independent, and define $Y=  \sqrt{\snr} X+ Z$.  Then for $\lambda \geq 1$,
\begin{align}
\inf_{V : X \to Y\to V} \Big(I(Y;V)-\lambda I(X;V)\Big) = 
\begin{cases}
\frac{1}{2}\left[\log\left((\lambda-1)\gamma \, \snr\right)-\lambda\log 
\left(\frac{\lambda-1}{\lambda}\left(1+ \gamma \, \snr \right)\right)\right] & \mbox{if $\gamma\, \snr \geq \frac{1}{\lambda-1}$}\\
0 & \mbox{if $\gamma\, \snr \leq \frac{1}{\lambda-1}$.}
\end{cases}\notag
\end{align}
\end{proposition}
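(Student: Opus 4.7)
The plan is to reduce the infimum to a one-variable calculus problem via a Gaussian instance of an Oohama-type inequality bounding $I(X;V)$ by $I(Y;V)$.

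First, I would establish that for every $V$ satisfying $X \to Y \to V$,
\begin{equation}
I(X;V) \leq \phi(I(Y;V)), \qquad \phi(R) := \tfrac{1}{2}\log\frac{1+\gamma\,\snr}{1+\gamma\,\snr\cdot 2^{-2R}}.
\end{equation}
The key is the MMSE-style decomposition $X = aY + X'$ with $a = \sqrt{\snr}\,\gamma/(1+\gamma\,\snr)$ and $X' := X - aY \sim N(0, \gamma/(1+\gamma\,\snr))$. Because $(X,Y,Z)$ are jointly Gaussian and $X'$ is uncorrelated with $Y$, $X'$ is independent of $Y$; the Markov chain $X\to Y\to V$ then upgrades this to $X' \perp (Y,V)$ jointly, so in particular $aY$ and $X'$ are conditionally independent given $V$, and the conditional EPI gives
\begin{equation}
2^{2h(X|V)} = 2^{2h(aY + X'|V)} \geq a^2 \cdot 2^{2h(Y|V)} + 2^{2h(X')}.
\end{equation}
Using $h(X|V) = h(X) - I(X;V)$ and $h(Y|V) = h(Y) - I(Y;V)$, substituting the explicit Gaussian entropies, and dividing through by $2\pi e\gamma$ yields the displayed inequality after one line of algebra.

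Next, setting $R := I(Y;V) \geq 0$, I get $I(Y;V) - \lambda I(X;V) \geq R - \lambda\phi(R)$. A short computation gives $\phi'(R) = \gamma\,\snr\cdot 2^{-2R}/(1+\gamma\,\snr\cdot 2^{-2R})$, which is decreasing in $R$, so $\phi$ is concave and $R \mapsto R - \lambda\phi(R)$ is convex. The stationarity equation $\phi'(R^*) = 1/\lambda$ has the unique solution $R^* = \tfrac{1}{2}\log((\lambda-1)\gamma\,\snr)$, which lies in $[0,\infty)$ exactly when $\gamma\,\snr \geq 1/(\lambda-1)$. Substituting $R^*$ in that regime and simplifying produces the first branch of the formula; in the complementary regime, $R - \lambda\phi(R)$ is nondecreasing on $[0,\infty)$ with minimum value $0$ at $R = 0$, giving the second branch.

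Finally, I would verify achievability. In the regime $\gamma\,\snr \leq 1/(\lambda-1)$, take $V$ constant so that $I(Y;V) = I(X;V) = 0$. In the nontrivial regime, take $V = Y + N$ with $N \sim N(0,\sigma^2)$ independent of $(X,Z)$ and $\sigma^2 = (1+\gamma\,\snr)/((\lambda-1)\gamma\,\snr - 1)$; a direct computation of the jointly Gaussian mutual informations shows $I(Y;V) = R^*$ with equality in Oohama's bound, so $I(Y;V) - \lambda I(X;V)$ matches the lower bound. The one substantive step is the conditional-EPI derivation of Oohama's inequality via the $X = aY + X'$ decomposition; the remainder is elementary calculus and a Gaussian verification.
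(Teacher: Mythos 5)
Your proposal is correct and follows essentially the same route as the paper: you use the same orthogonal decomposition $X = aY + X'$ (the paper calls $a = \rho$ and $X' = W$), the same conditional-EPI bound on $2^{2h(X|V)}$, the same reduction to a one-variable minimization over $R = I(Y;V)$, and the identical achieving $V = Y + N(0,\sigma^2)$. The only superficial differences are that you package the pointwise EPI plus Jensen as a single "conditional EPI" step and make the joint independence $X'\perp(Y,V)$ explicit, whereas the paper phrases the optimization over $I(Y;V)$ slightly more informally.
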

\begin{proof}
Let $V$ be such that $X \to Y\to V$, and let $X|\{V=v\}, Y|\{V=v\}$ denote the random variables conditioned on $\{V=v\}$.  Since $X,Y$ are jointly Gaussian and $V\to Y \to X$, we have $X|\{V=v\} = \rho Y|\{V=v\} + W$, where $\rho := \frac{\gamma \sqrt{\snr}}{1 + \gamma \,\snr }$ and $W\sim N\left(0,\gamma -  \frac{\gamma^2  \,\snr}{1 + \gamma \,\snr }\right)$ is independent of $Y|\{V=v\}$.  By the entropy power inequality, it holds that
\begin{align}
2^{2h(X|V=v)}\geq  2^{2h(\rho Y|V=v)} + 2^{2 h(W)} =   \frac{\gamma^2\, {\snr}}{(1 + \gamma \,\snr )^2} 2^{2h(Y|V=v)} + 2 \pi e \left( \gamma -  \frac{\gamma^2  \,\snr}{1 + \gamma \,\snr }\right) 
\end{align}
which,  upon applying Jensen's inequality and rearranging,  yields
\begin{align}
2^{-2 I(X;V)}\geq \frac{1 +  \gamma \, \snr  \,2^{-2I(Y;V)}}{ 1 + \gamma \,\snr }. 
\end{align}
It follows that
\begin{align}
I(Y;V)-\lambda I(X;V) &\geq I(Y;V) + \frac{\lambda}{2} \log \left(1 +  \gamma  \, \snr  \, 2^{-2I(Y;V)} \right)  -\frac{\lambda}{2} \log(1 + \gamma \, \snr)\\
&\geq \begin{cases}
\frac{1}{2}\left[\log\left((\lambda-1)\gamma \, \snr\right)-\lambda\log 
\left(\frac{\lambda-1}{\lambda}\left(1+ \gamma \, \snr \right)\right)\right] & \mbox{if $\gamma\, \snr > \frac{1}{\lambda-1}$}\\
0 & \mbox{if $\gamma\, \snr \leq \frac{1}{\lambda-1}$,}
\end{cases}
\end{align}
where the second inequality follows by minimizing over $I(Y;V)\geq 0$.  When $\gamma\, \snr \leq \frac{1}{\lambda-1}$, this is trivially achieved by setting $V = \mathsf{constant}$.  On the other hand, if $\gamma\, \snr > \frac{1}{\lambda-1}$, then it is easy to see that the lower bound is achieved by taking $V = Y + U$, 
where $U\sim N(0,\frac{1 +\gamma \, \snr}{\gamma \, \snr (\lambda -1) -1})$.
\end{proof}

\begin{proof}[Proof of Theorem \ref{thm:explicitVsnr}]
Noting that $\mathsf{s}_{\lambda}(X,\snr)$ is invariant to translations of $\EE[X]$, it follows from Claims I and II that
\begin{align}
\mathsf{V}_{\lambda}(\snr) = \inf_{0 \leq \gamma \leq 1 } \mathsf{s}_{\lambda}(X_{\gamma},\snr),~~~~\mbox{where }X_{\gamma}\sim N(0,\gamma).
\end{align}
Recalling the definition of $\mathsf{s}_{\lambda}(\,\cdot\,,\snr)$, Proposition \ref{prop:InfoBottleExplicit} implies
\begin{align}
 \mathsf{s}_{\lambda}(X_{\gamma},\snr)= \begin{cases}
\frac{1}{2}\left[\lambda\log 
\left(\frac{\lambda 2 \pi e}{\lambda-1} \right) -\log\left(\frac{2\pi e}{\lambda-1}\right)+ \log (\snr ) \right]& \mbox{if $\gamma\, \snr \geq \frac{1}{\lambda-1}$}\\
\frac{1}{2}\left[\lambda\log 
\left(2 \pi e \left(1+ \gamma\, \snr\right)\right)-\log\left(2 \pi e \gamma \right)\right]  & \mbox{if $\gamma\, \snr \leq \frac{1}{\lambda-1}$.}
\end{cases}
\end{align}
Differentiating with respect to the quantity $\gamma $, we find that $\frac{1}{2}\left[\lambda\log 
\left(2 \pi e \left(1+ \gamma\, \snr\right)\right)-\log\left(2 \pi e \gamma \right)\right]$ is decreasing in $\gamma $ provided $\gamma\, \snr \leq \frac{1}{\lambda-1}$.  Therefore, taking $\gamma=1$ minimizes $\mathsf{s}_{\lambda}(X_{\gamma},\snr)$ over the interval $\gamma\in[0,1]$, proving the claim.
\end{proof}

Given the explicit characterization of  $\mathsf{V}_{\lambda}(\snr)$, which is a dual form of inequality \eqref{EPItoProve}, we are now in a position to prove Theorem \ref{thm:scalarEIPI}.
\begin{proof}[Proof of Theorem \ref{thm:scalarEIPI}]
We  first establish \eqref{EPItoProve} under the additional assumption that $\EE[X^2]<\infty$, and generalize at the end via a truncation argument.  Toward this goal, since mutual information is invariant to scaling, it is sufficient to prove that, for $Y = \sqrt{\snr} X+Z$ with $\EE[X^2]\leq 1$ and $Z\sim N(0,1)$ independent of $X$, we have
\begin{align}
2^{2(h(Y) - I(X;V))} \geq \snr \, 2^{2(h(X) - I(Y;V))} + 2^{2h(Z)}\label{preInequality}
\end{align}
for $V$ satisfying $X \to Y \to V$.  Multiplying both sides by $\sigma^2$ and choosing $\snr := \frac{\Var(X)}{\sigma^2}$  gives the desired inequality \eqref{EPItoProve} when $\EE[X^2]<\infty$.   Thus, to prove \eqref{preInequality}, observe by definition of $\mathsf{V}_{\lambda}(\snr)$ that
\begin{align}
-h(X)+  I(Y;V)   \geq \lambda( I(X;V)-h(Y ) ) +\mathsf{V}_{\lambda}(\snr). \label{fnlToMinimize}
\end{align}
Minimizing the RHS over $\lambda$ proves the inequality.  In particular, the RHS of \eqref{fnlToMinimize} is concave in $\lambda$, with derivative given by
\begin{align}
\frac{\partial}{\partial \lambda}  \Big\{ \lambda( I(X;V)-h(Y ) ) +\mathsf{V}_{\lambda}(\snr) \Big\}=
\begin{cases}
I(X;V)-h(Y )  + \frac{1}{2}\log \left(\frac{\lambda 2 \pi e}{\lambda-1} \right)  &\mbox{if $\snr \geq \frac{1}{\lambda-1}$}\\
I(X;V)-h(Y )  + \frac{1}{2}  \log\left(2\pi e (1+\snr) \right) & \mbox{if $\snr < \frac{1}{\lambda-1}$.}
\end{cases}\notag
\end{align}
Since $h(Y )  \leq \frac{1}{2}  \log\left(2\pi e (1+\snr) \right)$ by the maximum entropy property of Gaussians, it follows that $I(X;V)-h(Y )  + \frac{1}{2}  \log\left(2\pi e (1+\snr) \right) \geq 0$, implying that $\frac{\partial}{\partial \lambda}  \Big\{ \lambda( I(X;V)-h(Y ) ) +\mathsf{V}_{\lambda}(\snr) \Big\}=0$ for $\lambda$ satisfying $\snr \geq \frac{1}{\lambda-1}$.  In particular, the  RHS of \eqref{fnlToMinimize} is minimized when $\lambda$ satisfies
\begin{align}
\frac{\lambda}{\lambda-1} = \frac{1}{2\pi e} 2^{-2(I(X;V)-h(Y) )}.
\end{align}
Substituting into \eqref{fnlToMinimize} and recalling that $2^{2 h(Z)} = 2\pi e$ proves \eqref{preInequality}.

Now, we eliminate the assumption that $\EE[X^2]<\infty$. Toward this end, let $X$ have density, let $W$ be Gaussian independent of $X$,  and consider $V$ satisfying $X \to Y \to V$, where $Y=X+W$.  Define  $X_n$  to be the random variable $X$ conditioned on the event  $\{|X| \leq n\}$, let $Y_n = X_n + W$ and define $V_n$ via  $P_{V|Y} : Y_n \mapsto V_n$. Since $X_n$ is bounded, $\EE[X_n^2]<\infty$ so that
\begin{align}
2^{2(h(Y_n) - I(X_n;V_n))} \geq  2^{2(h(X_n) - I(Y_n;V_n))} + 2^{2h(W)}.
\end{align}
It follows by \cite[Lemma 3]{bib:Bobkov} %
that $\lim_{n\to \infty}h(X_n) = h(X)$, provided $h(X)$ exists.  Moreover, since $X_n  \xrightarrow{\mathcal{D}}  X$,  Lemma \ref{lem:weakConvergenceGaussian} (see Section \ref{subsec:GaussianPerturbation}) asserts that  $\lim_{n\to\infty}h(X_n +W) = h(X+W)$, so that $h(Y_n)\to h(Y)$.  It is easy to see that $(X_n,V_n) \xrightarrow{\mathcal{D}} (X,V)$, so  $\liminf_{n\to\infty} I(X_n;V_n)\geq I(X;V)$ by lower semicontinuity of relative entropy.  Finally, the chain rule for mutual information implies 
\begin{align}
I(Y;V) + H( \mathds{1}_{\{|X|\leq n\}} ) \geq I(Y;V | \mathds{1}_{\{|X|\leq n\}} ) \geq I(Y_n;V_n ) \PP\{|X|\leq n\},
\end{align}
giving $\limsup_{n\to\infty} I(Y_n;V_n)\leq I(Y;V)$.  Putting these observations together, we have established
\begin{align}
2^{2(h(Y) - I(X;V))} \geq  2^{2(h(X) - I(Y;V))} + 2^{2h(W)}
\end{align}
as desired.
\end{proof}

\section{Extension to Random Vectors}

The vector generalization of the classical EPI is usually proved by a combination of conditioning, Jensen's inequality and induction (e.g., \cite[Problem 2.9]{bib:ElGamalYHKim2012}). The same argument does not appear to readily apply in generalizing Theorem \ref{thm:scalarEIPI} to its vector version due to complications arising from the Markov constraint $\vecX\to (\vecX+\vecW)\to V$.  However, the desired generalization may be established by noting an additivity property enjoyed by the dual form. 

For a random vector $\vecX \sim P_{\vecX}$, let $\vecY$ be defined via the additive Gaussian noise  channel $\vecY = \Gamma^{1/2} \vecX+\vecZ$, where $\vecZ\sim N(0,I)$ is independent of $\vecX$ and $\Gamma$ is a diagonal matrix with nonnegative diagonal entries.  Analogous to the scalar case,  define the family of functionals
\begin{align}
\mathsf{s}_{\lambda}(\vecX,\Gamma) = -h(\vecX)+\lambda h(\vecY)+  \inf_{V : \vecX\to \vecY \to V   } \Big\{ I(\vecY;V) -\lambda I(\vecX;V)\Big\}
\end{align}
parameterized by $\lambda\geq1$.  Similarly, for $(\vecX,\vecY,Q) \sim P_{\vecX Q}P_{\vecY |\vecX}$, define 
\begin{align}
\mathsf{s}_{\lambda}(\vecX,\Gamma|Q) =  -h(\vecX|Q)+\lambda h(\vecY|Q)+  \inf_{{V : \vecX\to \vecY \to V |Q } } \Big\{ I(\vecY;V|Q) -\lambda I(\vecX;V|Q)\Big\}, 
\end{align}
and  
consider the optimization problem 
\begin{align}
\mathsf{V}_{\lambda}(\Gamma) =   \inf_{P_{\vecX Q} \,:\, \EE[X_i^2]\leq 1, i\in [n]} \mathsf{s}_{\lambda}(\vecX,\Gamma|Q). \label{entropOptimization2}
\end{align}

\begin{theorem} \label{thm:VlamVector}
If $\Gamma = \operatorname{diag}(\snr_1, \snr_2, \dots, \snr_n)$, then
\begin{align}
\mathsf{V}_{\lambda}(\Gamma)   =\sum_{i=1}^n \mathsf{V}_{\lambda}(\snr_i)  .
\end{align}
\end{theorem}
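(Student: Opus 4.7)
The plan is to establish $\mathsf{V}_\lambda(\Gamma) = \sum_i \mathsf{V}_\lambda(\snr_i)$ as two matching inequalities. The upper bound $\mathsf{V}_\lambda(\Gamma)\leq\sum_i\mathsf{V}_\lambda(\snr_i)$ I would prove by tensorization: for each $i$ choose $\varepsilon$-minimizers $(X_i^\varepsilon,Q_i^\varepsilon,V_i^\varepsilon)$ for the scalar problem defining $\mathsf{V}_\lambda(\snr_i)$, take these triples mutually independent across $i$, and assemble them into $(\vecX,Q,V)$. Since $\Gamma$ is diagonal, $\vecY$ inherits the product form, the Markov constraint $\vecX\to\vecY\to V|Q$ holds, and additivity of entropy and mutual information across independent coordinates decouples every term in $\mathsf{s}_\lambda(\vecX,\Gamma|Q)$ into a sum of scalar contributions within $n\varepsilon$ of $\sum_i\mathsf{V}_\lambda(\snr_i)$; letting $\varepsilon\downarrow 0$ closes this direction.

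For the matching lower bound, which carries most of the content, I would fix any feasible $(\vecX,Q)$ and any $V$ with $\vecX\to\vecY\to V|Q$, and single-letterize through the auxiliary side information $Q_i^{\star}\triangleq(Q,X^{i-1},Y_{i+1}^{n})$. First I would check that $X_i\to Y_i\to V|Q_i^{\star}$ remains a Markov chain; this will follow because $Y^{i-1}$ is determined by $(X^{i-1},Q)$ and noise $Z^{i-1}$ that is independent of $(X_i,Y_i,Y_{i+1}^n,V)$, so integrating $Y^{i-1}$ out of $p(v|\vecY,Q)$ yields a conditional free of $X_i$ given $(Y_i,Q_i^{\star})$. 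Next, applying the forward chain rule to $h(\vecX|Q)$ and the backward chain rule to $h(\vecY|Q)$ will give
\begin{align*}
h(\vecX|Q) &= \sum_i h(X_i|Q_i^{\star}) + \sum_i I(X_i;Y_{i+1}^n|X^{i-1},Q),\\
h(\vecY|Q) &= \sum_i h(Y_i|Q_i^{\star}) + \sum_i I(Y_i;X^{i-1}|Y_{i+1}^n,Q),
\end{align*}
and the Csisz\'ar sum identity will equate the two residual sums to a common $S\geq 0$, so that $-h(\vecX|Q)+\lambda h(\vecY|Q) = \sum_i[-h(X_i|Q_i^{\star})+\lambda h(Y_i|Q_i^{\star})]+(\lambda-1)S$. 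An analogous expansion of $I(\vecY;V|Q)$ and $I(\vecX;V|Q)$, invoking Csisz\'ar again with $V$ as additional side information, will show that both deficits $I(\vecY;V|Q)-\sum_i I(Y_i;V|Q_i^{\star})$ and $I(\vecX;V|Q)-\sum_i I(X_i;V|Q_i^{\star})$ equal the same quantity $S-S_V$, where $S_V\triangleq\sum_i I(Y_i;X^{i-1}|V,Y_{i+1}^n,Q)\geq 0$. Combining the three moves gives
\begin{align*}
\mathsf{s}_\lambda(\vecX,\Gamma|Q)\;\geq\;\sum_i\bigl[-h(X_i|Q_i^{\star})+\lambda h(Y_i|Q_i^{\star})+I(Y_i;V|Q_i^{\star})-\lambda I(X_i;V|Q_i^{\star})\bigr]+(\lambda-1)S_V.
\end{align*}
Dropping the nonnegative $(\lambda-1)S_V$ term, applying $\mathsf{s}_\lambda(X_i,\snr_i|Q_i^{\star})\geq\mathsf{V}_\lambda(\snr_i)$ coordinate by coordinate (legitimate because the marginal of $X_i$ still satisfies $\EE[X_i^2]\leq 1$), and finally taking the infimum over $V$ and over $(\vecX,Q)$ will deliver $\mathsf{V}_\lambda(\Gamma)\geq\sum_i\mathsf{V}_\lambda(\snr_i)$.

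The hard part will be identifying the correct side information $Q_i^{\star}=(Q,X^{i-1},Y_{i+1}^n)$. The more na\"\i ve choices $(Q,X^{i-1})$ or $(Q,Y^{i-1})$ either break the scalar Markov chain $X_i\to Y_i\to V|Q_i^{\star}$ or leave unmatched chain-rule residuals whose signs conflict with the $-h(\vecX|Q)+\lambda h(\vecY|Q)$ combination. The asymmetric forward-$X$/backward-$Y$ combination is the unique one that simultaneously preserves the conditional Markov structure and allows the entropy-side and mutual-information-side differences to be matched via the Csisz\'ar sum identity, so that the leftover terms aggregate into a single nonnegative multiple of $\lambda-1$. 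Once this bookkeeping trick is in place, the rest is standard manipulation familiar from multiterminal converse proofs.
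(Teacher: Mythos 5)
Your proof is correct and takes essentially the same approach as the paper. The paper packages the single-letterization as Lemma~\ref{lem:additivityofV} --- a two-block superadditivity inequality with precisely the asymmetric $X$-forward/$Y$-backward side information and the same $(\lambda-1)$-weighted nonnegative residual $I(X_2;Y_1|V,Q)$ --- and applies it inductively; your direct $n$-term chain-rule expansion via $Q_i^{\star}=(Q,X^{i-1},Y_{i+1}^n)$ together with the Csisz\'{a}r sum identity is exactly the unrolled form of that recursion (up to reversing the coordinate order), and your explicit tensorization argument for the $\leq$ direction fills in the step the paper treats as immediate.
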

\begin{proof}%
Let $\Gamma$ be a block diagonal matrix with blocks given by $\Gamma = \operatorname{diag}(\Gamma_1,\Gamma_2)$.  Partition $\vecX=(\vecX_1,\vecX_2)$ and $\vecZ=(\vecZ_1,\vecZ_2)$ such that $\vecY_i=\Gamma_i^{1/2}\vecX_i + \vecZ_i$ for $i=1,2$.  Then, for any $V$ such that $\vecX \to \vecY\to V|Q$, it follows from Lemma \ref{lem:additivityofV} (see Section \ref{subsec:gaussiansequence}) that
\begin{align}
\mathsf{s}_{\lambda}(\vecX,\Gamma|Q)  \geq  \mathsf{s}_{\lambda}(\vecX_1 ,\Gamma_1|\vecX_2,Q) + \mathsf{s}_{\lambda}(\vecX_2 ,\Gamma_2|\vecY_1,Q). 
\end{align}
Hence, $\mathsf{V}_{\lambda}(\Gamma) \geq \mathsf{V}_{\lambda}(\Gamma_1) +\mathsf{V}_{\lambda}(\Gamma_2)$ by definition, so induction proves the claim.
\end{proof}

\begin{proof}[Proof of Theorem \ref{thm:vectorEIPI}]
Define $\vecY = \vecX+\vecW$ for convenience.  As in the scalar setting, we establish the unconditional claim (where $Q$ is constant)  under the constraint $\EE [\|\vecX\|^2 ]< \infty$.  The general result follows by a truncation argument exactly as in the scalar setting.  Moreover, we may assume $\Sigma_{\vecW} \succ 0$, else the inequality reduces to $h(\vecY) + I(\vecY;V)\geq h(\vecX) + I(\vecX;V)$, which is trivially true by the data processing inequality and the fact that conditioning reduces entropy.  
 
Thus, due to positive definiteness of $\Sigma_{\vecW}$ and invariance of mutual information under one-one transformations, we may multiply both sides of \eqref{ineqToproveVec} by $|\Sigma_{\vecW}|^{-1/n}$ to obtain the equivalent inequality
\begin{align}
2^{\frac{2}{n}(h(\Sigma_{\vecW}^{-1/2}\vecY) - I(\Sigma_{\vecW}^{-1/2}\vecX;V))} \geq  2^{\frac{2}{n}(h(\Sigma_{\vecW}^{-1/2}\vecX) - I(\Sigma_{\vecW}^{-1/2}\vecY;V))} + 2^{\frac{2}{n}h(\Sigma_{\vecW}^{-1/2}\vecW)}.
\end{align}
However, $\Sigma_{\vecW}^{-1/2}\vecW\sim N(0,I)$ and , $\EE [\| \Sigma_{\vecW}^{-1/2} \vecX\|^2 ]< \infty$ provided $\EE [\|\vecX\|^2 ]< \infty$, so we may assume without loss of generality that $\vecW \sim N(0,I )$ in establishing the unconditional version of \eqref{ineqToproveVec}. 

To simplify further, put $\snr := \max_{1\leq i\leq n}{\EE[X_i^2]}$.  Note that we may assume $\snr >0$, else the claimed inequality is trivial since $h(\vecX)=-\infty$ and $h(\vecY) - I(\vecX;V) \geq h(\vecY) - I(\vecX;\vecY) =h(\vecW)$.  Therefore, \eqref{ineqToproveVec} is  equivalent to 
\begin{align}
2^{\frac{2}{n}(h (\vecY) - I(\vecX ;V))} \geq  \snr\,  2^{\frac{2}{n}(h(\vecX) - I(\vecY;V))} + 2^{\frac{2}{n}h( \vecZ)}
\end{align}
holding for $\vecX\to \vecY\to V$, where $\vecY=\sqrt{\snr} \vecX+\vecZ$,  $\vecZ\sim N(0,I)$ is independent of $\vecX$, and $\max_{1\leq i\leq n}{\EE[X_i^2]}\leq 1$.   This is established exactly as in the proof of Theorem \ref{thm:scalarEIPI}, since $\mathsf{V}_{\lambda}(\snr \cdot I )   =n \mathsf{V}_{\lambda}(\snr)$.
\end{proof}

\section{Proof of Claims I and II}\label{sec:technicalProofs}
This section is dedicated to the proof of Claims I and II of Section \ref{sec:proofs}.  Several of the steps in the proof require properties and characterizations of Gaussian random variables, which are recalled and proved as needed in the first two subsections.  The third subsection is dedicated to the proof of Claim I, and the fourth subsection is dedicated to the proof of Claim II.    

\subsection{Properties of Gaussian Perturbation}\label{subsec:GaussianPerturbation}

We collect below a few facts about random variables that are contaminated by Gaussian noise.  Of particular interest to us will be weakly convergent sequences of random variables, and corresponding continuity properties under perturbation by Gaussian noise.

\begin{lemma}\label{lem:GaussSmooth} \cite[Lemma 5.1.3]{bryc2012normal}
If $X,Z$ are independent random variables and $Z$ is normal, then $X+Z$ has a non-vanishing probability density function which has derivatives of all orders.
\end{lemma}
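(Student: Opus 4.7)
The plan is to exhibit the density of $X+Z$ explicitly as a Gaussian convolution and then read off the two claims (positivity and infinite differentiability) by a dominated-convergence argument applied to the Gaussian kernel.

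First, let $Z \sim N(0,\sigma^2)$ with $\sigma^2 > 0$ (the case $\sigma = 0$ is excluded since otherwise $Z$ is degenerate, but in any event if $\sigma = 0$ the lemma is vacuous so we may assume $\sigma > 0$). Let $\varphi_\sigma(z) = (2\pi \sigma^2)^{-1/2} e^{-z^2/(2\sigma^2)}$ denote the $N(0,\sigma^2)$ density, and let $\mu_X$ denote the law of $X$. By independence and Fubini, the distribution of $Y = X+Z$ has density with respect to Lebesgue measure given by
\begin{align}
f_Y(y) = \int_{\mathbb{R}} \varphi_\sigma(y-x)\, d\mu_X(x) = \EE\bigl[\varphi_\sigma(y-X)\bigr].
\end{align}
Since $\varphi_\sigma > 0$ everywhere on $\mathbb{R}$, the integrand is strictly positive $\mu_X$-a.s., so $f_Y(y) > 0$ for every $y \in \mathbb{R}$, yielding the ``non-vanishing'' claim.

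For smoothness, I would differentiate under the integral sign. The $k$-th derivative of $\varphi_\sigma$ satisfies $\varphi_\sigma^{(k)}(u) = \sigma^{-k} H_k(u/\sigma) \varphi_\sigma(u)$, where $H_k$ is the $k$-th Hermite polynomial. In particular, for each fixed compact set $K \subset \mathbb{R}$ and each $k \geq 0$, there is a constant $C_{K,k}$ such that
\begin{align}
\sup_{y \in K,\, x \in \mathbb{R}} \bigl|\varphi_\sigma^{(k)}(y-x)\bigr| \leq C_{K,k} < \infty,
\end{align}
because $(1+|u|)^k \varphi_\sigma(u)$ is a bounded function of $u \in \mathbb{R}$. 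This uniform bound is integrable against the probability measure $\mu_X$, so the dominated convergence theorem justifies differentiating $k$ times under the integral, and we obtain
\begin{align}
f_Y^{(k)}(y) = \EE\bigl[\varphi_\sigma^{(k)}(y-X)\bigr]
\end{align}
for every $y \in \mathbb{R}$ and every $k \geq 0$. Hence $f_Y \in C^\infty(\mathbb{R})$, completing the proof.

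There is no real obstacle here: the whole argument is elementary and rests only on the Gaussian kernel being strictly positive and Schwartz-class, which makes every derivative uniformly bounded and therefore admissible for differentiation under the integral. The only point worth stating carefully is the uniform-in-$x$ bound on $\varphi_\sigma^{(k)}(y-x)$ for $y$ in a compact set, since this is what legitimizes applying dominated convergence without any moment hypothesis on $X$.
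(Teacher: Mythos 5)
Your proof is correct and is the standard argument; the paper itself does not prove this lemma but merely cites Bryc's textbook (Lemma 5.1.3 of \cite{bryc2012normal}), where essentially the same convolution-plus-dominated-convergence argument appears. One small simplification: since $\varphi_\sigma^{(k)}$ is bounded on all of $\mathbb{R}$ (any polynomial times a Gaussian is bounded), the uniform bound holds globally, so you do not actually need to restrict $y$ to a compact set $K$ before invoking dominated convergence.
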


\begin{lemma}\label{lem:weakConvergenceGaussian} \cite[Propositions 16 and 18]{GengNair}
Let $\vecX_n  \xrightarrow{\mathcal{D}} \vecX_*$ with $\sup_n \EE[\|\vecX_n\|^2] <\infty$, and let $\vecZ\sim N(0,\sigma^2I )$ be a non-degenerate Gaussian, independent of $\{\vecX_n\},\vecX_*$.  Let $\vecY_n = \vecX_n+\vecZ$ and $\vecY_* = \vecX_*+\vecZ$.  Finally, let $f_n(\mathbf{y})$ and $f_*(\mathbf{y})$ denote the density of $\vecY_n$ and $\vecY_*$, respectively.  Then
\begin{enumerate}[~~1.]
\item $\vecY_n  \xrightarrow{\mathcal{D}} \vecY_*$
\item $\|f_n(\mathbf{y}) - f_*(\mathbf{y}) \|_{\infty}\to 0$
\item $h(\vecY_n) \to h(\vecY_*)$.
\end{enumerate}
\end{lemma}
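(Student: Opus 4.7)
My plan is to address the three items in sequence, exploiting the smoothing provided by the fixed Gaussian $\vecZ$ at each stage.

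For part~1, weak convergence of $\vecY_n$ follows immediately from the continuous mapping theorem. Since $\vecZ$ is independent of every $\vecX_n$ and of $\vecX_*$ and has a fixed distribution, the product measures $P_{\vecX_n}\otimes P_{\vecZ}$ converge weakly to $P_{\vecX_*}\otimes P_{\vecZ}$, and the continuous map $(x,z)\mapsto x+z$ pushes this convergence forward to $\vecY_n\xrightarrow{\mathcal{D}}\vecY_*$.

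For part~2, I write $f_n(\mathbf{y})=\EE[\varphi_\sigma(\mathbf{y}-\vecX_n)]$ where $\varphi_\sigma$ denotes the density of $\vecZ$. Since $x\mapsto\varphi_\sigma(\mathbf{y}-x)$ is bounded and continuous in $x$, pointwise convergence $f_n(\mathbf{y})\to f_*(\mathbf{y})$ is immediate from the definition of weak convergence. To promote this to uniform convergence, I would combine two ingredients: (i) \emph{equicontinuity}, which follows from the fact that $\varphi_\sigma$ is globally Lipschitz on $\mathbb{R}^n$, so each $f_n$ inherits the same Lipschitz constant $\|\nabla\varphi_\sigma\|_\infty$ independently of $n$; and (ii) \emph{uniform tail decay}, obtained by splitting on $\{\|\vecX_n\|\le \|\mathbf{y}\|/2\}$ and its complement, using the Gaussian decay of $\varphi_\sigma$ on the former event and Markov's inequality with the uniform bound $\sup_n\EE[\|\vecX_n\|^2]<\infty$ on the latter. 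A standard Arzel\`a--Ascoli argument on the compactified space then upgrades pointwise convergence to $\|f_n-f_*\|_\infty\to 0$.

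Part~3 is the main obstacle. The difficulty is that $t\mapsto t\log t$ is not globally bounded and $h(\vecY_n)=-\int f_n\log f_n$ is an integral over all of $\mathbb{R}^n$, so uniform convergence of $f_n$ on compacta is not by itself enough. The approach is to split $h(\vecY_n)$ as the sum of integrals over a large ball $B_R$ and its complement. On $B_R$, the uniform convergence from part~2 together with continuity of $t\log t$ yields $\int_{B_R}f_n\log f_n\to\int_{B_R}f_*\log f_*$ as $n\to\infty$. The crux is controlling the tail uniformly in $n$: I would use the explicit representation $f_n=\EE[\varphi_\sigma(\cdot-\vecX_n)]$ to obtain a deterministic Gaussian-type envelope of the form $f_n(\mathbf{y})\le C_1 e^{-c\|\mathbf{y}\|^2}+C_2/\|\mathbf{y}\|^2$, with constants depending only on $\sup_n\EE[\|\vecX_n\|^2]$, together with the trivial upper bound $f_n\le (2\pi\sigma^2)^{-n/2}$ to bound $|{\log f_n(\mathbf{y})}|\lesssim 1+\|\mathbf{y}\|^2$ via Jensen applied to the defining integral. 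These bounds give a uniform integrable majorant for $|f_n\log f_n|\mathds{1}_{B_R^c}$, whose integral can be made arbitrarily small by choosing $R$ large, independently of $n$. Combining the two regions and sending $R\to\infty$ after $n\to\infty$ yields $h(\vecY_n)\to h(\vecY_*)$. The only delicate point is double-checking that the tail envelope is indeed integrable against $\log$; this is where the second-moment assumption $\sup_n\EE[\|\vecX_n\|^2]<\infty$ is essential and must be used carefully.
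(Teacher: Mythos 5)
Your treatments of items~1 and~2 are fine and standard (characteristic functions, or continuous mapping on product measures, for item~1; equicontinuity from the Lipschitz mollifier plus a uniform-tail argument for item~2). Note that the paper does not actually prove this lemma --- it cites it verbatim from Geng and Nair --- so there is no paper proof to compare against, only the cited source. The substantive issue is your proof of item~3, and there is a genuine gap exactly at the point you flag as ``delicate.''

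The proposed dominating function does not dominate an integrable function. You combine $f_n(\mathbf{y})\le C_1 e^{-c\|\mathbf{y}\|^2}+C_2\|\mathbf{y}\|^{-2}$ with $|\log f_n(\mathbf{y})|\lesssim 1+\|\mathbf{y}\|^2$ to get $|f_n\log f_n|\lesssim\bigl(C_1 e^{-c\|\mathbf{y}\|^2}+C_2\|\mathbf{y}\|^{-2}\bigr)\bigl(1+\|\mathbf{y}\|^2\bigr)$. The cross term $C_2\|\mathbf{y}\|^{-2}\cdot\|\mathbf{y}\|^2$ is a nonzero constant, so the right-hand side is not integrable over $B_R^c$ in any dimension, and dominated convergence cannot be invoked. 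This is not a repairable algebraic slip: with only $\sup_n\EE[\|\vecX_n\|^2]<\infty$ (which bounds but does not give uniform integrability of $\{\|\vecX_n\|^2\}$) the quantity $\int_{B_R^c}\|\mathbf{y}\|^2 f_n\,d\mathbf{y}=\EE[\|\vecY_n\|^2\mathds{1}_{\{\|\vecY_n\|>R\}}]$ need not vanish uniformly in $n$ as $R\to\infty$, so no pointwise majorant of the type you seek can exist. The route that actually closes the argument replaces pointwise domination by a relative-entropy (log-sum) bound: for a Gaussian reference $\gamma_s\sim N(0,s^2 I)$ one has
\begin{align}
\int_{B_R^c} f_n\log\tfrac{1}{f_n}\,d\mathbf{y}
\;\le\;\int_{B_R^c}\gamma_s\,d\mathbf{y}
\;+\;\tfrac{n}{2}\bigl|\log(2\pi s^2)\bigr|\,\PP(\|\vecY_n\|>R)
\;+\;\tfrac{1}{2s^2}\,\EE[\|\vecY_n\|^2],
\end{align}
where the first inequality uses $\int_A f\log(g/f)\le\int_A g$. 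Choosing $s$ large (to make the last term small, uniformly in $n$) and then $R$ large (to kill the first two terms) gives the uniform tail bound without any uniform integrability of $\|\vecX_n\|^2$. Equivalently, one can bypass truncation altogether: upper semicontinuity follows by writing $h(\vecY_n)=\tfrac{n}{2}\log(2\pi s^2)+\tfrac{\EE\|\vecY_n\|^2}{2s^2}-D(P_{\vecY_n}\|\gamma_s)$ and using weak lower semicontinuity of $D$, then sending $s\to\infty$; lower semicontinuity follows by splitting $\vecZ=\vecZ_1+\vecZ_2$ into two independent Gaussians and writing $h(\vecY_n)=I(\vecX_n+\vecZ_1;\vecY_n)+h(\vecZ_2)$, then invoking weak lower semicontinuity of mutual information. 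Either repair replaces the failed DCT step with lower semicontinuity of relative entropy, which is the tool that fits the hypotheses.
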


\begin{lemma}\label{lem:WLSC_condMI}
Suppose $(X_{1,n},X_{2,n}) \xrightarrow{\mathcal{D}} (X_{1,*},X_{2,*})$ with $\sup_n \EE[X_{i,n}^2]<\infty$ for $i=1,2$. Let $(Z_1,Z_2)\sim N(0,\sigma^2 I)$ be pairwise independent of $(X_{1,n},X_{2,n})$ and  $(X_{1,*},X_{2,*})$, and, for $i=1,2$, define $Y_{i,n} = X_{i,n} + Z_i$ and $Y_{i,*} = X_{i,*} + Z_i$. Then $(Y_{1,n},Y_{2,n}) \xrightarrow{\mathcal{D}} (Y_{1,*},Y_{2,*})$ and
\begin{align}
\liminf_{n\to\infty} I(X_{1,n};X_{2,n}|Y_{1,n},Y_{2,n}) \geq I(X_{1,*};X_{2,*}|Y_{1,*},Y_{2,*}).  \label{wLSC}
\end{align}
\end{lemma}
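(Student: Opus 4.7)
The proof splits naturally into two parts. For the weak convergence claim, the independence of $(Z_1,Z_2)$ from both $\{(X_{1,n},X_{2,n})\}_n$ and $(X_{1,*},X_{2,*})$ gives $(X_{1,n},X_{2,n},Z_1,Z_2) \xrightarrow{\mathcal{D}} (X_{1,*},X_{2,*},Z_1,Z_2)$, and applying the continuous mapping theorem to coordinate-wise addition yields $(Y_{1,n},Y_{2,n}) \xrightarrow{\mathcal{D}} (Y_{1,*},Y_{2,*})$.

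For the semicontinuity \eqref{wLSC}, my plan starts from the chain-rule identity $I(X_1;X_2\mid Y_1,Y_2) = I(X_1,Y_1,Y_2;X_2) - I(Y_1,Y_2;X_2)$, which splits the conditional mutual information into two unconditional ones. The first summand is handled by the standard weak lower semicontinuity of mutual information under weak convergence of the joint law $(X_{1,n},Y_{1,n},Y_{2,n},X_{2,n}) \xrightarrow{\mathcal{D}} (X_{1,*},Y_{1,*},Y_{2,*},X_{2,*})$. For the subtracted term to actually yield \eqref{wLSC}, one must establish the \emph{convergence} (not merely upper semicontinuity) of $I(Y_{1,n},Y_{2,n};X_{2,n})$ to $I(Y_{1,*},Y_{2,*};X_{2,*})$. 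Exploiting the conditional independence $Y_1 \perp Y_2 \mid X_2$ (a consequence of $Z_1 \perp Z_2$ and $(Z_1,Z_2) \perp X_2$), a short chain-rule manipulation gives $I(Y_1,Y_2;X_2) = I(Y_1;X_2) + I(Y_2;X_2) - I(Y_1;Y_2)$, of which the last two reduce to pure differential-entropy expressions $h(Y_{2,n}) - h(Z_2)$ and $h(Y_{1,n}) + h(Y_{2,n}) - h(Y_{1,n},Y_{2,n})$ that converge by Lemma~\ref{lem:weakConvergenceGaussian} applied to $(X_{1,n},X_{2,n})$.

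The main obstacle is thus convergence of $I(Y_{1,n};X_{2,n})$, where $Y_{1,n}$ is Gaussian-smoothed but $X_{2,n}$ need not possess a density at all. Weak lower semicontinuity supplies $\liminf \geq I(Y_{1,*};X_{2,*})$ for free; for the matching upper bound, I would introduce an auxiliary regularization $\tilde X_{2,n,\epsilon} := X_{2,n} + \epsilon U$ with $U \sim N(0,1)$ independent of everything and exploit the identity
\begin{equation*}
I(Y_{1,n};X_{2,n}) - I(Y_{1,n};\tilde X_{2,n,\epsilon}) \;=\; I(Y_{1,n};X_{2,n}\mid \tilde X_{2,n,\epsilon}) \;\geq\; 0,
\end{equation*}
which follows by a double application of the chain rule together with $I(Y_{1,n};\tilde X_{2,n,\epsilon}\mid X_{2,n}) = 0$. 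For each fixed $\epsilon>0$ the pair $(Y_{1,n},\tilde X_{2,n,\epsilon})$ is a Gaussian smoothing in both coordinates, so Lemma~\ref{lem:weakConvergenceGaussian} applied to each of $h(Y_{1,n})$, $h(\tilde X_{2,n,\epsilon})$ and $h(Y_{1,n},\tilde X_{2,n,\epsilon})$ yields $\lim_n I(Y_{1,n};\tilde X_{2,n,\epsilon}) = I(Y_{1,*};\tilde X_{2,*,\epsilon})$. The crux is then to show that the defect $I(Y_{1,n};X_{2,n}\mid \tilde X_{2,n,\epsilon})$ vanishes as $\epsilon \downarrow 0$ \emph{uniformly in} $n$; this is where the uniform second-moment bound $\sup_n \EE[X_{i,n}^2] < \infty$ is indispensable, furnishing the tightness needed to legitimately exchange the limits $n \to \infty$ and $\epsilon \downarrow 0$ and thereby completing the proof.
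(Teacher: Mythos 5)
Your decomposition $I(X_1;X_2\mid Y_1,Y_2) = I(X_1,Y_1,Y_2;X_2) - I(Y_1,Y_2;X_2)$ differs from the paper's, which uses $I(X_1;X_2\mid Y_1,Y_2) = I(X_{1},Y_{2};Y_{1},X_{2}) - I(X_{1},X_{2};Y_{1},Y_{2})$. This difference is not cosmetic. In the paper's subtracted term, the conditional entropy $h(Y_{1,n},Y_{2,n}\mid X_{1,n},X_{2,n}) = h(Z_1,Z_2)$ is \emph{constant in $n$}, so the subtracted mutual information equals $h(Y_{1,n},Y_{2,n}) - h(Z_1,Z_2)$ and converges outright by the Gaussian-smoothing continuity of Lemma~\ref{lem:weakConvergenceGaussian}. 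Your subtracted term, after the further split $I(Y_1,Y_2;X_2) = I(Y_1;X_2) + I(Y_2;X_2) - I(Y_1;Y_2)$, leaves $I(Y_{1,n};X_{2,n})$, whose conditional-entropy component $h(Y_{1,n}\mid X_{2,n})$ is a conditioning on the raw (unsmoothed) $X_{2,n}$ and enjoys no such continuity.

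The gap is fatal, not merely technical: the quantity $I(Y_{1,n};X_{2,n})$ need \emph{not} be upper semicontinuous along the sequence, and the uniform-in-$n$ vanishing of $I(Y_{1,n};X_{2,n}\mid \tilde X_{2,n,\epsilon})$ that your regularization argument hinges on is false. Take $X_{2,n}$ uniform on $\{0, 2^{-n}\}$ and $X_{1,n} := 2^n X_{2,n}$, so each $X_{i,n}$ has bounded second moment and $(X_{1,n},X_{2,n}) \xrightarrow{\mathcal{D}} (X_{1,*},0)$ with $X_{1,*}$ uniform on $\{0,1\}$. Since $X_{2,n}$ determines $X_{1,n}$, $I(Y_{1,n};X_{2,n}) = I(Y_{1,n};X_{1,n}) \equiv c > 0$ for all $n$, while $I(Y_{1,*};X_{2,*}) = 0$ because $X_{2,*}$ is degenerate; so the limit you need ($\lim_n I(Y_{1,n};X_{2,n}) = I(Y_{1,*};X_{2,*})$) fails. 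Moreover, for any fixed $\epsilon>0$, $I(Y_{1,n};\tilde X_{2,n,\epsilon}) \to 0$ (the scale $2^{-n}$ of $X_{2,n}$ is eventually invisible under noise $\epsilon U$), so $I(Y_{1,n};X_{2,n}\mid \tilde X_{2,n,\epsilon}) \to c > 0$ as $n\to\infty$; there is no exchange of the limits $\epsilon\downarrow 0$ and $n\to\infty$ to be had, regardless of the second-moment bound. The fix is to adopt the paper's pairing $(X_1,Y_2)$--$(Y_1,X_2)$, whose point is precisely that the subtracted mutual information has a deterministic conditional entropy and therefore converges; your split sacrifices that structure.
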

\begin{proof}
The fact that $(Y_{1,n},Y_{2,n}) \xrightarrow{\mathcal{D}} (Y_{1,*},Y_{2,*})$ follows from Lemma \ref{lem:weakConvergenceGaussian}.  Lemma \ref{lem:weakConvergenceGaussian} also establishes that 
\begin{align}
h(Y_{1,n},Y_{2,n}) \to h(Y_{1,*},Y_{2,*} ).\label{limH}
\end{align}
On account of the Markov chains $(X_{2,n},Y_{2,n})\to X_{1,n} \to Y_{1,n}$ and $(X_{1,n},Y_{1,n})\to X_{2,n} \to Y_{2,n}$, we have the  identity
\begin{align}
I(X_{1,n};X_{2,n}|Y_{1,n},Y_{2,n}) &= I(X_{1,n},Y_{2,n};Y_{1,n},X_{2,n}) - I(X_{1,n},X_{2,n};Y_{1,n},Y_{2,n}).\label{MIident}
\end{align}
Observe that $\liminf_{n\to \infty}I(X_{1,n},Y_{2,n};Y_{1,n},X_{2,n}) \geq  I(X_{1,*},Y_{2,*};Y_{1,*},X_{2,*})$ due to lower semicontinuity of relative entropy, and $\lim_{n\to \infty}  I(X_{1,n},X_{2,n};Y_{1,n},Y_{2,n})= I(X_{1,*},X_{2,*};Y_{1,*},Y_{2,*})$ due to \eqref{limH} and the fact that  $h(Y_{1,*},Y_{2,*}|X_{1,*},X_{2,*}) = h(Y_{1,n},Y_{2,n}|X_{1,n},X_{2,n})=h(Z_1,Z_2)$ is constant.  Thus, \eqref{wLSC} is proved by applying the identity \eqref{MIident} again for $(X_{1,*},X_{2,*},Y_{1,*},Y_{2,*})$.
\end{proof}

\begin{lemma}\label{lem:MIbounds} Let $ \{Y_n\},  Y_*$ be as in Lemma \ref{lem:weakConvergenceGaussian}.   Fix $b>0$ and a channel $P_{V|Y}$, and define $\{V_n\},V_*$ according to $P_{V|Y}: Y_n \mapsto V_n$  and $P_{V|Y}: Y_* \mapsto V_*$.  There exists a sequence $\{\epsilon_n\}$ depending on $b$ and $\{Y_n\}$, but not on $P_{V|Y}$, satisfying $\lim_{n\to\infty}\epsilon_n=0$ and
\begin{align}
I(V_n; Y_n\,| ~|Y_n| \leq b) &\leq (1+\epsilon_n)^2  I(V_*;Y_*\,| ~|Y_*|\leq b)  -  (1+\epsilon_n)^2 \log (1-\epsilon_n )^2,\mbox{}\label{MIineq1}\\
I(V_n; Y_n\,| ~|Y_n| \leq b) &\geq (1-\epsilon_n)^2  I(V_*;Y_*\,|~|Y_*| \leq b)  -  (1-\epsilon_n)^2 \log (1+\epsilon_n )^2,\mbox{}\label{MIineq2}\\
\left|\frac{\PP(|Y_n| \leq b)}{\PP(|Y_*| \leq  b)} - 1\right|&\leq \epsilon_n.\label{PB_ineq}
\end{align}
\end{lemma}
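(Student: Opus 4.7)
The plan is to exploit the uniform closeness of the densities $f_n$ and $f_\ast$ on the compact window $[-b,b]$ (furnished by Lemma \ref{lem:weakConvergenceGaussian} together with Lemma \ref{lem:GaussSmooth}) and then transfer those pointwise estimates through the fixed kernel $P_{V|Y}$ using a Kullback--Leibler change-of-reference identity. The crucial point is that all estimates will be driven by a single sequence $\epsilon_n\to 0$ that depends only on $b$ and $\{Y_n\}$, never on $P_{V|Y}$.

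Since $f_\ast$ is continuous and strictly positive by Lemma \ref{lem:GaussSmooth}, $m:=\inf_{|y|\le b} f_\ast(y)>0$. Combined with $\|f_n-f_\ast\|_\infty\to 0$, this yields $\sup_{|y|\le b}|f_n(y)/f_\ast(y)-1|\to 0$, and integrating gives $\PP(|Y_n|\le b)/\PP(|Y_\ast|\le b)\to 1$. I would then pick $\epsilon_n$ to simultaneously dominate $|f_n(y)/f_\ast(y)-1|$ uniformly on $[-b,b]$ and $|\PP(|Y_\ast|\le b)/\PP(|Y_n|\le b)-1|$; this is automatic from the two convergences above and gives \eqref{PB_ineq} directly. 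The two bounds combine (via $\tilde f_n/\tilde f_\ast = (f_n/f_\ast)\cdot(\PP(|Y_\ast|\le b)/\PP(|Y_n|\le b))$) to give the pointwise bound
\[
(1-\epsilon_n)^2 \;\le\; \frac{\tilde f_n(y)}{\tilde f_\ast(y)} \;\le\; (1+\epsilon_n)^2 \qquad \text{for all } |y|\le b,
\]
where $\tilde f_n = f_n\mathbf{1}_{[-b,b]}/\PP(|Y_n|\le b)$ is the conditional density of $Y_n$ given $\{|Y_n|\le b\}$. Integrating the same fixed kernel $P_{V|Y}$ against this pointwise inequality transfers the identical squared bounds to the Radon--Nikodym derivative $d\tilde Q_n/d\tilde Q_\ast$, where $\tilde Q_n(dv):=\int P_{V|Y}(dv|y)\tilde f_n(y)\,dy$ is the induced conditional marginal of $V$.

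For the mutual information I would use the standard representation $I(V_n;Y_n|\,|Y_n|\le b) = \int \tilde f_n(y) D(P_{V|Y=y}\,\|\,\tilde Q_n)\,dy$ and the change-of-reference identity
\[
D(P_{V|Y=y}\,\|\,\tilde Q_n) = D(P_{V|Y=y}\,\|\,\tilde Q_\ast) + \int \log(d\tilde Q_\ast/d\tilde Q_n)\,dP_{V|Y=y},
\]
integrating against $\tilde f_n(y)\,dy$. Nonnegativity of $D(P_{V|Y=y}\,\|\,\tilde Q_\ast)$ together with the pointwise sandwich on $\tilde f_n/\tilde f_\ast$ controls the first integral by $(1\pm\epsilon_n)^2\, I(V_\ast;Y_\ast\,|\,|Y_\ast|\le b)$; the pointwise bound $d\tilde Q_\ast/d\tilde Q_n\in[(1+\epsilon_n)^{-2},(1-\epsilon_n)^{-2}]$ controls the second integral in absolute value by $-\log(1-\epsilon_n)^2$ in one direction and $\log(1+\epsilon_n)^2$ in the other. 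Multiplying these log terms trivially by the factor $(1\pm\epsilon_n)^2$ (which is $\ge 1$ for the $+$ case) absorbs them into the form stated, producing \eqref{MIineq1}; a symmetric argument, swapping the roles of $n$ and $\ast$, yields \eqref{MIineq2}.

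The main source of friction I anticipate is the bookkeeping: verifying that a single $\epsilon_n\to 0$ can be chosen so all three inequalities hold simultaneously with exactly the stated squared constants, and justifying the change-of-reference identity for an arbitrary Markov kernel $P_{V|Y}$ without assuming that $V$ admits a density. The latter is handled by taking $\tilde Q_\ast$ as the common reference measure, which dominates $\tilde Q_n$ uniformly via the Radon--Nikodym bound established above.
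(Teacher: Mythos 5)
Your proposal is almost identical to the paper's argument: you establish the uniform density sandwich on $[-b,b]$, propagate it through the kernel to bound the Radon--Nikodym derivative of the conditional $V$-marginals, and then transfer these to the mutual information. This is exactly the paper's route, and your handling of \eqref{PB_ineq} and \eqref{MIineq1} is correct.

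There is, however, a concrete gap in the derivation of \eqref{MIineq2}. You perform the change-of-reference split \emph{before} applying the density sandwich, so after controlling the two integrals separately you obtain
\begin{align}
I(V_n;Y_n\,|\,|Y_n|\le b) \;\geq\; (1-\epsilon_n)^2\, I(V_*;Y_*\,|\,|Y_*|\le b) - \log(1+\epsilon_n)^2,
\end{align}
with the log correction \emph{unscaled}. Because $(1-\epsilon_n)^2 < 1$ and $\log(1+\epsilon_n)^2 > 0$, this is strictly \emph{weaker} than \eqref{MIineq2} and does not imply it. Your proposed fix, ``swap the roles of $n$ and $*$,'' gives $I(V_*;Y_*\,|\,|Y_*|\le b) \leq (1+\epsilon_n)^2 I(V_n;Y_n\,|\,|Y_n|\le b) - (1+\epsilon_n)^2\log(1-\epsilon_n)^2$, which after rearranging yields
\begin{align}
I(V_n;Y_n\,|\,|Y_n|\le b) \;\geq\; \frac{I(V_*;Y_*\,|\,|Y_*|\le b)}{(1+\epsilon_n)^2} + \log(1-\epsilon_n)^2.
\end{align}
The $I$-coefficient here dominates $(1-\epsilon_n)^2$, but a Taylor expansion shows $\log(1-\epsilon_n)^2 < -(1-\epsilon_n)^2\log(1+\epsilon_n)^2$ for small $\epsilon_n$, so the additive term is worse and the implication still fails. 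The correct order of operations, as in the paper, is to apply the density lower bound $\tilde f_n \ge (1-\epsilon_n)^2\tilde f_*$ \emph{first} (justified by nonnegativity of $D(P_{V|Y=y}\|\tilde Q_n)$), and only then perform the change of reference \emph{inside} the resulting $(1-\epsilon_n)^2$ prefactor; then the $-\log(1+\epsilon_n)^2$ correction is automatically multiplied by $(1-\epsilon_n)^2$ and \eqref{MIineq2} emerges with the stated constants. Your weaker bound would in fact suffice for every downstream application of the lemma (since only $\epsilon_n\to 0$ matters), but it does not reproduce the lemma as stated.
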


\begin{proof} Let $f_n(y)$ and $f_*(y)$ denote the density of $Y_n$ and $Y_*$, respectively.  By Lemma \ref{lem:GaussSmooth}, the density $f_*$ is continuous and does not vanish,  and is therefore bounded away from zero on the interval $B=[-b,b]$.  By Lemma \ref{lem:weakConvergenceGaussian}, $\|f_n(y) - f_*(y) \|_{\infty}\to 0$, so it follows that 
\begin{align}
\sup_{y\in {B} }\left| 1 - \frac{f_n(y)}{f_*(y)}\right| \leq \epsilon_n  \mbox{~~~~and~~~~} \sup_{y\in {B} }\left| 1 - \frac{f_*(y)}{f_n(y)}\right| \leq \epsilon_n
\end{align} 
for some $\epsilon_n \to 0$ as $n\to \infty$ (note that $\epsilon_n$ does not depend on $P_{V|Y}$).  As a consequence, 
\begin{align}
\PP(Y_* \in {B}) = \int_B f_*(y) dy \leq (1+\epsilon_n)\int_B f_n(y) dy  = (1+\epsilon_n) \PP(Y_n \in {B}).\label{eqAlongLines}
\end{align}
Hence, for $y\in B$, the conditional densities of $Y_n |\{Y_n \in {B}\}$ and $Y_* |\{Y_* \in {B}\}$ satisfy
\begin{align}
\frac{f_{Y_n |\{Y_n \in {B}\}}(y)}{f_{Y_* |\{Y_* \in {B}\}}(y)} = \frac{f_n(y)}{f_*(y)} \cdot \frac{\PP(Y_* \in {B})}{\PP(Y_n \in {B})}\leq (1+\epsilon_n)^2.
\end{align}
By a symmetric argument, ${f_{Y_n |\{Y_n \in {B}\}}(y)} \geq (1-\epsilon_n)^2f_{Y_* |\{Y_* \in {B}\}}(y)$ for all $y\in B$.
Therefore, for any Borel set $A$, 
\begin{align}
\PP( V_n \in A | Y_n \in B ) = \int_B \int_A P_{V|Y=y}(dv) f_{Y_n |\{Y_n \in {B}\}}(y) dy &\geq (1-\epsilon_n)^2 \int_B \int_A P_{V|Y=y}(dv) f_{Y_* |\{Y_* \in {B}\}}(y) dy\\
&= (1-\epsilon_n)^2 \, \PP( V_* \in A | Y_* \in B ).
\end{align}
As a consequence, $\frac{dP_{V_n | Y_n \in B}}{dP_{V_* | Y_* \in B}}(v)\geq  (1-\epsilon_n)^2 $.  Combining the above observations we have
\begin{align}
I(V_n;Y_n|Y_n \in B) &= \int_B \int  f_{Y_n |\{Y_n \in {B}\}}(y) 
 \log \left( \frac{ dP_{V|Y=y} }{  dP_{V_n | Y_n \in B}  }(v) \right)  P_{V|Y=y}( dv  )  dy\\
&\leq (1+\epsilon_n)^2 \int_B \int f_{Y_* |\{Y_* \in {B}\}}(y) 
\log \left( \frac{1}{(1-\epsilon_n )^2 }\frac{ dP_{V|Y=y}  }{ dP_{V_* | Y_* \in B}  }(v) \right) P_{V|Y=y}(dv) dy\\
&= (1+\epsilon_n)^2  I(V_*;Y_*|Y_* \in B)  -  (1+\epsilon_n)^2 \log (1-\epsilon_n )^2.
\end{align}
By a symmetric argument, we also have
\begin{align}
I(V_n; Y_n|Y_n \in B) &\geq (1-\epsilon_n)^2  I(V_*;Y_*|Y_* \in B)  -  (1-\epsilon_n)^2 \log (1+\epsilon_n )^2,
\end{align}
which proves \eqref{MIineq1}-\eqref{MIineq2}.  Inequality \eqref{PB_ineq} is established by the same logic as \eqref{eqAlongLines}.
\end{proof}

\begin{lemma}\label{lem:MItailBound}
Let $X\sim P_X$ and let $Z\sim N(0,\sigma^2)$ be a non-degenerate Gaussian, independent of $X$. It holds that
\begin{align}
\lim_{b\to \infty} \PP(|X|>b) I(X;X+Z~| ~|X|>b) =0.
\end{align}
\end{lemma}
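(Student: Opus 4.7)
My plan is to upper bound the conditional mutual information via the Gaussian maximum-entropy inequality and then verify that the product with $\PP(|X|>b)$ vanishes by elementary manipulations. Write $B := \{|X|>b\}$, $p_b := \PP(B)$, and $m_b := \EE[X^2 \mathds{1}_B]$. Since $Z$ is independent of $(X,\mathds{1}_B)$, we have $h(X+Z\mid X,B)=h(Z)$ and hence
$$I(X;X+Z\mid B) = h(X+Z\mid B) - h(Z).$$

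I will carry out the argument assuming $\EE[X^2]<\infty$, which covers every application of the lemma in the paper (these arise under the unit power constraint $\EE[X^2]\leq 1$). The Gaussian maximum-entropy bound applied conditionally on $B$ gives $h(X+Z\mid B) \leq \tfrac{1}{2}\log\!\bigl(2\pi e(\Var(X\mid B)+\sigma^2)\bigr)$, so that, since $\Var(X\mid B)\leq \EE[X^2\mid B]= m_b/p_b$,
$$I(X;X+Z\mid B) \leq \tfrac{1}{2}\log\!\Bigl(1+\tfrac{m_b}{p_b\sigma^2}\Bigr).$$
It then remains to verify that $p_b\log(1+m_b/(p_b\sigma^2))\to 0$. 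Dominated convergence delivers $m_b\to 0$, and Markov's inequality $p_b\leq m_b/b^2$ yields $\log m_b \geq \log p_b + 2\log b$. Using $\log(1+u)\leq \log 2 + (\log u)^+$ for $b$ large,
$$p_b\log\!\Bigl(1+\tfrac{m_b}{p_b\sigma^2}\Bigr) \leq p_b\log 2 + p_b\log m_b - p_b\log p_b - p_b\log\sigma^2.$$
The standard estimate $-p_b\log p_b\to 0$ handles two of these terms, while $p_b\log b \leq (m_b/b^2)\log b\to 0$ (since $m_b\leq \EE[X^2]$ is bounded), which combined with the Markov lower bound on $\log m_b$ squeezes $p_b\log m_b\to 0$, completing the proof.

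The main obstacle, should the lemma be intended in full generality without a second-moment hypothesis, is bounding $h(X+Z\mid B)$ when $\Var(X\mid B)=\infty$. A natural remedy is a double truncation: for some $a=a(b)>b$, split $I(X;X+Z\mid B)$ along $\mathds{1}_{\{|X|\leq a\}}$, bound the bounded piece $\{b<|X|\leq a\}$ by $\tfrac{1}{2}\log(1+a^2/\sigma^2)$ via Gaussian maximum entropy on $[-a,a]$, and absorb the residual $\{|X|>a\}$ at cost $H(\mathds{1}_{|X|>a}\mid B)\leq \log 2$. Tuning $a(b)$ so that $p_{a(b)}\log(1+a(b)^2)=o(p_b)$ would then close the argument unconditionally.
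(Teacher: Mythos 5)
Your proof is correct under the hypothesis $\EE[X^2]<\infty$, which indeed covers every invocation of the lemma in the paper, but it takes a genuinely different route from the paper's argument. The paper's proof is essentially ``soft'': it uses lower semicontinuity of relative entropy to show $\lim_{b\to\infty}\PP(|X|\le b)\,I(X;X+Z\mid|X|\le b)=I(X;X+Z)$, then subtracts this and the vanishing term $I(\mathds{1}_{\{|X|\le b\}};X+Z)$ from $I(X;X+Z)$ via the chain rule to conclude. Your proof is ``hard'': you upper bound $I(X;X+Z\mid B)$ directly by a Gaussian maximum-entropy estimate and then drive $p_b\log(1+m_b/(p_b\sigma^2))\to0$ through Markov's inequality, $m_b\to0$, and the standard $-p_b\log p_b\to0$ squeeze. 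The tradeoff is as follows. The paper's argument needs no second-moment hypothesis as such, but it does tacitly require $I(X;X+Z)<\infty$ (equivalently $h(X+Z)<\infty$) for the chain-rule subtraction to be meaningful --- a point the proof does not state. Your argument instead requires $\EE[X^2]<\infty$, a strictly stronger condition, but makes the needed hypothesis explicit; it is also elementary and self-contained (no appeal to lower semicontinuity of relative entropy or to the reference to Wu--Verd\'u). One small remark on your bound: the step $p_b\log(1+u)\le p_b\log2+p_b\log m_b-p_b\log p_b-p_b\log\sigma^2$ tacitly drops the positive part in $(\log u)^+$; this is harmless since when $\log u<0$ the entire expression is already controlled by $p_b\log2\to0$, but you should note the case split. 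Your sketched double-truncation fix for the general case is not carried out, and as the remark above indicates, \emph{some} integrability assumption is genuinely necessary (if $h(X+Z)=\infty$ the statement can fail), so the lemma cannot be made fully unconditional; given the applications here both versions suffice.
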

\begin{proof}
The proof follows that of \cite[Theorem 6]{WuVerdu}.  
By lower semicontinuity of relative entropy, we have
\begin{align}
\liminf_{b\to\infty }I(X;X+Z~| ~|X| \leq b) \geq I(X;X+Z).
\end{align}
Also,
\begin{align}
I(X;X+Z)\geq \PP(|X|\leq b) I(X;X+Z~| ~|X|\leq b),
\end{align}
so that $\lim_{b\to\infty }I(X;X+Z~| ~|X| \leq b) = \lim_{b\to\infty } \PP(|X|\leq b) I(X;X+Z~| ~|X|\leq b) =  I(X;X+Z).$  By the chain rule
\begin{align}
\PP(|X|>b) I(X;X+Z~| ~|X|>b) = I(X;X+Z) - I(\mathds{1}_{\{|X|\leq b\}} ; X+Z) - \PP(|X|\leq b) I(X;X+Z~| ~|X|\leq b) ,\notag
\end{align}
so  the claim is proved since $I(\mathds{1}_{\{|X|\leq b\}} ; X+Z)$ vanishes as $b\to \infty$.
\end{proof}

\subsection{Characterizations of Gaussian Random Variables}
The goal of this subsection is to establish the following characterization of Gaussian random variables:
\begin{lemma}\label{lem:GaussianCharMI}
Suppose $(X_{1,n},X_{2,n}) \xrightarrow{\mathcal{D}} (X_{1,*},X_{2,*})$ with $\sup_n \EE[X_{i,n}^2]<\infty$ for $i=1,2$. Let $(Z_1,Z_2)\sim N(0,\sigma^2 I)$ be pairwise independent of $(X_{1,n},X_{2,n})$ and, for $i=1,2$, define $Y_{i,n} = X_{i,n} + Z_i$. If $X_{1,n},X_{2,n}$ are independent and 
\begin{align}
\liminf_{n\to\infty} I(X_{1,n} + X_{2,n} ; X_{1,n} - X_{2,n}|Y_{1,n},Y_{2,n})=0,\label{vanishAssume}
\end{align}
then $X_{1,*},X_{2,*}$ are independent Gaussian random variables with identical variances.
\end{lemma}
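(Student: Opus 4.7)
My plan is to push the hypothesis to the limit via the rotated coordinates, convert conditional independence into an identity on conditional variances, and finish with a posterior-variance characterization of the Gaussian. First, I would set $U_n := X_{1,n}+X_{2,n}$ and $V_n := X_{1,n}-X_{2,n}$ and observe that $U_n + (Z_1+Z_2) = Y_{1,n}+Y_{2,n}$ and $V_n + (Z_1-Z_2) = Y_{1,n}-Y_{2,n}$, with $(Z_1+Z_2,\, Z_1-Z_2)\sim N(0,2\sigma^2 I)$ pairwise independent of $(U_n,V_n)$. Because $(y_1,y_2)\mapsto(y_1+y_2,y_1-y_2)$ is invertible, the conditioning on $(Y_{1,n},Y_{2,n})$ can equivalently be replaced by its rotation. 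The second moments of $U_n,V_n$ are uniformly bounded, so Lemma~\ref{lem:WLSC_condMI} applies to the rotated sequence and yields
\[
0 = \liminf_{n\to\infty} I(U_n;V_n\mid Y_{1,n},Y_{2,n}) \geq I(X_{1,*}+X_{2,*};\, X_{1,*}-X_{2,*}\mid Y_{1,*},Y_{2,*}),
\]
so sum and difference are conditionally independent given the limiting observations.

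Next, independence of $X_{1,n}$ and $X_{2,n}$ is preserved in the limit (the joint characteristic function factors for each $n$, and pointwise limits of characteristic functions preserve factorization), so $X_{1,*}\perp X_{2,*}$. The parallel Gaussian channel structure $P_{Y_1Y_2\mid X_1X_2} = P_{Y_1\mid X_1}\otimes P_{Y_2\mid X_2}$ then forces $(X_{1,*},Y_{1,*})\perp(X_{2,*},Y_{2,*})$, giving the conditional independence $X_{1,*}\perp X_{2,*}\mid (Y_{1,*},Y_{2,*})$ and the simplification $\Var(X_{i,*}\mid Y_{1,*},Y_{2,*}) = \Var(X_{i,*}\mid Y_{i,*})$. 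Computing the conditional covariance under this factorization,
\[
\Cov(X_{1,*}+X_{2,*},\, X_{1,*}-X_{2,*}\mid Y_{1,*},Y_{2,*}) \;=\; \Var(X_{1,*}\mid Y_{1,*}) - \Var(X_{2,*}\mid Y_{2,*}),
\]
and this must vanish almost surely, since conditional independence of sum and difference implies vanishing conditional covariance. The left summand is $\sigma(Y_{1,*})$-measurable, the right is $\sigma(Y_{2,*})$-measurable, and $Y_{1,*}\perp Y_{2,*}$, so both quantities must equal a common deterministic constant $\tau^2$.

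Finally, I would invoke the classical characterization of the Gaussian via constant posterior variance. For the scalar channel $Y_{i,*}=X_{i,*}+Z_i$ the density $f_{Y_{i,*}}$ is smooth by Lemma~\ref{lem:GaussSmooth}, and a short integration-by-parts (Tweedie's identity) gives
\[
\Var(X_{i,*}\mid Y_{i,*}=y) \;=\; \sigma^2 + \sigma^4 (\log f_{Y_{i,*}})''(y).
\]
Setting this equal to $\tau^2$ forces $(\log f_{Y_{i,*}})''$ to be a negative constant, so $Y_{i,*}$ is Gaussian; dividing characteristic functions by that of $Z_i$ makes $X_{i,*}$ Gaussian. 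The common value of $\tau^2$ together with common noise variance $\sigma^2$ pins down identical prior variances. The main obstacle I anticipate is the bookkeeping in the first step, namely verifying cleanly that Lemma~\ref{lem:WLSC_condMI} applies after the $(U_n,V_n)$ rotation and that conditioning on $(Y_{1,n},Y_{2,n})$ versus its rotation is interchangeable; the remainder reduces to routine Gaussian posterior calculus.
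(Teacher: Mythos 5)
Your proposal is correct, and it departs from the paper's argument in an interesting way at the final step. Both proofs open identically: pass the rotated hypothesis through Lemma~\ref{lem:WLSC_condMI} to conclude that $X_{1,*}+X_{2,*}$ and $X_{1,*}-X_{2,*}$ are conditionally independent given $(Y_{1,*},Y_{2,*})$, and establish the product structure $(X_{1,*},Y_{1,*})\perp(X_{2,*},Y_{2,*})$ (you do this via characteristic functions while the paper reapplies Lemma~\ref{lem:WLSC_condMI} to the unrotated pair and checks the Markov relation $X_{1,*}\to Y_{1,*}\to Y_{2,*}$; both work). The divergence is in how you extract Gaussianity. The paper invokes Bernstein's theorem (Lemma~\ref{thm:Bernstein}) to conclude that the \emph{full} posterior laws $X_{i,*}|\{Y_{i,*}=y_i\}$ are Gaussian with identical variance, then closes with Lemma~\ref{channelBern}, which reconstructs the marginal from a Gaussian posterior. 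You instead use only the \emph{weaker} consequence of conditional independence --- that the conditional covariance of sum and difference vanishes --- giving $\Var(X_{1,*}|Y_{1,*})=\Var(X_{2,*}|Y_{2,*})$ a.s., which by the product structure forces both posterior variances to equal a common constant $\tau^2$; then the second-order Tweedie/Miyasawa identity $\Var(X_{i,*}|Y_{i,*}=y)=\sigma^2+\sigma^4(\log f_{Y_{i,*}})''(y)$ (valid here since $f_{Y_{i,*}}$ is smooth by Lemma~\ref{lem:GaussSmooth} and $\EE[X_{i,*}^2]<\infty$) forces $(\log f_{Y_{i,*}})''$ to be constant, so $Y_{i,*}$ and hence $X_{i,*}$ is Gaussian, with the shared $\tau^2$ pinning down identical prior variances. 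Your route trades the named characterization theorem for an elementary posterior-variance calculus; it also demonstrates that the full strength of conditional independence isn't needed --- the vanishing conditional covariance alone suffices --- which is a mild sharpening of the observation. The one bookkeeping detail worth making explicit if you wrote this out is that constancy of $\Var(X_{i,*}|Y_{i,*})$ a.e.\ combined with smoothness of $f_{Y_{i,*}}$ gives constancy of $(\log f_{Y_{i,*}})''$ everywhere, so the quadratic-log conclusion is genuine.
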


We require two facts.  First, a fundamental result of Bernstein \cite{bernstein1941property} asserts the following:
\begin{lemma}\label{thm:Bernstein} \cite[Theorem 5.1.1]{bryc2012normal}
If $X_1,X_2$ are independent random variables such that $X_1 + X_2$ and $X_1-X_2$ are independent, then $X_1$ and $X_2$ are normal, with identical variances.
\end{lemma}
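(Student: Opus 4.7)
The plan is to reduce the problem to Bernstein's theorem (Lemma~\ref{thm:Bernstein}) applied to the \emph{conditional} distribution of $(X_{1,*},X_{2,*})$ given $(Y_{1,*},Y_{2,*})$. Note that independence $X_{1,*}\perp X_{2,*}$ is already implied by $X_{1,n}\perp X_{2,n}$ together with joint weak convergence, since the joint characteristic function factors and this factorization passes to the limit. What remains is to translate the vanishing conditional mutual information in the hypothesis into Bernstein's hypothesis, and then to invert the Gaussian channel to lift the conclusion from posteriors to priors.

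First, to push the hypothesis through to the limit, I would work with $\tilde X_{1,n}:=S_n:=X_{1,n}+X_{2,n}$, $\tilde X_{2,n}:=D_n:=X_{1,n}-X_{2,n}$ and the transformed noises $\tilde Z_1:=Z_1+Z_2$, $\tilde Z_2:=Z_1-Z_2$. Then $(\tilde Z_1,\tilde Z_2)\sim N(0,2\sigma^2 I)$ is independent of $(\tilde X_{1,n},\tilde X_{2,n})$, the variables $\tilde Y_{i,n}:=\tilde X_{i,n}+\tilde Z_i$ generate the same $\sigma$-algebra as $(Y_{1,n},Y_{2,n})$, and $\sup_n\EE[\tilde X_{i,n}^2]<\infty$. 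Lemma~\ref{lem:WLSC_condMI} therefore yields
\[
0 \;=\; \liminf_{n\to\infty} I(S_n;D_n\mid Y_{1,n},Y_{2,n}) \;\geq\; I(S_*;D_*\mid Y_{1,*},Y_{2,*}),
\]
so $S_*\perp D_*\mid(Y_{1,*},Y_{2,*})$.

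Next, I would exploit the product structure $(X_{1,*},Y_{1,*})\perp(X_{2,*},Y_{2,*})$, which follows from the mutual independence of $X_{1,*},X_{2,*},Z_1,Z_2$. This forces the regular conditional law of $(X_{1,*},X_{2,*})$ given $(Y_{1,*},Y_{2,*})=(y_1,y_2)$ to be the product measure $P_{X_{1,*}\mid Y_{1,*}=y_1}\otimes P_{X_{2,*}\mid Y_{2,*}=y_2}$ for $P_{Y_{1,*}Y_{2,*}}$-a.e.\ $(y_1,y_2)$. Under this conditional law, $X_{1,*}$ and $X_{2,*}$ are independent, and by the previous step their sum $S_*$ and difference $D_*$ are also independent. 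Bernstein's theorem (Lemma~\ref{thm:Bernstein}) then forces $P_{X_{1,*}\mid Y_{1,*}=y_1}$ and $P_{X_{2,*}\mid Y_{2,*}=y_2}$ to be Gaussian with a common variance; since this common variance depends only on $y_1$ through one factor and only on $y_2$ through the other, it must be almost surely a deterministic constant $v$.

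Finally, I would invert the Gaussian channel to lift Gaussianity from posteriors to priors. Writing $P_{X_{i,*}}(dx)\,\phi_\sigma(y-x)=P_{X_{i,*}\mid Y_{i,*}=y}(dx)\,f_{Y_{i,*}}(y)$, where $\phi_\sigma$ is the $N(0,\sigma^2)$ density and $f_{Y_{i,*}}$ is the everywhere positive density supplied by Lemma~\ref{lem:GaussSmooth}, I would substitute the Gaussian form of the posterior and fix any $y=y_0$ to exhibit $P_{X_{i,*}}$ as having Lebesgue density proportional to the exponential of a quadratic in $x$; hence $X_{i,*}\sim N(m_i,\tau_i^2)$, and the Bayes identity $v=\tau_i^2\sigma^2/(\tau_i^2+\sigma^2)$ together with the common posterior variance $v$ yields $\tau_1^2=\tau_2^2$. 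I anticipate that the main technical point will be justifying the pointwise application of Bernstein from $I(S_*;D_*\mid Y_{1,*},Y_{2,*})=0$; this reduces to a standard measurable-selection/disintegration argument thanks to the clean product conditional structure guaranteed by $(X_{1,*},Y_{1,*})\perp(X_{2,*},Y_{2,*})$.
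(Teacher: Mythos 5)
Your proposal does not prove the statement it was asked to prove. The statement is Bernstein's theorem itself (Lemma~\ref{thm:Bernstein}): if $X_1,X_2$ are independent and $X_1+X_2$, $X_1-X_2$ are independent, then $X_1,X_2$ are normal with equal variances. Your argument \emph{invokes} Lemma~\ref{thm:Bernstein} as its key tool ("Bernstein's theorem then forces $P_{X_{1,*}\mid Y_{1,*}=y_1}$ and $P_{X_{2,*}\mid Y_{2,*}=y_2}$ to be Gaussian\dots"), so as a proof of that lemma it is circular. What you have actually written is a proof sketch of a different result in the paper, namely Lemma~\ref{lem:GaussianCharMI} (the weak-convergence characterization of Gaussians via vanishing conditional mutual information): your use of Lemma~\ref{lem:WLSC_condMI} on the sum/difference variables, the pointwise application of Bernstein to the posteriors, and the channel-inversion step (which parallels Lemma~\ref{channelBern}) all reproduce, in essentially the same order, the paper's own proof of Lemma~\ref{lem:GaussianCharMI} — with one genuinely different touch, the product structure $(X_{1,*},Y_{1,*})\perp(X_{2,*},Y_{2,*})$ replacing the paper's mutual-information argument for the Markov chain $X_{1,*}\to Y_{1,*}\to Y_{2,*}$.

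For the record, the paper does not prove Lemma~\ref{thm:Bernstein} either: it is cited from Bryc (Theorem 5.1.1), and the paper only adds the remark that the equal-variance clause is immediate because, for zero-mean $X_1,X_2$, independence of the sum and difference gives $\EE[X_1^2]-\EE[X_2^2]=\EE[(X_1-X_2)(X_1+X_2)]=0$. A self-contained proof would have to establish normality directly — the classical route is via characteristic functions: independence of $X_1+X_2$ and $X_1-X_2$ yields the functional equation $\phi_1(s+t)\phi_2(s-t)=\phi_1(s)\phi_1(t)\phi_2(s)\phi_2(-t)$, from which one shows the characteristic functions are nowhere vanishing and that their logarithms are quadratic polynomials. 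None of this appears in your proposal, so the missing idea is precisely the content of the lemma you were asked to prove.
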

\begin{remark}
Formally, Bernstein's theorem does not comment on the identical variances of $X_1,X_2$.  However, assuming without loss of generality that $X_1,X_2$ are zero-mean, the observation that  $X_1$ and $X_2$ have identical variances is immediate  since $\EE[X_1^2]-\EE[X_2^2] =\EE[(X_1-X_2)(X_1+X_2)] = 0$.  This fact was  explicitly noted by  Geng and Nair \cite{GengNair}.
\end{remark}

Second, we will need the following   observation:
\begin{lemma}\label{channelBern}
Let $Y=X+Z$, where $Z\sim N(0,\sigma^2)$ is a non-degenerate Gaussian, independent of $X$.  If $X|\{Y=y\}$ is normal for $P_Y$-a.e. $y$, with variance $\sigma_X^2$ not depending on $y$, then $X$ is normal with variance $\frac{\sigma^2 \sigma_X^2}{\sigma^2-\sigma_X^2}$.
\end{lemma}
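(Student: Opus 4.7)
The plan is to exploit Bayes' rule to convert the conditional-Gaussianity assumption into a functional equation for $f_X$, and then read off Gaussianity of $X$ from the coefficients of a quadratic.

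First I would argue that $X$ has a density. Indeed, $Y=X+Z$ has a smooth positive density $f_Y$ by Lemma \ref{lem:GaussSmooth}. By hypothesis, for $P_Y$-a.e.\ $y$ the kernel $f_{X|Y}(x \mid y) = \frac{1}{\sqrt{2\pi \sigma_X^2}} \exp\!\bigl(-\tfrac{(x-m(y))^2}{2\sigma_X^2}\bigr)$ for some measurable $m(y)$. Writing $P(X\in A) = \int f_Y(y)\int_A f_{X|Y}(x\mid y)\,dx\,dy$ and applying Fubini yields $f_X(x) = \int f_{X|Y}(x\mid y) f_Y(y)\,dy$.

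Second, I would combine Bayes with the explicit form of the Gaussian noise density $\varphi_\sigma(y-x)$ to obtain, for $(x,y)$ in a set of full product measure,
\begin{align}
\log f_X(x) = \log f_Y(y) - \frac{(x-m(y))^2}{2\sigma_X^2} + \frac{(y-x)^2}{2\sigma^2} + C.\notag
\end{align}
Expanding the quadratics and collecting powers of $x$, the coefficient of $x$ on the right-hand side is $\frac{m(y)}{\sigma_X^2} - \frac{y}{\sigma^2}$, while the coefficient of $x^2$ is $\frac{1}{2\sigma^2} - \frac{1}{2\sigma_X^2}$. Since the left-hand side depends only on $x$, the coefficient of $x$ must be independent of $y$, forcing $m(y) = \frac{\sigma_X^2}{\sigma^2}\, y + c$ for some constant $c$. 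This is the key step that pins down the affine dependence.

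Third, plugging this affine $m(y)$ back in and absorbing all $y$-dependent terms into the constant (they must cancel against $\log f_Y(y)$ up to an $x$-independent constant), one sees that $\log f_X(x)$ is a quadratic polynomial in $x$ with leading coefficient $\tfrac{1}{2\sigma^2}-\tfrac{1}{2\sigma_X^2} = -\tfrac{\sigma^2-\sigma_X^2}{2\sigma^2\sigma_X^2}$. Integrability of $f_X$ forces this to be strictly negative, i.e.\ $\sigma_X^2 < \sigma^2$, and identifies $X$ as Gaussian with variance equal to the reciprocal of $\tfrac{\sigma^2-\sigma_X^2}{\sigma^2\sigma_X^2}$, namely $\tfrac{\sigma^2\sigma_X^2}{\sigma^2-\sigma_X^2}$, as claimed.

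The only delicate point I anticipate is the almost-everywhere qualifier: the displayed identity holds only for $(x,y)$ in a set of full measure, so one must argue that the coefficient matching (which I wrote pointwise) is valid in an a.e.\ sense, and that $m(y)$ agrees with an affine function for $P_Y$-a.e.\ $y$. Since $f_Y$ is continuous and strictly positive by Lemma \ref{lem:GaussSmooth}, this a.e.\ statement upgrades to equality on an open dense set, which is enough to conclude the claimed Gaussian form of $f_X$ (a polynomial identity is determined by its values on a set with a limit point).
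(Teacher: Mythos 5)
Your proof is correct and takes essentially the same route as the paper: apply Bayes' rule to express $\log f_X(x)$ in terms of $\log f_Y(y)$, the Gaussian channel density, and the assumed Gaussian posterior, and observe that the right-hand side is quadratic in $x$, which (together with integrability) forces $f_X$ to be a Gaussian density with the stated variance. The intermediate step in which you pin down $m(y)$ to be affine is not needed --- fixing a single $y_0$ in the full-measure set already exhibits $\log f_X$ as a quadratic --- but it does no harm; also note the paper's equation \eqref{densities} has the last two terms transposed relative to the correct Bayes identity, and your version is the right one.
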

\begin{proof}
If $X|\{Y=y\}$ is normal for $P_Y$-a.e. $y$ with variance $\sigma_X^2$ not depending on $Y$, then $X = \EE[X|Y] +W$ a.s., where $W\sim N(0,\sigma_X^2)$ is independent of $Y$.  In particular, $X$ has density $f_X$ by Lemma \ref{lem:GaussSmooth}.  Also, by Lemma \ref{lem:GaussSmooth}, $Y$ has density $f_Y$.  The conditional density $f_{Y|X}$ exists and is Gaussian by definition, and $f_{X|Y}$ is a valid Gaussian density  for $P_Y$-a.e. $y$, with corresponding variance $\sigma_X^2$ not depending on $y$.  Thus, we have
\begin{align}
\log f_X(x) = \log f_Y(y) + \log f_{Y|X}(y|x) - \log f_{X|Y}(x|y).\label{densities}
\end{align}
The key observation is that the RHS of \eqref{densities} is a quadratic function in $x$.  Since $f_X$ is a density and must integrate to unity, it must therefore be Gaussian.  Direct computation reveals that $X$ has variance $\frac{\sigma^2 \sigma_X^2}{\sigma^2-\sigma_X^2}$. 
\end{proof}

\begin{proof}[Proof of Lemma \ref{lem:GaussianCharMI}]
 Let $Y_{i,*}$ be as in the statement of Lemma \ref{lem:WLSC_condMI}, and recall that  the same lemma asserts $(Y_{1,n},Y_{2,n}) \xrightarrow{\mathcal{D}} (Y_{1,*},Y_{2,*})$.  By definition of $Z_1,Z_2$, the random variables $(Z_1+Z_2)$ and $(Z_1-Z_2)$ are independent and Gaussian with respective variances $2\sigma^2$.  Thus, noting that assumption \eqref{vanishAssume} is equivalent to 
 \begin{align}
\liminf_{n\to\infty} I(X_{1,n} + X_{2,n} ; X_{1,n} - X_{2,n}|Y_{1,n}+Y_{2,n}, Y_{1,n}-Y_{2,n})=0,
\end{align}
we may apply Lemma \ref{lem:WLSC_condMI} to the sequences  $\{X_{1,n} + X_{2,n} , X_{1,n} - X_{2,n}\}$ and $\{Y_{1,n} + Y_{2,n} , Y_{1,n} - Y_{2,n}\}$ to obtain 
\begin{align}
I(X_{1,*}+X_{2,*};X_{1,*}-X_{2,*}|Y_{1,*},Y_{2,*})=I(X_{1,*}+X_{2,*};X_{1,*}-X_{2,*}|Y_{1,*}+Y_{2,*}, Y_{1,*}-Y_{2,*})=0.
\end{align}
Using independence of $X_{1,n},X_{2,n}$, Lemma \ref{lem:WLSC_condMI} applied directly yields
 \begin{align}
I(X_{1,*} ; X_{2,*}|Y_{1,*},Y_{2,*})=0.
\end{align}
In particular, for $P_{Y_{1,*} Y_{2,*}}$-a.e. $y_1,y_2$, the random variables  $X_{1,*}|\{Y_{1,*},Y_{2,*}=y_1,y_2\}$ and $X_{2,*}|\{Y_{1,*},Y_{2,*}=y_1,y_2\}$ are independent, and $(X_{1,*}+X_{2,*})|\{Y_{1,*},Y_{2,*}=y_1,y_2\}$ and $(X_{1,*}-X_{2,*})|\{Y_{1,*},Y_{2,*}=y_1,y_2\}$ are independent.  Therefore, Lemma \ref{thm:Bernstein} implies that $X_{1,*}|\{Y_{1,*},Y_{2,*}=y_1,y_2\}$ and $X_{2,*}|\{Y_{1,*},Y_{2,*}=y_1,y_2\}$ are normal with identical variances.  Starting with the third claim of Lemma \ref{lem:weakConvergenceGaussian} and applying lower semicontinuity of relative entropy, we observe
\begin{align}
I(X_{1,*};Y_{1,*}) = \lim_{n\to \infty} I(X_{1,n};Y_{1,n}) 
&= \lim_{n\to \infty}I(X_{1,n};Y_{1,n},Y_{2,n}) \\
&\geq I(X_{1,*};Y_{1,*},Y_{2,*}) \\
&=I(X_{1,*};Y_{1,*} ) +I(X_{1,*};Y_{2,*}|Y_{1,*}),
\end{align}
so it follows  that $X_{1,*} \to Y_{1,*} \to Y_{2,*}$, and therefore $X_{1,*}|\{Y_{1,*},Y_{2,*}=y_1,y_2\} \sim X_{1,*}|\{Y_{1,*} =y_1 \}$.  Similarly, $X_{2,*}|\{Y_{1,*},Y_{2,*}=y_1,y_2\} \sim X_{2,*}|\{Y_{2,*} =y_2 \}$.  So, we may conclude that the random variables $X_{1,*}|\{Y_{1,*} =y_1 \}$ and $X_{2,*}|\{Y_{2,*} =y_2 \}$ are normal, with identical variances not depending on $y_1,y_2$.  Invoking Lemma \ref{channelBern}, we find  that both $X_{1,*}$ and $X_{2,*}$ are normal with identical variances, completing the proof.

\end{proof}

\subsection{Existence of sequences satisfying $\lim_{n\to \infty}  \mathsf{s}_{\lambda}(X_n,\snr|Q_n)  = \mathsf{V}_{\lambda}(\snr)$ that converge weakly to Gaussian}\label{subsec:gaussiansequence}

The goal of this section is to prove the following result, which was the first essential ingredient needed for the proof of Theorem \ref{thm:explicitVsnr} (i.e., Claim I).  
\begin{lemma}\label{lem:ExistsGaussianSequence}
There exists a sequence $\{X_n,Q_n\}$ satisfying
\begin{align}
\lim_{n\to \infty}  \mathsf{s}_{\lambda}(X_n,\snr|Q_n) &= \mathsf{V}_{\lambda}(\snr)  \label{approachOpt}\\
\EE[X_n^2] &\leq 1 ~~\,n\geq 1 \label{Qalph}
\end{align}
and $(X_n,Q_n) \xrightarrow{\mathcal{D}} (X_*,Q_*)$,   with $X_*|\{Q_*=q\} \sim N(\mu_q, \sigma_X^2 )$ for  $P_{Q_*}$-a.e. $q$, with  $\sigma_X^2\leq 1$ not depending on $q$. 
\end{lemma}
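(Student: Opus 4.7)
My plan is to run the rotational-invariance (``doubling'') trick of Geng and Nair, and then invoke Lemma \ref{lem:GaussianCharMI}.

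\textbf{Step 1: set-up and tightness.} Pick any sequence $(X_n,Q_n)$ with $\EE[X_n^2]\leq 1$ and $\mathsf{s}_{\lambda}(X_n,\snr|Q_n)\to\mathsf{V}_{\lambda}(\snr)$. Because $\mathsf{s}_{\lambda}(\cdot,\snr|Q=q)$ is translation invariant (as noted in the proof of Theorem \ref{thm:explicitVsnr}), I recenter $X_n$ by $\EE[X_n|Q_n]$ and thus assume $\EE[X_n|Q_n=q]=0$ for every $q$; this does not increase $\EE[X_n^2]$ or affect $\mathsf{s}_{\lambda}$. By the remark after \eqref{entropOptimization} I fix $|Q_n|\leq 2$. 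Then $\{(X_n,Q_n)\}$ is tight and Prokhorov yields a subsequential weak limit $(X_*,Q_*)$ with $\EE[X_*^2]\leq 1$ by weak lower-semicontinuity of the second moment. It suffices to show that $X_*|\{Q_*=q\}\sim N(0,\sigma_X^2)$ for $P_{Q_*}$-a.e.\ $q$, with a common constant $\sigma_X^2$.

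\textbf{Step 2: doubling.} Take independent copies $(X_n^{(i)},Q_n^{(i)})_{i=1,2}$ of $(X_n,Q_n)$, and put $\vecX_n=(X_n^{(1)},X_n^{(2)})$, $\mathbf{Q}_n=(Q_n^{(1)},Q_n^{(2)})$, $\vecY_n=\sqrt{\snr}\,\vecX_n+\vecZ$ with $\vecZ\sim N(0,I)$. Let $U$ be the $2\times 2$ orthogonal matrix with rows $(1,1)/\sqrt{2}$ and $(1,-1)/\sqrt{2}$, and set $\tilde{\vecX}_n=U\vecX_n$, $\tilde{\vecY}_n=U\vecY_n$. Rotational symmetry of $\vecZ$ and invertibility of $U$ preserve every entropy and mutual information, so $\mathsf{s}_{\lambda}(\tilde{\vecX}_n,\snr\,I|\mathbf{Q}_n)=\mathsf{s}_{\lambda}(\vecX_n,\snr\,I|\mathbf{Q}_n)$. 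A product test channel gives $\mathsf{s}_{\lambda}(\vecX_n,\snr\,I|\mathbf{Q}_n)\leq 2\mathsf{s}_{\lambda}(X_n,\snr|Q_n)=2\mathsf{V}_{\lambda}(\snr)+o(1)$, while Lemma \ref{lem:additivityofV} applied in the rotated coordinates yields
\begin{align*}
\mathsf{s}_{\lambda}(\tilde{\vecX}_n,\snr\,I|\mathbf{Q}_n)\geq \mathsf{s}_{\lambda}(\tilde{X}_n^{(1)},\snr|\tilde{X}_n^{(2)},\mathbf{Q}_n)+\mathsf{s}_{\lambda}(\tilde{X}_n^{(2)},\snr|\tilde{Y}_n^{(1)},\mathbf{Q}_n)\geq 2\mathsf{V}_{\lambda}(\snr),
\end{align*}
the last bound using $\EE[(\tilde{X}_n^{(i)})^2]\leq 1$, a consequence of the zero conditional means. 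Sandwiching these bounds forces both summands to equal $\mathsf{V}_{\lambda}(\snr)+o(1)$ and drives the slack in Lemma \ref{lem:additivityofV} to zero; a chain-rule / data-processing inspection of that lemma's proof identifies the slack with $I(\tilde{X}_n^{(1)};\tilde{X}_n^{(2)}|\tilde{\vecY}_n,\mathbf{Q}_n)$. Invertibility of $U$ rewrites this as
\begin{align*}
I\bigl(X_n^{(1)}+X_n^{(2)};\,X_n^{(1)}-X_n^{(2)}\,\bigm|\,Y_n^{(1)},Y_n^{(2)},\mathbf{Q}_n\bigr)\to 0.
\end{align*}

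\textbf{Step 3: Bernstein.} Pass to a further subsequence so that $P(\mathbf{Q}_n=(q_1,q_2))$ converges for each of the finitely many pairs. On every slice with positive limiting mass, the slicewise version of the above conditional mutual information tends to zero; on such a slice, $X_n^{(1)}$ and $X_n^{(2)}$ remain independent with bounded second moments, so Lemma \ref{lem:GaussianCharMI} concludes that the weak limits $X_*|\{Q_*=q_1\}$ and $X_*|\{Q_*=q_2\}$ are independent Gaussians with identical variances. Letting $(q_1,q_2)$ range over the support of $P_{Q_*}\otimes P_{Q_*}$ pins down a single $\sigma_X^2\leq\EE[X_*^2]\leq 1$, establishing the claim.

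\textbf{Main obstacle.} The delicate step is isolating the slack inside Lemma \ref{lem:additivityofV} as exactly the conditional mutual information indicated; I would extract it by running the chain-rule decomposition in the proof of that lemma explicitly, so that the single data-processing inequality it uses contributes precisely the required $I(\cdot\,;\cdot\,|\cdot)$. The descent of the vanishing slack to slices of $\mathbf{Q}_n$ is harmless because $\mathbf{Q}_n$ takes only finitely many values.
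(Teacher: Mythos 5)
Your proposal correctly reproduces the first half of the argument (recentering, Caratheodory reduction, tightness, Prokhorov, the doubling step via Lemma \ref{lem:Gaussiansubadditivity}, and the eventual invocation of Lemma \ref{lem:GaussianCharMI}), but it breaks down at the crucial step where you need to deduce that $I(X_{1,n}+X_{2,n};X_{1,n}-X_{2,n}\mid Y_{1,n},Y_{2,n},\mathbf{Q}_n)\to 0$.

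The issue is that the slack in Lemma \ref{lem:additivityofV} is \emph{not} the conditional mutual information you want. Inspecting the identity at the end of that lemma's proof, the difference between its two sides is exactly $(\lambda-1)I(X_{-};Y_{+}\mid V,\mathbf{Q})$ for the particular test channel $V$ being used; it is indexed by $V$, involves the pair $(X_-,Y_+)$ rather than $(X_+,X_-)$, and conditions on $V$ rather than on $(Y_+,Y_-)$. The sandwich you set up shows at best that this $V$-dependent data-processing slack is small for a near-optimal product $V$; it does \emph{not} give any bound on $I(X_{+,n};X_{-,n}\mid Y_{+,n},Y_{-,n},\mathbf{Q}_n)$. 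In fact the sum $\mathsf{s}_{\lambda}(X_{+,n},\ldots)+\mathsf{s}_{\lambda}(X_{-,n},\ldots)\to 2\mathsf{V}_{\lambda}(\snr)$ tells you nothing directly about the joint law of $(X_{+,n},X_{-,n})$ given $(Y_{+,n},Y_{-,n})$, because the infimizing $V$'s in those two functionals are taken \emph{separately} and do not determine the joint conditional distribution.

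The paper closes this gap with an entirely different, and essential, device that your proposal omits: it picks the minimizing sequence to \emph{also} minimize the secondary objective $h(Y_n|Q_n)-h(X_n|Q_n)$ among all admissible sequences (cf.\ \eqref{smallestH}), and proves this minimum is finite. Then it uses the rotation-invariance (Markovity) identity \eqref{sandwichMI},
\begin{align*}
h(Y_n|Q_n)-h(X_n|Q_n) = h(Y'_n|Q'_n)-h(X'_n|Q'_n) + \tfrac{1}{2}I(X_{+,n};X_{-,n}\mid Y_{+,n},Y_{-,n},\mathbf{Q}_n),
\end{align*}
where $(X'_n,Q'_n)$ is the time-shared admissible sequence constructed from the doubling step. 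Because $(X'_n,Q'_n)$ is itself admissible, it must also satisfy \eqref{smallestH}, which together with finiteness forces the nonnegative remainder $\tfrac{1}{2}I(\cdot;\cdot\mid\cdot)$ to vanish. Note this is a property of a carefully chosen extremizing sequence, not of an arbitrary one, and it is this secondary optimization that makes the argument close. (As a small related symptom: with the secondary objective one needs $|\mathcal{Q}|=3$, not $2$, in the Caratheodory step, to preserve the three quantities $\EE[X^2]$, $\mathsf{s}_{\lambda}$, and $h(Y|Q)-h(X|Q)$.)
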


A rough outline of the proof is as follows:  We first establish a superadditivity property of $\mathsf{s}_{\lambda}(X,\snr|Q)$, and then exploit this property in conjunction with the characterization of Gaussians proved in Lemma \ref{lem:GaussianCharMI} to verify the existence of sequence  $\{X_n,Q_n\}$ satisfying $\lim_{n\to \infty}  \mathsf{s}_{\lambda}(X_n,\snr|Q_n) = \mathsf{V}_{\lambda}(\snr)$ which converges weakly to Gaussian.
  We begin with a straightforward observation:
\begin{lemma}\label{lem:additivityofV}  Let $\vecX=(X_1,X_2)$, $\vecY=(Y_1,Y_2)$, and $Q$ have joint distribution $P_{\vecX \vecY Q} = P_{X_1X_2 Q} P_{Y_1|X_1}P_{Y_2|X_2}$.  If $V$ satisfies $\vecX \to \vecY \to V|Q$, then for $\lambda\geq 1$, we have
\begin{align}
&I(Y_1 ,Y_2 ;V  |Q) - h(X_1,X_2|Q) - \lambda\left( I(X_1,X_2;V|Q)-h(Y_1,Y_2|Q) \right) \notag\\
&\geq~~ I(Y_1 ;V | X_2,Q) - h(X_1|X_2,Q) - \lambda\left(I(X_1;V|X_2,Q) - h(Y_1|X_2,Q)\right) \label{additiveIneq}\\
&~~+  I(Y_2 ;V |Y_1,Q) - h(X_2|Y_1,Q) -\lambda\left(  I(X_2;V|Y_1,Q)- h(Y_2|Y_1,Q)  \right). \notag
\end{align}
Moreover,  $X_1 \to Y_1\to V|(X_2,Q)$ and $X_2 \to Y_2\to V|(Y_1,Q)$.
\end{lemma}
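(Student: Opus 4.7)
The plan is to turn \eqref{additiveIneq} into an identity: I aim to show directly that
\[
(\text{LHS of \eqref{additiveIneq}}) - (\text{RHS of \eqref{additiveIneq}}) = (\lambda - 1)\, I(X_2; Y_1 \mid V, Q),
\]
which is manifestly nonnegative when $\lambda \geq 1$. This not only proves the inequality but also pinpoints the deficit, explaining why the statement is tight exactly when $X_2$ and $Y_1$ are conditionally independent given $(V,Q)$.

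The computation is a bookkeeping exercise in the chain rule. I would group the difference LHS $-$ RHS into four bundles: the differential entropy of $\vecX$, the entropy of $\vecY$ scaled by $\lambda$, the mutual information $I(\vecY;V\mid Q)$, and the scaled $\lambda\, I(\vecX;V\mid Q)$. Applying the basic chain rules, such as $h(X_1,X_2\mid Q) = h(X_1\mid X_2,Q) + h(X_2\mid Q)$ and $I(Y_1,Y_2;V\mid Q) = I(Y_1;V\mid Q) + I(Y_2;V\mid Y_1, Q)$, the two entropy bundles collapse to $\pm I(X_2;Y_1\mid Q)$ (with the appropriate $\lambda$ weight on the $\vecY$ side). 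For the mutual-information bundles, a second chain rule application of the form
\[
I(Y_1;V\mid Q) - I(Y_1;V\mid X_2,Q) = I(Y_1;X_2\mid Q) - I(Y_1;X_2\mid V,Q)
\]
(and its analogue with $X_2$ in place of $Y_1$) reduces everything to the two quantities $I(Y_1;X_2\mid Q)$ and $I(Y_1;X_2\mid V,Q)$. Once all pieces are added with their correct signs and $\lambda$-weights, the $I(Y_1;X_2\mid Q)$ contributions cancel and the $I(Y_1;X_2\mid V,Q)$ contributions combine to $(\lambda-1)\, I(Y_1;X_2\mid V,Q)$.

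For the ``moreover'' claims, the strategy is purely structural. The factorization $P_{\vecX\vecY Q} = P_{X_1X_2Q}P_{Y_1|X_1}P_{Y_2|X_2}$ yields $Y_2 \perp (Y_1,X_1) \mid (X_2,Q)$, and the hypothesis $\vecX \to \vecY \to V \mid Q$ gives $V \perp \vecX \mid (\vecY, Q)$. Integrating $Y_2$ out against the kernel $P_{Y_2|X_2}$ inside the conditional distribution of $V$ given $(Y_1,\vecX,Q)$ removes any dependence on $X_1$, which is the statement $X_1 \to Y_1 \to V \mid (X_2, Q)$; the chain $X_2 \to Y_2 \to V \mid (Y_1, Q)$ is symmetric.

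I do not anticipate any genuine obstacle: the identity is just the chain rule applied carefully. The one thing to remain vigilant about is that $X_1$ and $X_2$ are allowed to be correlated through $Q$, so one cannot treat $Y_1$ and $X_2$ as conditionally independent given $Q$, and decompositions like $h(Y_1,Y_2\mid Q) = h(Y_1\mid Q) + h(Y_2\mid Y_1, Q)$ must retain the full conditioning. Keeping every conditioning set intact throughout the manipulation is all that is required.
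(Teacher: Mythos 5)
Your proposal is correct and takes essentially the same route as the paper's own proof: both establish the exact identity $\mathrm{LHS}-\mathrm{RHS}=(\lambda-1)\,I(X_2;Y_1\mid V,Q)$ by chain-rule bookkeeping (the cancellation of the $I(X_2;Y_1\mid Q)$ terms and survival of $(\lambda-1)I(X_2;Y_1\mid V,Q)$ work out exactly as you sketch), and both obtain the two Markov chains by factoring the joint distribution and marginalizing. No gaps.
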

\begin{proof}
The second claim is straightforward.  Indeed, using $P_{\vecX \vecY Q} = P_{X_1X_2 Q} P_{Y_1|X_1}P_{Y_2|X_2}$, we can factor the joint distribution of $(\vecX,\vecY, V,Q)$ as $P_{\vecX \vecY V Q} = P_{X_1X_2 Q} P_{Y_1|X_1} P_{Y_2|X_2}P_{V|Y_1 Y_2 Q} = P_{X_1X_2 Q} P_{Y_1|X_1}P_{Y_2V|Y_1 X_2 Q}$.  Marginalizing over $Y_2$, we find that $X_1 \to Y_1\to V|(X_2,Q)$.  The symmetric Markov chain follows similarly by writing $P_{\vecX \vecY V, Q} = P_{X_1X_2 Y_1 Q} P_{Y_2|X_2} P_{V|Y_1 Y_2 Q}$ and marginalizing over $X_1$. 

To prove the claimed inequality, note the following identities:
\begin{align}
 & I(Y_1 ,Y_2 ;V  |Q) - h(X_1,X_2|Q) \notag\\
  &=   I(Y_1 ;V  |Q)  +   I(Y_2 ;V |Q,Y_1) - h(X_2|Q) - h(X_1|Q,X_2)\\
  &=   I(Y_1 ;V  |Q)  +   I(Y_2 ;V |Q,Y_1) - h(X_2|Q,Y_1) - h(X_1|Q,X_2) -I(X_2;Y_1|Q)\\
  &=  I(Y_1 ;V | Q,X_2)  +   I(Y_2 ;V |Q,Y_1) - h(X_2|Q,Y_1) - h(X_1|Q,X_2) -I(X_2;Y_1|Q,V),
\end{align}
and
\begin{align}
&I(X_1,X_2;V|Q)-h(Y_1,Y_2|Q) \notag\\
&= I(X_2;V|Q)+I(X_1;V|Q,X_2) - h(Y_1|Q) - h(Y_2|Q,Y_1)\\
&=I(X_2;V|Q)+I(X_1;V|Q,X_2) - h(Y_1|Q,X_2) - h(Y_2|Q,Y_1) -I(X_2;Y_1|Q)\\
&=I(X_2;V|Q,Y_1)+I(X_1;V|Q,X_2) - h(Y_1|Q,X_2) - h(Y_2|Q,Y_1) -I(X_2;Y_1|Q,V).
\end{align}
Therefore, 
\begin{align}
&I(Y_1 ,Y_2 ;V  |Q) - h(X_1,X_2|Q) - \lambda\left( I(X_1,X_2;V|Q)-h(Y_1,Y_2|Q) \right) \\
&=~~ I(Y_1 ;V | X_2,Q) - h(X_1|X_2,Q) - \lambda\left(I(X_1;V|X_2,Q) - h(Y_1|X_2,Q)\right) \\
&~~+  I(Y_2 ;V |Y_1,Q) - h(X_2|Y_1,Q) -\lambda\left(  I(X_2;V|Y_1,Q)- h(Y_2|Y_1,Q)  \right)  \notag \\
&~~+(\lambda-1)I(X_2;Y_1|V,Q),\notag
\end{align}
which proves the inequality \eqref{additiveIneq} since $\lambda\geq1$.
\end{proof}

Lemma \ref{lem:additivityofV} leads to the desired superadditivity property of $\mathsf{s}_{\lambda}(X,\snr|Q)$:
\begin{lemma}\label{lem:Gaussiansubadditivity}
Let  $P_{Y|X}$ be the Gaussian channel $Y = \sqrt{\snr}X + Z$, where $Z\sim N(0,1)$ is independent of $X$.  Now, suppose $(X,Y,Q)\sim P_{XQ}P_{Y|X}$, and let $(X_1,Y_1,Q_1)$ and $(X_2,Y_2,Q_2)$ denote two independent copies of $(X,Y,Q)$. Define 
\begin{align}
&\xh{+} = \frac{X_{1} + X_{2}}{\sqrt{2}} &\xh{-} = \frac{X_{1} - X_{2}}{\sqrt{2}},\label{hatTransform}
\end{align}
and in a similar manner, define $\yh{+}, \yh{-}$. Letting $\mathbf{Q}=(Q_{1},Q_{2})$, we have for $\lambda\geq 1$
\begin{align}
2 \mathsf{s}_{\lambda}(X,\snr|Q) \geq \mathsf{s}_{\lambda}(X_{+},\snr| X_{-}, \mathbf{Q}) + \mathsf{s}_{\lambda}(X_{-},\snr| Y_{+} ,\mathbf{Q}) \label{firstSineq}
\end{align}
and 
\begin{align}
2 \mathsf{s}_{\lambda}(X,\snr|Q) \geq \mathsf{s}_{\lambda}(X_{+},\snr| Y_{-} ,\mathbf{Q}) + \mathsf{s}_{\lambda}(X_{-},\snr| X_{+} ,\mathbf{Q}).
\end{align}
\end{lemma}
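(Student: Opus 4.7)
The plan is to use the standard ``doubling trick'' from the study of functional inequalities: combine a tensorization upper bound with the orthogonal invariance of the Gaussian channel, then invoke the superadditive decomposition of Lemma~\ref{lem:additivityofV}. All three steps take place at the level of the two-dimensional version of $\mathsf{s}_{\lambda}$ with diagonal SNR matrix $\snr\,I_2$.

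First I would verify the tensorization estimate
\begin{equation*}
\mathsf{s}_{\lambda}\bigl((X_{1},X_{2}),\snr\,I_2\,\big|\,\mathbf{Q}\bigr)\;\leq\;2\,\mathsf{s}_{\lambda}(X,\snr\mid Q).
\end{equation*}
Since $(X_1,Y_1,Q_1)$ and $(X_2,Y_2,Q_2)$ are independent copies of $(X,Y,Q)$, the joint entropies split as $h(X_1,X_2\mid \mathbf{Q})=2h(X\mid Q)$ and $h(Y_1,Y_2\mid \mathbf{Q})=2h(Y\mid Q)$; and choosing a product $V=(V_1,V_2)$ with each $V_i$ near-optimal for the $i$-th scalar problem supplies the remaining factor of $2$ in the infimum over auxiliaries.

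Next, I would invoke orthogonal invariance to switch to the rotated coordinates. The map in \eqref{hatTransform} is orthogonal, so it sends $(X_1,X_2)\mapsto(X_+,X_-)$ and $(Y_1,Y_2)\mapsto(Y_+,Y_-)$. Because the additive noise $(Z_1,Z_2)$ is isotropic $N(0,I_2)$, the rotated noise is again a pair of independent standard Gaussians $(Z_+,Z_-)$, giving $Y_\pm=\sqrt{\snr}\,X_\pm+Z_\pm$, so the channel factors as $P_{Y_+Y_-\mid X_+X_-}=P_{Y_+\mid X_+}\,P_{Y_-\mid X_-}$. Differential entropies and mutual informations are preserved under the bijective rotation, and the Markov relation $(X_1,X_2)\to(Y_1,Y_2)\to V\mid \mathbf{Q}$ is equivalent to $(X_+,X_-)\to(Y_+,Y_-)\to V\mid \mathbf{Q}$, so
\begin{equation*}
\mathsf{s}_{\lambda}\bigl((X_1,X_2),\snr\,I_2\,\big|\,\mathbf{Q}\bigr)\;=\;\mathsf{s}_{\lambda}\bigl((X_+,X_-),\snr\,I_2\,\big|\,\mathbf{Q}\bigr).
\end{equation*}

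Finally, I would apply Lemma~\ref{lem:additivityofV} to the rotated tuple $(X_+,X_-,Y_+,Y_-,\mathbf{Q})$, whose channel factorization hypothesis is precisely what was just verified. For any $V$ with $(X_+,X_-)\to(Y_+,Y_-)\to V\mid \mathbf{Q}$, Lemma~\ref{lem:additivityofV} also furnishes the sub-chains $X_+\to Y_+\to V\mid(X_-,\mathbf{Q})$ and $X_-\to Y_-\to V\mid(Y_+,\mathbf{Q})$, so $V$ is an admissible auxiliary for each of the scalar problems defining $\mathsf{s}_{\lambda}(X_+,\snr\mid X_-,\mathbf{Q})$ and $\mathsf{s}_{\lambda}(X_-,\snr\mid Y_+,\mathbf{Q})$. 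Consequently each summand on the right of \eqref{additiveIneq} is bounded below by the corresponding scalar $\mathsf{s}_{\lambda}$, and taking the infimum over $V$ on the left yields
\begin{equation*}
\mathsf{s}_{\lambda}\bigl((X_+,X_-),\snr\,I_2\,\big|\,\mathbf{Q}\bigr)\;\geq\;\mathsf{s}_{\lambda}(X_+,\snr\mid X_-,\mathbf{Q})+\mathsf{s}_{\lambda}(X_-,\snr\mid Y_+,\mathbf{Q}),
\end{equation*}
which, chained with the two preceding displays, gives \eqref{firstSineq}. Swapping the roles of the first and second coordinates in the application of Lemma~\ref{lem:additivityofV} (so that $X_-$ plays the role of $X_1$) produces the companion inequality. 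The main subtlety is the transition from Lemma~\ref{lem:additivityofV}'s pointwise estimate to an infimum statement; this is exactly where one must invoke the sub-Markov chains guaranteed by Lemma~\ref{lem:additivityofV} rather than just rearranging mutual informations blindly.
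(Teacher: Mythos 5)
Your proposal is correct and uses essentially the same three-step structure as the paper's proof: (i) tensorize $\mathsf{s}_{\lambda}$ to the two-copy product to obtain $\mathsf{s}_{\lambda}\bigl((X_1,X_2),\snr\,I_2\mid\mathbf{Q}\bigr)\leq 2\,\mathsf{s}_{\lambda}(X,\snr\mid Q)$ by restricting the infimum to product auxiliaries, (ii) note that the rotation preserves the two-dimensional functional because it sends the iid Gaussian noise pair to another iid Gaussian noise pair, and (iii) apply Lemma~\ref{lem:additivityofV} to the rotated tuple, using the sub-Markov chains it provides to lower-bound each term by the corresponding conditional $\mathsf{s}_{\lambda}$. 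The paper presents (ii) and (iii) as a single computational chain rather than isolating the equality $\mathsf{s}_{\lambda}((X_1,X_2),\snr\,I_2\mid\mathbf{Q})=\mathsf{s}_{\lambda}((X_+,X_-),\snr\,I_2\mid\mathbf{Q})$ as its own step, but the underlying argument is identical, and your explicit flag about needing the sub-Markov chains (rather than merely juggling mutual informations) correctly identifies the load-bearing part of Lemma~\ref{lem:additivityofV}.
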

\begin{proof}
The crucial observation is that the unitary transformation $(Y_1,Y_2) \mapsto (Y_+,Y_-)$ preserves the Gaussian nature of the channel.  That is, if $Y_i =  \sqrt{\snr}X_i + Z_i$, then   $Y_+ = \sqrt{\snr}X_+ + \frac{1}{\sqrt{2}}(Z_1+Z_2)$ and $Y_- = \sqrt{\snr}X_- + \frac{1}{\sqrt{2}}(Z_1-Z_2)$, where the pair $(\frac{1}{\sqrt{2}}(Z_1+Z_2), \frac{1}{\sqrt{2}}(Z_1-Z_2))$ is equal in distribution to $(Z_1,Z_2)$.  

Thus, consider an arbitrary $V$ satisfying $(X_+,X_-) \to (Y_+,Y_-) \to V|\mathbf{Q}$.  By Lemma \ref{lem:additivityofV} and the above observation, we have
\begin{align}
&I(Y_1 ,Y_2 ;V  |\mathbf{Q}) - h(X_1,X_2|\mathbf{Q}) - \lambda\left( I(X_1,X_2;V|\mathbf{Q})-h(Y_1,Y_2|\mathbf{Q}) \right) \notag\\
&= ~~I(Y_+ ,Y_- ;V  |\mathbf{Q}) - h(X_+,X_-|\mathbf{Q}) - \lambda\left( I(X_+,X_-;V|\mathbf{Q})-h(Y_+,Y_-|\mathbf{Q}) \right) \\
&\geq~~ I(Y_+ ;V | X_-,Q) - h(X_+|X_-,\mathbf{Q}) - \lambda\left(I(X_+;V|X_-,\mathbf{Q}) - h(Y_+|X_-,\mathbf{Q})\right) \\
&~~+  I(Y_- ;V |Y_+,\mathbf{Q}) - h(X_-|Y_+,\mathbf{Q}) -\lambda\left(  I(X_-;V|Y_+,\mathbf{Q})- h(Y_-|Y_+,\mathbf{Q})  \right)\notag\\
&\geq \mathsf{s}_{\lambda}(X_{+},\snr| X_{-} ,\mathbf{Q}) + \mathsf{s}_{\lambda}(X_{-},\snr| Y_{+}, \mathbf{Q}).
\end{align}
This proves \eqref{firstSineq} since
\begin{align}
&\inf_{V : \vecX \to (\vecY,\mathbf{Q}) \to V } I(Y_1 ,Y_2 ;V  |\mathbf{Q}) - h(X_1,X_2|\mathbf{Q}) - \lambda\left( I(X_1,X_2;V|\mathbf{Q})-h(Y_1,Y_2|\mathbf{Q}) \right)\\
&\leq \sum_{i=1}^2 \inf_{V : X_i \to Y_i \to V|Q_i } I(Y_i  ;V  |Q_i) - h(X_i|Q_i) - \lambda\left( I(X_i;V|Q_i)-h(Y_i|Q_i) \right)\\
&=2\, \mathsf{s}_{\lambda}(X,\snr|Q) ,
\end{align}
where the inequality follows since the infimum is taken over a smaller set. 
\end{proof}

\begin{remark}
In some sense, Lemma \ref{lem:Gaussiansubadditivity} is the key to the whole proof.  The subadditivity property ultimately implies that the optimizing distribution  in  optimization problem \eqref{entropOptimization} is rotationally invariant, and therefore Gaussian. This idea was introduced to the information theory literature by Geng and Nair \cite{GengNair}, but has origins in a `doubling trick' which has been used to great success in the literature on functional inequalities \cite{lieb1990gaussian, carlen1991superadditivity} and has been attributed to K.~Ball \cite{barthe1998optimal}.  The reader is referred to \cite{carlen2009subadditivity, lccv2015} for a detailed discussion of the duality between extremisation of information measures and functional inequalities.
\end{remark}

We are now ready to prove Lemma \ref{lem:ExistsGaussianSequence}.

\begin{proof}[Proof of Lemma \ref{lem:ExistsGaussianSequence}]
For convenience, we will refer to any sequence  $\{X_n, Q_n\}$ satisfying \eqref{approachOpt}-\eqref{Qalph} as \emph{admissible}. Since $\mathsf{s}_{\lambda}(X_n,\snr|Q_n)$ is invariant to translations of the mean of $X_n$, we may restrict our attention to admissible sequences satisfying $\EE[X_n]=0$ without any loss of generality.

Begin by letting $\{X_n, Q_n\}$ be an admissible sequence  with the property that
 \begin{align}
 \lim_{n\to \infty}  \left( h(Y_n|Q_n) - h(X_n|Q_n)\right) &\leq  \liminf_{n\to \infty} \left( h({Y}'_n|{Q}'_n) - h({X}'_n|{Q}'_n)\right) \label{smallestH}
 \end{align}
for any other admissible sequence $\{{X}'_n, {Q}'_n\}$.  Clearly, such a sequence can always be constructed by a diagonalization argument.  Moreover, the LHS of \eqref{smallestH} must be finite.  To see this, note first that $h(Y_n|Q_n) - h(X_n|Q_n)\geq 0$ since conditioning reduces entropy.  On the other hand, $\mathsf{s}_{\lambda}(X_n,\snr|Q_n) < \mathsf{V}_{\lambda}(\snr) +1$ for $n$ sufficiently large.  Hence, there is some $V_n$ satisfying $X_n\to Y_n \to V_n |Q_n$ for which
\begin{align}
h(Y_n|Q_n) - h(X_n|Q_n) &\leq \mathsf{V}_{\lambda}(\snr) +1 + \lambda I(X_n;V_n|Q_n) - I(Y_n;V_n|Q_n) - (\lambda-1) h(Y_n|Q_n)\\
&\leq \mathsf{V}_{\lambda}(\snr) +1 + (\lambda-1) I(X_n;V_n|Q_n) -  (\lambda-1) h(Y_n|Q_n)\label{dpi1}\\
&\leq \mathsf{V}_{\lambda}(\snr) +1 + (\lambda-1) I(X_n;Y_n|Q_n) - (\lambda-1) h(Y_n|Q_n)\label{dpi2}\\
&= \mathsf{V}_{\lambda}(\snr) +1 - (\lambda-1) h(Y_n|X_n) ,
\end{align} 
where \eqref{dpi1} and \eqref{dpi2} are both due to the data processing inequality.  Since $ \mathsf{V}_{\lambda}(\snr)<\infty$ trivially and $h(Y_n|X_n) = h(Z)=\frac{1}{2}\log 2\pi e$,  we conclude that the LHS of \eqref{smallestH} is finite as claimed.

By the same logic as in the remark following \eqref{entropOptimization}, we may assume that $Q_n \in \mathcal{Q}$, where $|\mathcal{Q}|=3$, since this is sufficient to preserve the values of $\EE[X_n^2]$, $\mathsf{s}_{\lambda}(X_n,\snr|Q_n)$ and $\left( h(Y_n|Q_n) - h(X_n|Q_n)\right)$.  Thus, since $\mathcal{Q}$ is finite and $\EE[X_n^2]\leq1$, the sequence $\{X_n,Q_n\}$ is tight.  By Prokhorov's theorem \cite{durrett2010probability}, we may assume that there is some $(X_*,Q_*)$ for which $(X_n,Q_n) \xrightarrow{\mathcal{D}} (X_*,Q_*)$ by restricting our attention to a subsequence of $\{X_n, Q_n\}$ if necessary.  Moreover, $\EE[X_*^2] \leq  \liminf_{n\to\infty} \EE[X_n^2]\leq 1$ by Fatou's lemma.%

Next, for a given $n$, let $(X_{1,n},Q_{1,n})$ and $(X_{2,n},Q_{2,n})$ denote two independent copies of $(X_n,Q_n)$. 
 Define 
\begin{align}
&\xh{+,n} = \frac{X_{1,n} + X_{2,n}}{\sqrt{2}} &\xh{-,n} = \frac{X_{1,n} - X_{2,n}}{\sqrt{2}},\label{hatTransform}
\end{align}
In a similar manner, define $\yh{+,n}, \yh{-,n}$, and put $\mathbf{Q}_n=(Q_{1,n},Q_{2,n})$. Applying Lemma \ref{lem:Gaussiansubadditivity} to  the variables $\mathbf{Q}_n\to (\xh{+,n},\xh{-,n})\to (\yh{+,n},\yh{-,n})  $, we obtain
\begin{align}
2 \mathsf{s}_{\lambda}(X_n,\snr|Q_n) \geq \mathsf{s}_{\lambda}(X_{+,n},\snr| X_{-,n} \mathbf{Q}_n) + \mathsf{s}_{\lambda}(X_{-,n},\snr| Y_{+,n} \mathbf{Q}_n),  \label{lastIneq}
\end{align}
and the symmetric inequality 
\begin{align}
2 \mathsf{s}_{\lambda}(X_n,\snr|Q_n) \geq \mathsf{s}_{\lambda}(X_{+,n},\snr| Y_{-,n} \mathbf{Q}_n) + \mathsf{s}_{\lambda}(X_{-,n},\snr| X_{+,n} \mathbf{Q}_n).  \label{lastIneq2}
\end{align}
By independence of $X_{1,n}$ and $X_{2,n}$ and the assumption that $\EE[X_n]=0$, we have 
\begin{align}
\EE[X^2_{+,n}] = \EE[X^2_{-,n}] = \frac{1}{2}\EE[X^2_{1,n}] + \frac{1}{2}\EE[X^2_{2,n}]  = \EE[X_n^2]\leq 1.
\end{align}
Hence, it follows that the terms in the RHS of \eqref{lastIneq} and the RHS of \eqref{lastIneq2} are each lower bounded by $\mathsf{V}_{\lambda}(\snr)$.  Since $\lim_{n\to \infty} \mathsf{s}_{\lambda}(X_n,\snr|Q_n) = \mathsf{V}_{\lambda}(\snr)$ by definition, we must also have
\begin{align}
\lim_{n\to\infty} \frac{1}{2}\Big(  \mathsf{s}_{\lambda}(X_{+,n},\snr| Y_{-,n} \mathbf{Q}_n)  +   \mathsf{s}_{\lambda}(X_{-,n},\snr| Y_{+,n} \mathbf{Q}_n) \Big)  = \mathsf{V}_{\lambda}(\snr).
\end{align}
In particular, by letting the random pair $(X'_n,Q'_n)$  correspond to equal time-sharing between the  pairs $(X_{+,n},  (Y_{-,n} \mathbf{Q}_n))$ and $(X_{-,n},  (Y_{+,n} \mathbf{Q}_n))$,  
we have constructed an admissible  sequence $\{X'_n, Q'_n\}$ which satisfies 
\begin{align}
\lim_{n\to \infty} \mathsf{s}_{\lambda}(X'_n,\snr|Q'_n) = \mathsf{V}_{\lambda}(\snr). \label{primeOptimal}
\end{align}
Using Markovity, the following identity is readily established
\begin{align}
h(Y_n|Q_n) - h(X_n|Q_n) &=~\frac{1}{2}\left( h(\yh{+,n},\yh{-,n}|\mathbf{Q}_n) - h(\xh{+,n},\xh{-,n}|\mathbf{Q}_n)\right)\\
&= ~\frac{1}{2}\left( h(\yh{-,n}|\yh{+,n},\mathbf{Q}_n) - h(\xh{-,n}|\yh{+,n},\mathbf{Q}_n) \right) \\
&~~+\frac{1}{2}\left(h(\yh{+,n}|\yh{-,n},\mathbf{Q}_n) - h(\xh{+,n}|\yh{-,n},\mathbf{Q}_n)\right)   \notag\\
&~~+ \frac{1}{2}I(\xh{+,n};\xh{-,n}|\yh{+,n},\yh{-,n},\mathbf{Q}_n) \notag\\
&=h(Y'_n|Q'_n) - h(X'_n|Q'_n) + \frac{1}{2}I(\xh{+,n};\xh{-,n}|\yh{+,n},\yh{-,n},\mathbf{Q}_n). \label{sandwichMI}
\end{align}
Since the sequence $\{X'_n,Q'_n\}$ is admissible, it must also satisfy \eqref{smallestH}.  Therefore, in view of \eqref{sandwichMI} and the fact that the LHS of \eqref{smallestH} is finite, this implies that 
\begin{align}
\liminf_{n\to\infty} I(X_{1,n}+X_{2,n};X_{1,n}-X_{2,n}|Y_{1,n},Y_{2,n},\mathbf{Q}_n)=0.
\end{align}
In particular, for $P_{Q_*} \times P_{Q_*}$-a.e. $(q_1,q_2)$, 
\begin{align}
\liminf_{n\to\infty} I(X_{1,n}+X_{2,n};X_{1,n}-X_{2,n}|Y_{1,n},Y_{2,n},(\mathbf{Q}_n=q_1,q_2))=0.
\end{align}
This completes the proof since  Lemma \ref{lem:GaussianCharMI} guarantees that, for $P_{Q_*}$-a.e. $q$,  the random variable $X_*|\{Q_* = q\}$ is normal with variance not depending on $q$, and moreover we have already observed that $\EE[X_*^2] \leq 1$, so the variance of $X_*|\{Q_* = q\}$ is at most unity as claimed.  
\end{proof}

\subsection{Weak Semicontinuity  of $\mathsf{s}_{\lambda}(\cdot ,\snr)$ } \label{subsec:weakSemiContinuity}

This subsection is devoted to establishing the following semicontinuity property of $\mathsf{s}_{\lambda}(\cdot ,\snr)$, which was the second essential ingredient  needed for the proof of Theorem \ref{thm:explicitVsnr} (i.e., Claim II).  
\begin{lemma}\label{lem:weakContinuity}
If $X_n  \xrightarrow{\mathcal{D}} X_* \sim N(\mu,\sigma_X^2) $ and $\sup_n \EE[X_n^2] <\infty$, then
\begin{align}
\liminf_{n\to \infty} \mathsf{s}_{\lambda}(X_n,\snr)  \geq  \mathsf{s}_{\lambda}(X_*,\snr) .
\end{align}
\end{lemma}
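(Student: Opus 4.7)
The plan is to decompose the functional $\mathsf{s}_{\lambda}(X_n,\snr) = -h(X_n) + \lambda\, h(Y_n) + \inf_V \phi_n(V)$ with $Y_n = \sqrt{\snr}\,X_n + Z$ and $\phi_n(V) \triangleq I(Y_n;V) - \lambda I(X_n;V)$, handling each term separately in the limit. We may assume $h(X_n)$ is finite for all $n$ (else $\mathsf{s}_{\lambda}(X_n,\snr) = +\infty$ trivially), and the Gaussian nature of $X_*$ will glue the three pieces together through the classical EPI.

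For the two entropy terms, Lemma~\ref{lem:weakConvergenceGaussian} gives $h(Y_n)\to h(Y_*)$ directly. For $-h(X_n)$, the classical EPI yields $2^{2h(X_n)} \leq \snr^{-1}\bigl(2^{2h(Y_n)} - 2\pi e\bigr)$, which in the limit becomes $\limsup_n 2^{2h(X_n)} \leq 2\pi e\,\sigma_X^2 = 2^{2h(X_*)}$ since $h(Y_*) = \tfrac12\log\bigl(2\pi e(\snr\,\sigma_X^2 + 1)\bigr)$ by Gaussianity of $X_*$. Hence $\liminf_n (-h(X_n)) \geq -h(X_*)$.

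To handle the infimum, I would choose near-optimal $V_n$ with $\phi_n(V_n)\leq \inf_V \phi_n(V) + 1/n$. Taking $V$ constant shows $\inf \phi_n \leq 0$, so $\lambda\, I(X_n;V_n) \geq I(Y_n;V_n) - 1/n$, and combined with data processing $I(X_n;V_n)\leq I(X_n;Y_n) = h(Y_n)-h(Z)$ this makes $\{I(X_n;V_n)\}$ and $\{I(Y_n;V_n)\}$ uniformly bounded. By a measurable bijection I may assume $V_n \in [0,1]$, so $(X_n,Y_n,V_n)$ is tight; pass to a subsequence with weak limit $(X_*,Y_*,V_*')$, and verify that $X_* \to Y_* \to V_*'$ is preserved in the limit.

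The main obstacle is then showing $\liminf_n \phi_n(V_n) \geq \phi_*(V_*')$, which would finish the proof since $\inf_V \phi_*(V) \leq \phi_*(V_*')$. Rewriting $\phi_n(V_n) = (1-\lambda)\, I(Y_n;V_n) + \lambda\, I(Y_n;V_n\mid X_n)$ exposes the difficulty: lower semicontinuity of conditional mutual information handles the second term, but the coefficient $(1-\lambda)\leq 0$ on the first forces me to prove actual convergence $I(Y_n;V_n) \to I(Y_*;V_*')$, a form of upper semicontinuity not available in general. My plan is a truncation argument: split $I(Y_n;V_n)$ via conditioning on $\{|Y_n|\leq b\}$, apply Lemma~\ref{lem:MIbounds} to the limiting channel $P_{V_*'\mid Y_*}$ extracted from the subsequence to handle the compact part, and use the uniform bound on $I(Y_n;V_n)$ together with Lemma~\ref{lem:MItailBound} to make the tail contribution small uniformly in $n$ as $b\to\infty$. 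The hardest step within this argument is verifying that the compact-part mutual information for the original $V_n$ and for the limit channel agree asymptotically, which requires showing $P_{V_n\mid Y_n}$ converge in a sufficiently strong sense on $\{|Y_n|\leq b\}$. Combining the resulting bound with the two entropy bounds then yields $\liminf_n \mathsf{s}_{\lambda}(X_n,\snr) \geq \mathsf{s}_{\lambda}(X_*,\snr)$.
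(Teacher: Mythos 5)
Your treatment of the two entropy terms is fine (and your EPI argument for $\limsup_n h(X_n)\leq h(X_*)$ is a reasonable substitute for the paper's Gaussian-smoothing argument in Lemma~\ref{lem:weakContinuityG}). The gap lies in your plan for the infimum term, and it is a genuine one, not merely an unfinished detail. You propose to pick near-optimal $V_n$, extract a weak subsequential limit $(X_*,Y_*,V_*')$ of $(X_n,Y_n,V_n)$, and then apply Lemma~\ref{lem:MIbounds} to the ``limiting channel'' $P_{V_*'\mid Y_*}$. But Lemma~\ref{lem:MIbounds} is a statement about a \emph{single fixed} kernel $P_{V\mid Y}$ pushed through the sequence $Y_n$ and through $Y_*$: it compares $V_n$ defined via $P_{V\mid Y}:Y_n\mapsto V_n$ against $V_*$ defined via the same $P_{V\mid Y}:Y_*\mapsto V_*$. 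Your near-optimal $V_n$'s come from $n$-dependent kernels $P_{V_n\mid Y_n}$, and weak convergence of the joint laws $(Y_n,V_n)\Rightarrow(Y_*,V_*')$ gives essentially no control over the conditional laws $P_{V_n\mid Y_n=y}$; the disintegration map is notoriously discontinuous under weak convergence. So there is no sense in which $I(V_n;Y_n\mid |Y_n|\leq b)$ is close to the corresponding quantity computed with the channel $P_{V_*'\mid Y_*}$, and the key comparison you need simply does not follow.

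The paper's Lemma~\ref{lem:weakContinuityF} evades this entirely by never extracting a limiting channel. It fixes an arbitrary kernel $P_{V\mid Y}$, introduces the Gaussian perturbations $X_n^\delta=X_n+N_1$ and $Y_n^\delta=Y_n-N_3$ (so that $X_n\to X_n^\delta\to Y_n^\delta\to Y_n\to V_n$, and both $X_n^\delta$ and $Y_n^\delta$ have smooth densities, which you also omit and which is needed to invoke Lemma~\ref{lem:MIbounds} at all), and then runs a chain of inequalities whose endpoint is a lower bound on $I(Y_n;V_n)-\lambda I(X_n;V_n)$ that \emph{no longer depends on} $P_{V\mid Y}$: it only involves $\mathsf{F}_{\lambda_n,\sigma^2-2\delta}(X_*^\delta)$ and remainder terms controlled by $\epsilon_n$, $b$, and $\delta$. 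Because the bound is uniform in $P_{V\mid Y}$, one may take the infimum over $V$ on the left without disturbing the right, giving a lower bound on $\mathsf{F}_{\lambda,\sigma^2}(X_n)$ directly; letting $n\to\infty$, then $b\to\infty$, then $\delta\downarrow 0$ (using Lemma~\ref{lem:MItailBound} and continuity of $\mathsf{F}_{\lambda,\sigma^2}$ in its parameters at Gaussian $X_*$, Lemma~\ref{lem:Fproperties}) finishes. If you want to repair your argument, you should switch to this ``fix the channel, uniformize the bound, then take the infimum'' structure rather than trying to pass to a limiting conditional law.
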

Recall that $ \mathsf{s}_{\lambda}(X,\snr)$ is defined in terms of the Gaussian channel $Y = \sqrt{\snr } X + Z$.  However, for the purposes of the proof, it will be convenient to omit the $\snr$ scaling factor, and instead parametrize the channel in terms of the noise variance.  Toward this end,  let $Z\sim N(0,\sigma^2)$.  For $\lambda > 0 $ and a random variable $X\sim P_X$, independent of $Z$, define $Y=X+Z$ and the functionals
\begin{align}
\mathsf{F}_{\lambda,\sigma^2}(X) &= \inf_{V : X \to Y\to V} \Big(I(Y;V)-\lambda I(X;V)\Big)\\
\mathsf{G}_{\lambda,\sigma^2}(X) &=  -h(X) + \lambda h(Y).
\end{align}
Lemma \ref{lem:weakContinuity} is an immediate corollary of weak lower semicontinuity of $\mathsf{G}_{\lambda,\sigma^2}(X)$ and $\mathsf{F}_{\lambda,\sigma^2}(X)$ at Gaussian $X$.  These facts are established separately below in Lemmas \ref{lem:weakContinuityG} and \ref{lem:weakContinuityF}, respectively.  The former is straightforward, while the latter requires some effort.

\begin{lemma}\label{lem:weakContinuityG}
If $X_n  \xrightarrow{\mathcal{D}} X_* \sim N(\mu,\sigma_X^2) $ and $\sup_n \EE[X_n^2] <\infty$, then
\begin{align}
\liminf_{n\to \infty} \mathsf{G}_{\lambda,\sigma^2}(X_n) \geq \mathsf{G}_{\lambda,\sigma^2}(X_*).
\end{align}
\end{lemma}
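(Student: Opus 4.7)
The plan is to handle the two pieces of $\mathsf{G}_{\lambda,\sigma^2}(X_n) = -h(X_n) + \lambda h(Y_n)$ separately, where $Y_n = X_n + Z$. For the second piece, Lemma \ref{lem:weakConvergenceGaussian} directly yields $h(Y_n)\to h(Y_*)$ under $X_n\xrightarrow{\mathcal D}X_*$ together with the uniform second-moment bound, so $\lambda h(Y_n)\to \lambda h(Y_*)$ exactly. All of the content is therefore in establishing the upper semicontinuity
\begin{align*}
\limsup_{n\to\infty} h(X_n)\leq h(X_*),
\end{align*}
which is where Gaussianity of the limit will be crucial.

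The trick is to invert the classical EPI and exploit that a Gaussian saturates it. By Shannon's EPI applied to $X_n$ and $Z\sim N(0,\sigma^2)$,
\begin{align*}
2^{2h(Y_n)} \geq 2^{2h(X_n)} + 2\pi e\,\sigma^2,
\end{align*}
which rearranges to $h(X_n) \leq \frac{1}{2}\log\bigl(2^{2h(Y_n)} - 2\pi e\,\sigma^2\bigr)$. Taking $\limsup$ and inserting $h(Y_n)\to h(Y_*)$ from the first step gives
\begin{align*}
\limsup_{n\to\infty} h(X_n) \leq \frac{1}{2}\log\!\bigl(2^{2h(Y_*)} - 2\pi e\,\sigma^2\bigr).
\end{align*}
Since $X_*\sim N(\mu,\sigma_X^2)$, the limit $Y_*\sim N(\mu,\sigma_X^2+\sigma^2)$ saturates the EPI, so $2^{2h(Y_*)} - 2\pi e\,\sigma^2 = 2\pi e\,\sigma_X^2 = 2^{2h(X_*)}$, and the right-hand side equals $h(X_*)$ exactly. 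Combining the two pieces via $\liminf(a_n+b_n)\geq \liminf a_n + \lim b_n$ gives
\begin{align*}
\liminf_{n\to\infty}\mathsf{G}_{\lambda,\sigma^2}(X_n) \geq -h(X_*) + \lambda h(Y_*) = \mathsf{G}_{\lambda,\sigma^2}(X_*),
\end{align*}
completing the proof.

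The only degenerate cases to double-check are those in which $h(X_n)$ is ill-defined or equals $-\infty$: by convention this sends $-h(X_n)$ to $+\infty$ and the claim is trivial. The EPI also automatically prevents $h(X_n)=+\infty$, since $h(X_n)\leq h(Y_n)\leq \tfrac{1}{2}\log\!\bigl(2\pi e(\sup_n\EE[X_n^2]+\sigma^2)\bigr)<\infty$. I do not foresee a real obstacle: the EPI together with the continuity statement of Lemma \ref{lem:weakConvergenceGaussian} does all the heavy lifting, with Gaussianity of $X_*$ entering only at the equality case of the EPI.
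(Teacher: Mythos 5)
Your proof is correct, but it takes a genuinely different route from the paper's. The paper attacks the same piece, namely upper semicontinuity of $h(X_n)$, by perturbing: it writes $-h(X_n)\geq -h(X_n+N_\delta)$ for an auxiliary $N_\delta\sim N(0,\delta)$, invokes Lemma \ref{lem:weakConvergenceGaussian} to pass both $h(X_n+N_\delta)$ and $h(Y_n)$ through the limit, and then lets $\delta\downarrow 0$, using that $\delta\mapsto h(X_*+N_\delta)=\tfrac{1}{2}\log\bigl(2\pi e(\sigma_X^2+\delta)\bigr)$ is continuous at $0$ precisely because $X_*$ is Gaussian. Your argument avoids the auxiliary perturbation entirely by inverting the classical EPI, $h(X_n)\leq \tfrac{1}{2}\log\bigl(2^{2h(Y_n)}-2\pi e\,\sigma^2\bigr)$, and then using that a Gaussian limit saturates the EPI so that the bound collapses exactly to $h(X_*)$. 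In both arguments Gaussianity of $X_*$ is the crux; the paper uses it as a continuity-at-zero statement for the perturbed entropy, you use it as the equality case of the EPI. Your version is arguably cleaner here since it needs only the single continuity fact $h(Y_n)\to h(Y_*)$ and avoids the extra limit in $\delta$; the paper's $\delta$-perturbation device is more in keeping with the style of the adjacent Lemma \ref{lem:weakContinuityF}, where a similar perturbation is genuinely needed. One minor point worth keeping in mind: when you take $\limsup$ through $t\mapsto\tfrac{1}{2}\log\bigl(2^{2t}-2\pi e\,\sigma^2\bigr)$ you are using its continuity and monotonicity for $t>\tfrac{1}{2}\log(2\pi e\,\sigma^2)$; the EPI itself guarantees $h(Y_n)$ stays in this region, so the step is sound, but it deserves a word.
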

\begin{proof}
Fix $\delta>0$ and define $N_{\delta}\sim N(0,\delta)$, pairwise independent of $\{X_n\},X_*$. Observe that
\begin{align}
\mathsf{G}_{\lambda,\sigma^2}(X_n) &=  -h(X_n) + \lambda h(Y_n) \geq -h(X_n+N_{\delta}) + \lambda h(Y_n).
\end{align}  
By the third claim of Lemma \ref{lem:weakConvergenceGaussian}, we have $-h(X_n+N_{\delta}) + \lambda h(Y_n) \to -h(X_*+N_{\delta})+ \lambda h(Y_*)$ as $n\to \infty$.  Thus, 
\begin{align}
\liminf_{n\to\infty} \mathsf{G}_{\lambda,\sigma^2}(X_n) \geq -h(X_*+N_{\delta})+ \lambda h(Y_*).\label{deltaIneq}
\end{align}
Since $h(X_*+N_{\delta}) = \frac{1}{2}\log \left( 2 \pi e (\sigma_X^2 + \delta)\right)$ is continuous in $\delta$, we may take $\delta\downarrow 0$ to prove the claim.
\end{proof}

\begin{lemma}\label{lem:Fproperties}
 $\mathsf{F}_{\lambda,\sigma^2}(X)$ is continuous in $\lambda$.  Furthermore, if $X\sim N(\mu,\sigma_X^2)$, then 
\begin{align}
\mathsf{F}_{\lambda,\sigma^2}(X ) = \begin{cases}
\frac{1}{2}\left[\log\left((\lambda-1)\frac{\sigma_X^2}{\sigma^2}\right)-\lambda\log 
\left(\frac{\lambda-1}{\lambda}\left(1+ \frac{\sigma_X^2}{\sigma^2}\right)\right)\right] & \mbox{If $\lambda \geq 1 + \frac{\sigma^2}{\sigma_X^2}$}\\
0 & \mbox{If $0 \leq \lambda \leq 1 + \frac{\sigma^2}{\sigma_X^2}$.}
\end{cases} \label{explicitExpressionIB}
\end{align}
In particular, $\mathsf{F}_{\lambda,\sigma^2}(X)$ is continuous in the parameters $\sigma^2$, $\sigma_X^2$ and $\lambda$ for Gaussian $X$.
\end{lemma}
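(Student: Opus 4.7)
The plan is to decompose the lemma into three pieces and handle each in turn. For the continuity in $\lambda$, the first observation is that for every fixed $V$ satisfying $X \to Y \to V$, the map $\lambda \mapsto I(Y;V) - \lambda I(X;V)$ is affine. Hence $\mathsf{F}_{\lambda,\sigma^2}(X)$, being an infimum of affine functions of $\lambda$, is a concave function of $\lambda$. Concave functions are continuous on the interior of their effective domain, so it suffices to show that $\mathsf{F}_{\lambda,\sigma^2}(X) $ is finite for $\lambda$ in the interior of the parameter range (the relevant range being $\lambda \geq 1$, say, since for $\lambda \leq 1$ data processing forces the infimum to equal $0$, attained by $V$ constant). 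Finiteness follows from the bound
\begin{equation*}
I(Y;V) - \lambda I(X;V) = I(Y;V|X) - (\lambda-1) I(X;V) \geq -(\lambda-1)\, I(X;Y),
\end{equation*}
where I have used the Markov chain $X \to Y \to V$ to rewrite the first term and applied the data processing inequality $I(X;V) \leq I(X;Y)$. Provided $I(X;Y)$ is finite, this shows $\mathsf{F}_{\lambda,\sigma^2}(X) > -\infty$, giving continuity in $\lambda$.

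For the explicit formula when $X \sim N(\mu,\sigma_X^2)$, the idea is to reduce to Proposition~\ref{prop:InfoBottleExplicit} by a scaling. Because mutual information is translation-invariant and invariant under one-to-one transformations, I may assume $\mu = 0$ and then rescale by $1/\sigma$: set $X' = X/\sigma$ and $Y' = Y/\sigma = X' + Z'$, where $Z' = Z/\sigma \sim N(0,1)$. Then $X' \sim N(0, \sigma_X^2/\sigma^2)$, and $I(Y;V) = I(Y';V)$, $I(X;V) = I(X';V)$. Applying Proposition~\ref{prop:InfoBottleExplicit} with $\snr = 1$ and $\gamma = \sigma_X^2/\sigma^2$, the threshold condition $\gamma\,\snr \geq 1/(\lambda-1)$ becomes $\lambda \geq 1 + \sigma^2/\sigma_X^2$, and substituting into the formula gives exactly \eqref{explicitExpressionIB}.

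Finally, joint continuity in $(\sigma^2,\sigma_X^2,\lambda)$ for Gaussian $X$ follows by inspection of the two branches in \eqref{explicitExpressionIB}, modulo a check at the switching boundary $\lambda = 1 + \sigma^2/\sigma_X^2$. At this boundary, $(\lambda-1)\sigma_X^2/\sigma^2 = 1$ so the first logarithm vanishes, and a short calculation gives $\frac{\lambda-1}{\lambda}\left(1 + \sigma_X^2/\sigma^2\right) = 1$, so the second logarithm also vanishes. Thus both branches agree at the boundary and each is manifestly continuous in its domain, so the formula is continuous across the whole parameter region. The main (very mild) obstacle is just bookkeeping: verifying this boundary agreement and ensuring that the rescaling argument respects the infimum over $V$ (which it does, since scaling $Y$ by a positive constant induces a bijection on the set of valid $V$'s).
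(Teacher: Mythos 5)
Your proposal is correct and follows essentially the same route as the paper: concavity of the pointwise infimum of affine functions gives continuity in $\lambda$, and the explicit formula is obtained by reducing to Proposition~\ref{prop:InfoBottleExplicit} via the identification $\gamma\,\snr \leftarrow \sigma_X^2/\sigma^2$. You simply spell out details the paper leaves implicit (finiteness of the infimum, the scaling argument, the boundary agreement of the two branches), all of which check out.
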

\begin{proof}
The function $\mathsf{F}_{\lambda,\sigma^2}(X)$ is the pointwise infimum of linear functions in $\lambda$, and is therefore concave and continuous on the open interval $\lambda\in(0,\infty)$ for any distribution $P_X$.   The explicit expression \eqref{explicitExpressionIB} follows by identifying $\gamma\, \snr \leftarrow \frac{\sigma_X^2}{\sigma^2}$ in Proposition \ref{prop:InfoBottleExplicit}.
\end{proof}

\begin{lemma}\label{lem:weakContinuityF}
If $X_n  \xrightarrow{\mathcal{D}} X_* \sim N(\mu,\sigma_X^2) $ and $\sup_n \EE[X_n^2] <\infty$, then
\begin{align}
\liminf_{n\to \infty} \mathsf{F}_{\lambda,\sigma^2}(X_n) \geq \mathsf{F}_{\lambda,\sigma^2}(X_*).
\end{align}
\end{lemma}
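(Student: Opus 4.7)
The plan is to prove lower semicontinuity of $\mathsf{F}_{\lambda,\sigma^2}$ at a Gaussian point by a compactness-plus-limit argument on near-optimal channels. Fix $\epsilon > 0$, and for each $n$ choose a channel $P_{V_n|Y_n}$ with $I(Y_n;V_n) - \lambda I(X_n;V_n) \leq \mathsf{F}_{\lambda,\sigma^2}(X_n) + \epsilon$. Since mutual information is invariant under measurable bijections between standard Borel spaces, I may assume $V_n \in [0,1]$. The hypothesis $\sup_n \EE[X_n^2] < \infty$ makes $\{(X_n,Y_n)\}$ tight, and combined with the compact range of $V_n$ this yields tightness of $\{(X_n,Y_n,V_n)\}$ on $\mathbb{R}^2 \times [0,1]$; Prokhorov's theorem then produces a subsequence along which $(X_n,Y_n,V_n) \xrightarrow{\mathcal{D}} (X_*,Y_*,V^*)$. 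To transfer the Markov chain $X_* \to Y_* \to V^*$ to the limit, one uses Bayes' rule for the conditional distribution $P_{X_n|Y_n=y}$ together with Lemma \ref{lem:weakConvergenceGaussian} (uniform convergence of $f_{Y_n}$) to furnish the regularity needed to preserve conditional independence of $X_n$ and $V_n$ given $Y_n$ under weak convergence.

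Given this convergent subsequence, it suffices to establish $I(Y_*;V^*) - \lambda I(X_*;V^*) \leq \liminf_n \bigl( I(Y_n;V_n) - \lambda I(X_n;V_n) \bigr)$, since combined with the near-optimality of $V_n$ this yields $\mathsf{F}_{\lambda,\sigma^2}(X_*) \leq \liminf_n \mathsf{F}_{\lambda,\sigma^2}(X_n) + \epsilon$ and the claim follows by sending $\epsilon \to 0$. Using the Markov identity $I(Y;V) - \lambda I(X;V) = (1-\lambda)I(X;V) + I(Y;V|X)$, this reduces to two subproblems: lower semicontinuity of $I(Y_n;V_n|X_n)$ (routine by Fatou-type arguments), and \emph{genuine continuity} $I(X_n;V_n) \to I(X_*;V^*)$. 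The latter is essential rather than mere lower semicontinuity because for $\lambda > 1$ the coefficient $(1-\lambda)$ is negative; the case $\lambda = 1$ is vacuous since $\mathsf{F}_{1,\sigma^2}\equiv 0$ by the standard data processing inequality.

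The main obstacle is thus the continuity $I(X_n;V_n) \to I(X_*;V^*)$, since $X_n$ itself carries no a priori Gaussian smoothing. My plan here is to exploit that $Y_n = X_n + Z$ has uniformly convergent density (Lemma \ref{lem:weakConvergenceGaussian}) together with the Markov decomposition $I(X_n;V_n) = I(Y_n;V_n) - I(Y_n;V_n|X_n)$. For each term I would truncate to $\{|Y_n| \leq b\}$ (respectively $\{|X_n| \leq b\}$) and adapt the density-ratio estimates of Lemma \ref{lem:MIbounds}, whose proof relies only on $\|f_{Y_n} - f_{Y_*}\|_\infty \to 0$ and the compactness of the truncation interval, so it generalizes to cover variable channels $P_{V_n|Y_n}$ and the conditional-on-$X$ term (for which $Y_n|X_n=x$ has distribution $N(x,\sigma^2)$, making the channel's dependence on $x$ explicit and controllable). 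Tail contributions are handled uniformly in $n$ as $b \to \infty$ via Lemma \ref{lem:MItailBound} combined with the uniform second-moment bound. Assembling these estimates gives the two-sided convergence of mutual informations needed to close the argument.
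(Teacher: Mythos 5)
Your compactness-plus-limit strategy differs substantially from the paper's proof, which fixes a single channel $P_{V|Y}$ throughout, perturbs the noise decomposition, and only takes the infimum over channels at the very end. Unfortunately there is a gap in your plan exactly where the paper invents its key device. The crux you identify---proving $I(X_n;V_n)\to I(X_*;V^*)$, or equivalently controlling the conditional-on-$X$ term $I(Y_n;V_n|X_n)$---is indeed the hard part, but the proposed treatment does not go through. You suggest ``truncating to $\{|X_n|\leq b\}$ and adapting the density-ratio estimates of Lemma~\ref{lem:MIbounds},'' but the entire mechanism of Lemma~\ref{lem:MIbounds} is the uniform convergence $\|f_n - f_*\|_\infty \to 0$ of the density of the \emph{conditioning} variable, and that uniform convergence (Lemma~\ref{lem:weakConvergenceGaussian}) is available for $Y_n = X_n + Z$ only because $Y_n$ is a Gaussian perturbation of $X_n$. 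For $X_n$ itself there is no such regularity: $X_n$ may be discrete or singular, $P_{X_n}$ need not even admit a density, and $X_n\xrightarrow{\mathcal D}X_*$ gives no pointwise, let alone uniform, density convergence. That the forward channel $Y_n|\{X_n=x\}\sim N(x,\sigma^2)$ is explicit does not rescue the step, because $I(Y_n;V_n|X_n)=\int I(Y;V_n\,|\,X=x)\,dP_{X_n}(x)$ is an integral against an uncontrolled marginal. Moreover, since the coefficient $(1-\lambda)$ is negative for $\lambda>1$, your reduction requires \emph{upper} semicontinuity (indeed continuity) of $I(X_n;V_n)$; mutual information is only generically lower semicontinuous under weak convergence, so additional structure must be supplied, and your sketch does not supply it.

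This is precisely what motivates the paper's decomposition $Z = N_1+N_2+N_3$ with $N_1,N_3\sim N(0,\delta)$ and $N_2\sim N(0,\sigma^2-2\delta)$, and the passage via the data processing inequality from $I(Y_n;V_n)-\lambda I(X_n;V_n)$ to $I(Y_n^\delta;V_n)-\lambda I(X_n^\delta;V_n)$ along the chain $X_n\to X_n^\delta=X_n+N_1\to Y_n^\delta=Y_n-N_3\to Y_n\to V_n$. Both $X_n^\delta$ and $Y_n^\delta$ are Gaussian-smoothed, so Lemma~\ref{lem:MIbounds} applies to both; the dependence on the particular channel $P_{V|Y}$ is eliminated uniformly (the $\epsilon_n$ of Lemma~\ref{lem:MIbounds} do not depend on it), so the infimum over $V$ can be taken \emph{before} any limits; and the $\delta$-loss is removed at the end using continuity of $\mathsf F$ in the noise level and input variance at Gaussian inputs (Lemma~\ref{lem:Fproperties}). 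Without that $\delta$-smoothing, or an equivalent regularization of the $X_n$-marginal, I do not see how to close your estimate. As a side remark, the step you flag as needing ``Bayes' rule'' regularity---preservation of the Markov chain $X_*\to Y_*\to V^*$ under the weak limit of the triple---can actually be argued cleanly and is not the real obstacle: write $I(X_n;V_n|Y_n) = I(X_n;Y_n,V_n) - I(X_n;Y_n)$, note $I(X_n;Y_n)=h(Y_n)-h(Z)\to I(X_*;Y_*)$ by Lemma~\ref{lem:weakConvergenceGaussian}, and apply lower semicontinuity of relative entropy to the first term to conclude $0=\liminf_n I(X_n;V_n|Y_n)\geq I(X_*;V^*|Y_*)\geq 0$.
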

\begin{proof}
Fix an interval $B=[-b,b]$, a channel $P_{V|Y}$, and $\delta$ satisfying $0 < \delta < \sigma^2/2$.  Recalling the definition of $Z\sim N(0,\sigma^2)$, decompose $Z = N_1 + N_2 + N_3$, where $N_1\sim N(0,\delta)$, $N_2\sim N(0,\sigma^2-2\delta)$ and $N_3\sim N(0,\delta)$ are mutually independent.  Define $X_n^{\delta} = X_n + N_1$ and $Y_n^{\delta} = Y_n - N_3 = X_n + N_1+N_2$.  Note that we have  $X_n \to X_n^{\delta}\to Y_n^{\delta}\to Y_n\to V_n$, where $V_n$ is defined by the stochastic transformation $P_{V|Y}: Y_n \mapsto V_n$.  Using the notation of  Lemma \ref{lem:weakConvergenceGaussian}, we also have $X_* \to X_*^{\delta}\to Y_*^{\delta}\to Y_*\to V_*$, where $Y_*= X_* +Z$, $X_*^{\delta} = X_*+N_1$, $Y_*^{\delta} = Y_*-N_3$ and $V_*$ is defined via  $P_{V|Y}: Y_* \mapsto V_*$.  With these definitions in hand, we may apply Lemma \ref{lem:MIbounds} to the processes $\{X^{\delta}_n\},\{Y^{\delta}_n\}$   to conclude the existence of a sequence $\epsilon_n\to 0$, not depending on $P_{V|Y}$, that satisfies
\begin{align}
I(V_n; X^{\delta}_n|X^{\delta}_n \in B) &\leq (1+\epsilon_n)^2  I(V_*;X^{\delta}_*|X^{\delta}_* \in B)  -  (1+\epsilon_n)^2 \log (1-\epsilon_n )^2\label{MIineq1a}\\
I(V_n; Y^{\delta}_n|Y^{\delta}_n \in B) &\geq (1-\epsilon_n)^2  I(V_*;Y^{\delta}_*|Y^{\delta}_* \in B)  -  (1-\epsilon_n)^2 \log (1+\epsilon_n )^2\label{MIineq2a}\\
\PP(X^{\delta}_n \in B) &\leq (1+\epsilon_n) \PP(X^{\delta}_* \in B)\label{eq:PXB}\\
\PP(Y^{\delta}_n \in B) &\geq (1-\epsilon_n) \PP(Y^{\delta}_* \in B).\label{eq:PYB}
\end{align}
Now, we have the following sequence of inequalities
\begin{align}
&I(Y_n;V_n) - \lambda I(X_n;V_n)  \notag\\
&\geq I(Y^{\delta}_n;V_n) - \lambda I(X^{\delta}_n;V_n) \label{eq:perturbDPI}\\
&= I(Y^{\delta}_n,\mathds{1}_{\{Y^{\delta}_n \in B \}};V_n) - \lambda I(X^{\delta}_n,\mathds{1}_{\{X^{\delta}_n \in B \}};V_n) \label{eq:funOf}\\
&=\PP(Y^{\delta}_n \in B) I(Y^{\delta}_n;V_n| Y^{\delta}_n \in B  ) + \PP(Y^{\delta}_n \notin B) I(Y^{\delta}_n;V_n| Y^{\delta}_n \notin B  )  + I( \mathds{1}_{\{Y^{\delta}_n \in B \}};V_n)\label{eq:chainRule}\\
&~~~-\lambda \Big( \PP(X^{\delta}_n \in B) I(X^{\delta}_n;V_n| X^{\delta}_n \in B  ) + \PP(X^{\delta}_n \notin B) I(X^{\delta}_n;V_n| X^{\delta}_n \notin B  )  + I( \mathds{1}_{\{X^{\delta}_n \in B \}};V_n) \Big)\notag\\
&\geq \PP(Y^{\delta}_n \in B) I(Y^{\delta}_n;V_n| Y^{\delta}_n \in B  )  \label{eq:nonNegMI_entrUB}\\
&~~~-\lambda \Big( \PP(X^{\delta}_n \in B) I(X^{\delta}_n;V_n| X^{\delta}_n \in B  ) + \PP(X^{\delta}_n \notin B) I(X^{\delta}_n;Y_n| X^{\delta}_n \notin B  )  + H( \mathds{1}_{\{X^{\delta}_n \in B \}}) \Big)\notag\\
&\geq  \PP(Y^{\delta}_n \in B)(1-\epsilon_n)^2  I(Y^{\delta}_*;V_*|Y^{\delta}_* \in B)  -  \PP(Y^{\delta}_n \in B)(1-\epsilon_n)^2 \log (1+\epsilon_n )^2\label{eq:lemClose}\\
&~~~-\lambda \Big( \PP(X^{\delta}_n \in B) (1+\epsilon_n)^2  I(X^{\delta}_*;V_*|X^{\delta}_* \in B)  -  \PP(X^{\delta}_n \in B) (1+\epsilon_n)^2 \log (1-\epsilon_n )^2      \Big) \notag\\
&~~~-\lambda \Big(   P(X^{\delta}_n \notin B) I(X^{\delta}_n;Y_n| X^{\delta}_n \notin B  )  + H( \mathds{1}_{\{X^{\delta}_n \in B \}} ) \Big)\notag\\
&\geq \frac{\PP(Y^{\delta}_n \in B)}{\PP(Y^{\delta}_* \in B)}(1-\epsilon_n)^2  \Big(I(Y^{\delta}_*;V_*) -  \PP(Y^{\delta}_* \notin B)  I(Y^{\delta}_*;V_*|Y^{\delta}_* \notin B)  - I( \mathds{1}_{\{Y^{\delta}_* \in B \}};V_*) \Big)  \label{eq:MIchain2}\\
&~~~-\lambda (1+\epsilon_n)^2 \frac{\PP(X^{\delta}_n \in B)}{\PP(X^{\delta}_* \in B)}    I(X^{\delta}_*;V_*) \notag \\
&~~~-  \PP(Y^{\delta}_n \in B)(1-\epsilon_n)^2 \log (1+\epsilon_n )^2 + \lambda  \PP(X^{\delta}_n \in B) (1+\epsilon_n)^2 \log (1-\epsilon_n )^2  \notag\\
&~~~-\lambda \Big(   \PP(X^{\delta}_n \notin B) I(X^{\delta}_n;Y_n| X^{\delta}_n \notin B  )  + H( \mathds{1}_{\{X^{\delta}_n \in B \}}) \Big) \notag \\
&\geq (1-\epsilon_n)^3  I(Y^{\delta}_*;V_*)  -\lambda  (1+\epsilon_n)^3  I(X^{\delta}_*;V_*)  \label{probUppBnd}\\
&~~~-(1-\epsilon_n)^3 \Big( \PP(Y^{\delta}_* \notin B)  I(Y^{\delta}_*;Y_*|Y^{\delta}_* \notin B) + H( \mathds{1}_{\{Y^{\delta}_* \in B \}}) \Big)  \notag   \\
&~~~-  \PP(Y^{\delta}_n \in B)(1-\epsilon_n)^2 \log (1+\epsilon_n )^2 + \lambda  \PP(X^{\delta}_n \in B) (1+\epsilon_n)^2 \log (1-\epsilon_n )^2 \notag \\
&~~~-\lambda \Big(   \PP(X^{\delta}_n \notin B) I(X^{\delta}_n;Y_n| X^{\delta}_n \notin B  )  + H( \mathds{1}_{\{X^{\delta}_n \in B \}}) \Big) \notag \\
&\geq (1-\epsilon_n)^3  \mathsf{F}_{\lambda_n,(\sigma^2-2\delta) }(X^{\delta}_*) \label{eq:finalF}\\
&~~~-(1-\epsilon_n)^3 \Big( \PP(Y^{\delta}_* \notin B)  I(Y^{\delta}_*;Y_*|Y^{\delta}_* \notin B)  + H( \mathds{1}_{\{Y^{\delta}_* \in B \} }) \Big)  \notag   \\
&~~~-  \PP(Y^{\delta}_n \in B)(1-\epsilon_n)^2 \log (1+\epsilon_n )^2 + \lambda  \PP(X^{\delta}_n \in B) (1+\epsilon_n)^2 \log (1-\epsilon_n )^2 \notag \\
&~~~-\lambda \Big(   \PP(X^{\delta}_n \notin B) I(X^{\delta}_n;Y_n| X^{\delta}_n \notin B  )  + H( \mathds{1}_{\{X^{\delta}_n \in B \}}) \Big),\notag
\end{align}
where $\lambda_n := \lambda \left( \frac{1+\epsilon_n}{1-\epsilon_n}\right)^3$.   The above steps are justified as follows:
\begin{itemize}
\item \eqref{eq:perturbDPI} follows by the data processing inequality.
\item \eqref{eq:funOf} follows since $\mathds{1}_{\{Y^{\delta}_n \in B \}}$ and $\mathds{1}_{\{X^{\delta}_n \in B \}}$ are functions of $Y^{\delta}_n$ and $X_n^{\delta}$, respectively.
\item \eqref{eq:chainRule} follows from the chain rule for mutual information.
\item \eqref{eq:nonNegMI_entrUB} follows from non-negativity of mutual information, the fact that $I( \mathds{1}_{\{X^{\delta}_n \in B \}};V_n) \leq H( \mathds{1}_{\{X^{\delta}_n \in B \}}) $, and the data processing inequality which implies $I(X^{\delta}_n;V_n| X^{\delta}_n \notin B  ) \leq I(X^{\delta}_n;Y_n| X^{\delta}_n \notin B  ) $.
\item \eqref{eq:lemClose} follows from \eqref{MIineq1a} and \eqref{MIineq2a}.
\item \eqref{eq:MIchain2} follows from the chain rule for mutual information, which implies 
\begin{align}
 I(Y^{\delta}_*;V_*|Y^{\delta}_* \in B)  =\frac{1}{ \PP(Y^{\delta}_* \in B) } \Big(I(Y^{\delta}_*;V_*) -  \PP(Y^{\delta}_* \notin B)  I(Y^{\delta}_*;V_*|Y^{\delta}_* \notin B)  - I( \mathds{1}_{\{Y^{\delta}_* \in B \}};V_*) \Big)
\end{align}
and, combined with non-negativity of mutual information, 
\begin{align}
 I(X^{\delta}_*;V_*|X^{\delta}_* \in B)  \leq \frac{1}{\PP(X^{\delta}_* \in B)}   I(X^{\delta}_*;V_*).
\end{align}

\item \eqref{probUppBnd} follows from \eqref{eq:PXB}, \eqref{eq:PYB}, the fact that $I( \mathds{1}_{\{Y^{\delta}_* \in B \}};V_n) \leq H( \mathds{1}_{\{Y^{\delta}_* \in B \}})$, and the data processing inequality which implies $I(Y^{\delta}_*;V_*| Y^{\delta}_* \notin B  ) \leq I(Y^{\delta}_*;Y_*| Y^{\delta}_* \notin B  )$.  %

\item \eqref{eq:finalF} follows from the definition of $\mathsf{F}_{\lambda_n,(\sigma^2-2\delta) }(X^{\delta}_*)$ by taking the infimum over $V_*$ satisfying $X_*^{\delta}\to Y_*^{\delta}\to V_*$.  
\end{itemize}

Summarizing above, we have shown
\begin{align}
I(Y_n;V_n) - \lambda I(X_n;V_n)  &\geq (1-\epsilon_n)^3  \mathsf{F}_{\lambda_n,(\sigma^2-2\delta) }(X^{\delta}_*) \label{eq:finalF2}\\
&~~~-(1-\epsilon_n)^3 \Big( \PP(Y^{\delta}_* \notin B)  I(Y^{\delta}_*;Y_*|Y^{\delta}_* \notin B)  + H( \mathds{1}_{\{Y^{\delta}_* \in B \} }) \Big)  \notag   \\
&~~~-  \PP(Y^{\delta}_n \in B)(1-\epsilon_n)^2 \log (1+\epsilon_n )^2 + \lambda  \PP(X^{\delta}_n \in B) (1+\epsilon_n)^2 \log (1-\epsilon_n )^2 \notag \\
&~~~-\lambda \Big(   \PP(X^{\delta}_n \notin B) I(X^{\delta}_n;Y_n| X^{\delta}_n \notin B  )  + H( \mathds{1}_{\{X^{\delta}_n \in B \}}) \Big),\notag
\end{align}
Note that the RHS of \eqref{eq:finalF2} does not depend on $V_n$ (i.e., $P_{V|Y}$).  Thus, taking the infimum over $V_n$ satisfying $X_n \to Y_n\to V_n$   and then letting $n\to \infty$, we arrive at 
\begin{align}
 \liminf_{n\to\infty} \mathsf{F}_{\lambda ,\sigma^2}(X_n) &\geq \mathsf{F}_{\lambda,(\sigma^2-2\delta) }(X^{\delta}_*)  - \Big( \PP(Y^{\delta}_* \notin B)  I(Y^{\delta}_*;Y_*|Y^{\delta}_* \notin B)  + H( \mathds{1}_{\{Y^{\delta}_* \in B \}} ) \Big)  \label{finalF3}   \\
&~~~-\lambda \Big(   \PP(X^{\delta}_* \notin B) I(X^{\delta}_*;Y_*| X^{\delta}_* \notin B  )  + H( \mathds{1}_{\{X^{\delta}_* \in B \}}) \Big),\notag
\end{align}
which follows due to $\epsilon_n\to 0$ and the following:
\begin{itemize}
\item $\mathsf{F}_{\lambda_n,(\sigma^2-2\delta) }(X^{\delta}_*) \to  \mathsf{F}_{\lambda,(\sigma^2-2\delta) }(X^{\delta}_*)$ by continuity of $F_{\lambda,\sigma^2}(X)$ in $\lambda$ (Lemma \ref{lem:Fproperties}).
\item  $\PP(X^{\delta}_n \notin B) \to \PP(X^{\delta}_* \notin B)$ since $X^{\delta}_n\xrightarrow{\mathcal{D}} X^{\delta}_*$ by the first claim of Lemma \ref{lem:weakConvergenceGaussian}.  By the same token, $ H( \mathds{1}_{\{X^{\delta}_n \in B \}})\to  H( \mathds{1}_{\{X^{\delta}_* \in B \}})$ by continuity of the binary entropy function.
\item $I(X^{\delta}_n;Y_n| X^{\delta}_n \notin B  ) \to I(X^{\delta}_*;Y_*| X^{\delta}_* \notin B  ) $ by the third claim of Lemma \ref{lem:weakConvergenceGaussian} since $\limsup_n \EE[\left(X_n^{\delta}\right)^2|X^{\delta}_n \notin B ]<\infty$ due to the fact that $\sup_n \EE[X_n^2]<\infty$ and $\PP(X^{\delta}_n \notin B) \to \PP(X^{\delta}_* \notin B)$, a  positive constant.
\end{itemize}
As we take $b\to \infty$, continuity of the binary entropy function and Lemma \ref{lem:MItailBound} together imply the latter two terms in the RHS of \eqref{finalF3} vanish, yielding the  inequality
\begin{align}
 \liminf_{n\to\infty} \mathsf{F}_{\lambda }(X_n) &\geq \mathsf{F}_{\lambda,(\sigma^2-2\delta) }(X^{\delta}_*). \label{neededSemi}
\end{align}
Since $\delta$ was arbitrary and $\mathsf{F}_{\lambda,(\sigma^2-2\delta) }(X^{\delta}_*)$ is continuous in $\delta$ by Lemma \ref{lem:Fproperties}, the proof is complete by letting $\delta\downarrow 0$.
\end{proof}
\begin{remark}
Given the tedious chain of inequalities in the proof of Lemma \ref{lem:weakContinuityF}, it is easy to lose sight of the overall picture.  The crucial idea is that perturbing $X_n\to X_n^{\delta}$ and $Y_n\to Y_n^{\delta}$ allows us to eventually eliminate dependence on the channel $P_{V|Y}$ in the RHS of \eqref{eq:finalF}.  Resisting the temptation to take   limits $n\to \infty$ or $b\to \infty$ until after dependence on any particular channel $P_{V|Y}$ is eliminated (i.e., inequality \eqref{eq:finalF2}) is also essential.
\end{remark}

We note that the hypothesis that $X_*\sim N(0,\sigma^2)$ was not needed in the proof of Lemma \ref{lem:weakContinuityF} until the very last step.  Indeed, we may actually conclude  that the following general result holds, which may be of independent interest:
\begin{proposition}
Suppose $X_n  \xrightarrow{\mathcal{D}} X_*$ and $\sup_n \EE[X_n^2] <\infty$, then for all $0 < \delta < \delta' < \sigma^2$, the following holds:
\begin{align}
\liminf_{n\to \infty} \mathsf{F}_{\lambda,\sigma^2}(X_n) \geq \mathsf{F}_{\lambda,(\sigma^2-\delta') }(X_*+N_{\delta})  %
\end{align}
where $N_{\delta}\sim N(0,\delta)$ is independent of $X_*$.
\end{proposition}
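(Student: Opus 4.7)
The plan is to essentially re-run the proof of Lemma \ref{lem:weakContinuityF}, observing that the hypothesis $X_* \sim N(\mu,\sigma_X^2)$ was only invoked in the very last step (taking $\delta \downarrow 0$ and using continuity of $h(X_* + N_\delta)$ in $\delta$ for Gaussian $X_*$). Everything prior to that step — the noise decomposition, the application of Lemma \ref{lem:MIbounds}, and the long chain \eqref{eq:perturbDPI}--\eqref{eq:finalF2} — is completely insensitive to the distribution of $X_*$, provided $X_n \xrightarrow{\mathcal{D}} X_*$ with uniformly bounded second moment.

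First, I would introduce the noise decomposition $Z = N_1 + N_2 + N_3$, where $N_1 \sim N(0,\delta)$, $N_2 \sim N(0,\sigma^2-\delta')$ and $N_3 \sim N(0,\delta'-\delta)$ are mutually independent. The assumption $0 < \delta < \delta' < \sigma^2$ ensures that all three variances are strictly positive. Then I would set $X_n^{\delta} := X_n + N_1$ and $Y_n^{\delta} := X_n^{\delta} + N_2$, so that $Y_n = Y_n^{\delta} + N_3$ and the Markov chain $X_n \to X_n^{\delta} \to Y_n^{\delta} \to Y_n \to V_n$ holds for any $V_n$ obtained from $Y_n$ through a stochastic kernel $P_{V|Y}$. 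The analogous definitions are made on the limiting side, giving $X_*^{\delta} := X_* + N_1$ and $Y_*^{\delta} := X_*^{\delta} + N_2$. Because $N_1$ and $N_1+N_2$ are both non-degenerate Gaussians, Lemma \ref{lem:weakConvergenceGaussian} ensures $X_n^{\delta} \xrightarrow{\mathcal{D}} X_*^{\delta}$ and $Y_n^{\delta} \xrightarrow{\mathcal{D}} Y_*^{\delta}$, with uniform convergence of their densities; this is exactly what is needed to invoke Lemma \ref{lem:MIbounds}.

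Next, I would copy the chain of inequalities \eqref{eq:perturbDPI}--\eqref{eq:finalF2} verbatim, replacing $\sigma^2 - 2\delta$ by $\sigma^2 - \delta'$ wherever it appears (this is the variance of the channel $X_*^{\delta} \to Y_*^{\delta}$ in the new decomposition). The end of that chain yields, for each fixed choice of $b$, $\delta$, $\delta'$, and $P_{V|Y}$,
\begin{align*}
I(Y_n;V_n) - \lambda I(X_n;V_n) &\geq (1-\epsilon_n)^3 \mathsf{F}_{\lambda_n,(\sigma^2-\delta')}(X_*^{\delta}) - R_n(b,\delta,\delta'),
\end{align*}
where $\lambda_n = \lambda\bigl((1+\epsilon_n)/(1-\epsilon_n)\bigr)^3 \to \lambda$ and the remainder $R_n$ collects the tail, entropy and $\log(1\pm\epsilon_n)$ terms. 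Taking the infimum over $P_{V|Y}$, then $\liminf_{n\to\infty}$, and using continuity of $\mathsf{F}_{\lambda,\sigma^2-\delta'}(X_*^{\delta})$ in $\lambda$ (Lemma \ref{lem:Fproperties}), the first three claims of Lemma \ref{lem:weakConvergenceGaussian}, continuity of the binary entropy function, and Lemma \ref{lem:MItailBound}, one obtains exactly as in \eqref{finalF3}--\eqref{neededSemi}
\begin{align*}
\liminf_{n\to\infty} \mathsf{F}_{\lambda,\sigma^2}(X_n) \geq \mathsf{F}_{\lambda,(\sigma^2-\delta')}(X_*^{\delta}) = \mathsf{F}_{\lambda,(\sigma^2-\delta')}(X_* + N_\delta),
\end{align*}
where in the last step we send $b \to \infty$ to kill the tail terms. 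Unlike Lemma \ref{lem:weakContinuityF}, we stop here and do not take $\delta \downarrow 0$, so the Gaussian assumption on $X_*$ is never used.

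There is no real obstacle — the main point is a bookkeeping observation about which step of the previous proof used Gaussianity of $X_*$. The only small care required is in choosing the noise split so that the middle component $N_2$ has variance $\sigma^2 - \delta'$ (rather than $\sigma^2 - 2\delta$ as before); the strict inequalities $0 < \delta < \delta' < \sigma^2$ in the hypothesis are precisely what make this split, and the application of Lemma \ref{lem:MIbounds} to the Gaussian-smoothed sequences $\{X_n^{\delta}\}$ and $\{Y_n^{\delta}\}$, valid.
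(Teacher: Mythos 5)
Your proposal is correct and follows essentially the same route as the paper's own (very terse) proof: re-decompose $Z = N_1 + N_2 + N_3$ with $N_1 \sim N(0,\delta)$, $N_2 \sim N(0,\sigma^2-\delta')$, $N_3 \sim N(0,\delta'-\delta)$, rerun the chain \eqref{eq:perturbDPI}--\eqref{eq:finalF2} of Lemma \ref{lem:weakContinuityF} (which never uses Gaussianity of $X_*$), and stop at \eqref{neededSemi} rather than letting $\delta \downarrow 0$. You correctly identify that the strict inequalities $0<\delta<\delta'<\sigma^2$ are exactly what make the three noise components non-degenerate where needed, which is the only adaptation the paper's remark alludes to.
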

\begin{proof}
The claim  follows from the proof of Lemma \ref{lem:weakContinuityF}, but stopping at \eqref{neededSemi} and not particularizing to Gaussian $X_*$. The replacement of $2\delta$ by $\delta'$ is straightforward by decomposing $Z$ differently in the first step of the proof. 
\end{proof}
\begin{remark}
It is  possible to establish weak upper semicontinuity of $\mathsf{F}_{\lambda,\sigma^2}(\cdot)$, but that is not needed for our purposes.
\end{remark}

\bibliographystyle{unsrt}
\bibliography{vectorGaussian}

\end{document}